\documentclass[a4paper,USenglish,cleveref, autoref, thm-restate,numberwithinsect]{lipics-v2021}
\nolinenumbers
\hideLIPIcs
\usepackage{apxproof}

\usepackage{todonotes}

\newcommand{\var}{\mathrm{var}}
\newcommand{\atoms}{\mathrm{atoms}}
\newcommand{\arity}{\mathrm{arity}}
\newcommand{\dom}{\mathrm{dom}}
\newcommand{\eps}{\varepsilon}
\newcommand{\qstar}{Q^\star}
\newcommand{\col}{\mathcal{L}}
\newcommand{\dep}{\mathop{dep}}

\newcommand{\polylog}{\mathrm{polylog}}
\newcommand{\cimply}{\rightarrow^*}

\newcommand{\pb}[2]{P_{#1,#2}}
\newcommand{\cn}[2]{C_{#1,#2}}
\newcommand{\fec}[1]{\rho^*_{#1}}
\newcommand{\width}[1]{#1\text{-width}}

\makeatletter
\def\namedlabel#1#2{\begingroup
    #2%
    \def\@currentlabel{#2}%
    \phantomsection\label{#1}\endgroup
}

\newtheoremrep{theorem}{Theorem}[section]
\newtheoremrep{corollary}[theorem]{Corollary}
\newtheoremrep{proposition}[theorem]{Proposition}
\newtheoremrep{definition}[theorem]{Definition}
\newtheoremrep{lemma}[theorem]{Lemma}
\newtheoremrep{observation}[theorem]{Observation}

\title{Lexicographic Direct Access with Functional Dependencies}

\author{Florent Capelli}{Univ.~Artois, CNRS, Centre de Recherche en Informatique de Lens (CRIL), Lens, France}{}{}{}

\author{Nofar Carmeli}{Inria, LIRMM, University of Montpellier, CNRS, France}{Nofar.Carmeli@inria.fr}{https://orcid.org/0000-0003-0673-5510}{}

\author{Stefan Mengel}{Univ.~Artois, CNRS, Centre de Recherche en Informatique de Lens (CRIL), Lens, France}{}{}
{}

\Copyright{F.~Capelli, N.~Carmeli, S.~Mengel}
\authorrunning{F.~Capelli, N.~Carmeli, S.~Mengel}
\acknowledgements{The authors would like to thank Mahmoud Abo Khamis for helpful clarifications on the PANDA algorithm.}

\keywords{join queries, direct access, functional dependencies}
\ccsdesc{Theory of computation~Database theory}
\begin{document}
\maketitle

\begin{abstract}
We study the complexity of lexicographic direct access to join query answers over databases that satisfy functional dependencies (FDs). More precisely, we give fine-grained lower and upper bounds on the preprocessing time required to achieve polylogarithmic access time.
We start by considering the simple approach of a reordered extension, which first incorporates the FDs in the query and order, and then ignores the FDs during evaluation.
We show that this simple approach gives tight bounds for unary FDs but fails for general FDs.
We then consider a second approach, inspired by size bounds for query answers using information theory, that takes the FDs into account while materializing the bags of a decomposition tailored to the direct access task at hand. Interestingly, we show that the same reordering is also useful while constructing the decomposition in this second approach for reducing the complexity.
While the obtained upper and lower bounds are generally not tight, we show that they yield a complete characterization of lexicographic direct access with linear preprocessing time.
All lower bounds in this paper apply only to queries without self-joins and rely on the Zero-Clique Conjecture.
\end{abstract}


\section{Introduction}

Recent years have seen steady progress in the understanding of the complexity of conjunctive query evaluation. On the one hand, worst-case optimal evaluation algorithms~\cite{NgoPRR18,Veldhuizen14,Ngo2018,CapelliIS25} have been a breakthrough and have recently been combined with fast matrix multiplication to give even better runtimes~\cite{DeepHK20,KhamisHS25,Hu25}. On the other hand, techniques and conjectures from fine-grained complexity theory have been used to show lower bounds for many different query answering settings. In particular, this has lead to a good understanding of the classes of queries that can be solved in linear time for many query answering settings~\cite{BerkholzGS20,Mengel25}.

However, almost all of the above lower bound results are only for the case without any constraints in the input database, while on the algorithmic side it is known that different constraints on the data can lead to huge speedups~\cite{KhamisNS25,DeedsM25}.
Functional Dependencies (FDs)~\cite[Part~C]{AbiteboulHV95} form a simple class of database constraints of great practical importance, especially since they are crucial for database normalization~\cite{Kent83}. 
This combination of simplicity and importance has led to a significant amount of work on them in database theory~\cite{GottlobLVV12,GogaczT17,Khamis0S16,carmeli2020enumeration,abo2016computing}, and yet, even for FDs, very few lower bounds are known~\cite{carmeli2020enumeration, carmeli2023linearp}, and linear-time characterizations are largely lacking.

In this paper, we make a step towards a better understanding of the complexity of join query answering under constraints. 
Even though we focus on join queries, some of our results also apply to conjunctive queries (including projection).
We study lexicographic direct access to query results under FDs. The aim of direct access is to, after some preprocessing, allow access to the query answers as if they were materialized into a sorted array.
Often it is possible to do so without actually computing the query result, using a preprocessing that is far more efficient~\cite{BaganDGO08,bb:thesis,carmeli2023linearp,bringmann2024stars}.
We remark that direct access can be used to efficiently compute quantiles and histograms and also to support sampling without repetitions~\cite{carmeli2022random}.
We focus on direct access in which the underlying query order is lexicographical, according to a variable order defined by the user together with the query.
Without FDs, we know the optimal preprocessing required to get access in logarithmic time~\cite{bringmann2024stars}.
The hardness part of this result, like all hardness results in the RAM computation model, is a conditional lower bound, relying on the assumed hardness of a well-studied problem. In this case, just like the lower bounds we show in this paper, it assumes the Zero-Clique Conjecture.
We here study how this complexity changes in the presence of FDs.

For lexicographic direct access over databases satisfying a given set of FDs, Tziavelis et al.~\cite{carmeli2023linearp} studied the case that all FDs are unary (i.e., one variable implies another), and showed a dichotomy that identifies which combinations of conjunctive query, variable order, and FD set admit an algorithm with quasilinear preprocessing and logarithmic access time.
The hardness part of this dichotomy, like most lower bounds in the field and the lower bounds shown in this paper, is restricted to queries without self-joins; that is, every query atom uses a distinct relation.
This dichotomy is obtained with the approach of reordered extensions.
The idea of this approach is to rewrite the query to incorporate the extra information that can be obtained from the FDs in order to end up in an equivalent setting without FDs. Then, known results for the general setting can be applied to derive the complexity of the setting with FDs. For this to work also for the lower bounds, the specified variable order is reordered to take the FDs into account. We show that for unary FDs, this simple approach is surprisingly powerful, beyond the linear preprocessing dichotomy: it allows to completely understand the complexity of lexicographic direct access under unary FDs, identifying the optimal preprocessing time required to obtain logarithmic access time.
We generalize the extension-based approach beyond unary FDs. However, we show that, in this general case, the obtained algorithm is not optimal. Interestingly, this gap occurs even in the restricted case that the underlying query is acyclic, in contrast to the related problem of enumeration where the extension-based approach is known to be tight for acyclic queries~\cite{carmeli2020enumeration}.

Following the failure of reordered extensions for direct access, we turn to techniques based on information theory, and in particular, the celebrated PANDA algorithm for join evaluation~\cite{KhamisNS25}.
We combine it with direct access techniques~\cite{bringmann2024stars,carmeli2023linearp} by taking the FDs into account when inspecting the complexity of materializing the bags of a so-called disruption-free-decomposition. 
Interestingly, we show that the same reordering from the extension approach is also useful while constructing the decomposition in this second approach for reducing the complexity.
To complement the algorithm with lower bounds, we use a coloring technique that was previously used to bound the size of query results~\cite{GottlobLVV12}. Unfortunately, our complexity upper and lower bounds are generally not tight, which stems from the fact that the corresponding bounds on join sizes based on the so-called polymatroid bound are generally also not tight. Thus, similarly to the case of join size computation and worst-case optimal join algorithms, fully resolving the complexity of direct access under FDs would likely require fundamental advances in information theory.
We do show that our algorithms and lower bounds form a dichotomy characterizing the combinations of join query, variable order, and FD set that admit lexicographic direct access with linear preprocessing and logarithmic access time (for general FDs). 

Finally, we compare the two approaches presented in this paper. We show that the algorithm based on information-theory techniques is always at least as good as the extension-based algorithm and sometimes better, but that in the case of unary FDs, both algorithms yield the same complexity.

We give preliminary notation and recall useful known results in \Cref{sec:prelim}.
We study the extension-based approach in \Cref{sec:reorder}.
For our second approach, the algorithm is devised in \Cref{sct:algorithm}, the lower bound in \Cref{sec:lower}, and the linear preprocessing dichotomy in \Cref{sec:sizepreserving}.
Finally, we compare the two approaches in \Cref{sec:connection}, and we conclude in \Cref{sec:conclusion}.

\section{Preliminaries}\label{sec:prelim}

Given a natural number $n$, we set $[n]:=\{1,\ldots,n\}$.
Given a tuple $\vec{a}=(a_1,\ldots,a_n)$, we set $\vec{a}[i]:=a_i$.
Given an ordering $\pi$, we write $a\le_\pi b$ to say that $a$ appears before $b$ in $\pi$.

\textbf{Databases.}
A \emph{schema} is a set of relational symbols.
Each relational symbol $R$ is associated with a natural number $\arity(R)$, called the arity of $R$.
A \emph{database} $D$ over the schema $S$ 
contains a finite relation $R^D \subseteq \dom^{\arity(R)}$ for each $R \in S$, 
where $\dom$ is a set of constant values called the \emph{domain}.
We sometimes write $R$ instead of $R^D$ when $D$ is clear from the context.
If $(c_1,\dots,c_{\arity(R)})\in R^D$, then the expression $R(c_1,\dots,c_{\arity(R)})$ is called a \emph{fact} of $D$.
The size of a database $D$, denoted $|D|$, is the number of facts it contains.
The database schemas we consider are extended with Functional Dependencies (FDs). An FD is an expression of the form $R: A \rightarrow B$, where $A,B \subseteq [\arity(R)]$. A database \emph{satisfies} such an FD if for every two facts $R(t_1)$, $R(t_2)$ of $D$, we have that if $t_1[i] = t_2[i]$ for all $i \in \mathbf{A}$, then 
$t_1[j] = t_2[j]$ for all $j \in \mathbf{B}$.
If the schema $S$ contains a set $\Delta$ of FDs, a database over $S$ must satisfy all FDs in $\Delta$. 
We may assume without loss of generality that all FDs are of the form $R_i: A \rightarrow b$ where $b$ is a single variable because we can replace an FD of the form $R_i: A \rightarrow B$
with a set of FDs $\{ R_i: A \rightarrow b \mid b \in B \}$.
If $|A| = 1$, the FD is called \emph{unary}.

\textbf{Queries.}
A \emph{conjunctive query} (CQ) $Q$ over schema $S$ is an expression of the form
$Q(\vec x) \leftarrow R_1(\vec x_1), \ldots, R_\ell(\vec x_\ell)$,
where the tuples $\vec x, \vec x_1, \ldots, \vec x_\ell$ hold variables, 
every variable in $\vec x$ appears in some $\vec x_1, \ldots, \vec x_\ell$, we have that $R_1, \ldots, R_\ell \in S$, and $\vec x_i$ has the same arity as $R_i$ for all $i\in [\ell]$.
Each $R_i(\vec x_i)$ is called an \emph{atom} of the query $Q$, 
and the set of all atoms is denoted by $\atoms(Q)$.
When we do not need to refer to the order of variables within an atom $R_i(\vec x_i)$, we sometimes denote it by $R_i(X_i)$ for simplicity, where $X_i$ is a set of variables.
We use $\var(Q)$ for the set of variables that appear in $Q$.
The variables $\vec x$ are called \emph{free}.
We say that variables that are not free are \emph{projected}.
If the CQ does not contain projections, i.e., all query variables are free, it is called a \emph{join query}.
A repeated occurrence of a relational symbol in two different atoms is a \emph{self-join}, and if no self-joins exist, the CQ is called \emph{self-join-free}.
A homomorphism from a CQ $Q$ to a database $D$ is a mapping of $\var(Q)$ to constants from $\dom$ such that every atom of $Q$ maps to a fact of $D$.
A \emph{query answer} is such a homomorphism 
followed by a projection on the free variables. The set of query answers is denoted $Q(D)$.
In the context of answering a CQ, we can assume without loss of generality that FDs are expressed using the query variables~\cite[Section 8]{carmeli2023linearp}.
That is, if the database contains the FD $R_i: A \rightarrow b$, and the query contains the atom $R_i(\vec x_i)$, we can denote by $Y$ and $\{z\}$ the sets of variables in positions $A$ and $\{b\}$ respectively of $\vec x_i$, and specify the FD as $R_i: Y \rightarrow z$.
When $R_i$ is clear or irrelevant, we denote the FD simply by $Y \rightarrow z$, and we say that $Y$ implies $z$.
We write $Y\cimply Z$ when every variable of $Z$ is transitively implied by some subset of $Y$.

\textbf{Hypergraphs and query structure.}
A {\em hypergraph} $(V,E)$ is a set $V$ of {\em vertices} and a set $E$ of subsets of $V$ called {\em hyperedges}.
Two vertices are {\em neighbors} if they appear in the same hyperedge.
A {\em path} is a sequence of vertices such that every two succeeding vertices are neighbors.
Given a set $S$ of vertices, we denote by $N(S)$ the set containing all vertices that have a neighbor in $S$.
Two vertices are \emph{connected} if there is a path between them. A \emph{connected component} is a maximal set of vertices such that every two vertices of the set are connected.
A \emph{join tree} of a hypergraph $(V,E)$ is a tree where the nodes are the hyperedges $E$ and the {\em running intersection} property holds, namely: 
for all $u \in V$ the set $\{e \in E \mid u \in e\}$ forms a (connected) subtree.
A hypergraph is {\em acyclic} if it has a join tree.
We associate a hypergraph $H_Q = (V, E)$ to a CQ $Q$ 
where the vertices are the variables of $Q$, 
and every atom of $Q$ corresponds to a hyperedge with the same set of variables.
A CQ $Q$ is {\em acyclic} if $H_Q$ is acyclic,
and otherwise it is \emph{cyclic}.

\textbf{Lexicographic direct access.}
\emph{Lexicographic direct access}, sometimes simply called direct access, is a task defined by a query $Q$ and an ordering $\pi$ of its free variables. Given an input database $D$, after a preprocessing phase, the user can specify an index $j$ and expects the $j$th answer in $Q(D)$ according to the lexicographic order corresponding to $\pi$ or an out-of-bounds error if there are fewer than $j$ answers. When the valid inputs to the problem are restricted to those satisfying a set $\Delta$ of FDs, we specify such a task by the triple $(Q,\pi,\Delta)$. The time it takes to provide an answer given an index is called the \emph{access time}.
As in previous work, as a computation model, for an input of size $n$, we use the word-RAM model with $O(\log(n))$-bit words and unit-cost operations.

Given $k\in \mathbb{N}$, the $k$-star query is defined as $\qstar_k(x_1, \ldots,x_{k+1}) \leftarrow \bigwedge_{i\in [k]} R_i^\star(x_i, x_{k+1})$. 
A variable ordering for $\qstar_k$ is called \emph{bad} if $x_{k+1}$ is last.
The following hardness result is obtained by combining Proposition 19, Lemma 22, and Theorem 23 by Bringmann et al.~\cite{bringmann2024stars}.

\begin{theorem}[\cite{bringmann2024stars}]\label{thm:star-hard}
    Let $2\le k\in\mathbb{N}$. Assuming the Zero-Clique Conjecture, there is no $\eps>0$ such that for all $\delta>0$ there is a direct access algorithm for $\qstar_k$ with respect to a bad ordering with preprocessing time $O(|D^\star|^{k-\varepsilon})$ and access time $O(|D^\star|^{\delta})$.
\end{theorem}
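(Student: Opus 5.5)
The plan is to reduce from the Zero-$(k+1)$-Clique problem, going through an intermediate counting/selection problem on the star query. The Zero-Clique Conjecture states that for every $\eps>0$, detecting a $(k+1)$-clique of total edge weight zero in an $n$-node edge-weighted graph cannot be done in time $O(n^{k+1-\eps})$.

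First I would reduce Zero-$(k+1)$-Clique to the following \emph{star-sum} problem. We are given $k$ relations $R_i^\star(x_i,x_{k+1})$, where each $x_i$-value carries a weight, and for every $x_{k+1}$-value we must decide whether there is a choice of neighbours $x_1,\dots,x_k$ whose weights sum to a prescribed target.

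The construction partitions the vertex set into $k+1$ parts (by colour-coding, or by taking $k+1$ copies). We let $x_{k+1}$ range over $(k+1)$-st vertices together with auxiliary data, and we arrange the instance so that $|D^\star|\approx n^{1+1/k}$ or a similar size. The point of the size choice is that a $|D^\star|^{k-\eps}$ algorithm translates into $n^{k+1-\eps'}$ time for Zero-Clique. The standard route (Bringmann et al.) first gives a hardness lemma for a weighted star-like problem and then encodes the weights into the lexicographic order. I would follow that route.

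Second, I would show how direct access with respect to a bad ordering solves the intermediate problem. Because $x_{k+1}$ is last in the ordering, the weights can be encoded into the values of $x_1,\dots,x_k$. Concretely, we sort each $x_i$ domain by weight, so that the lexicographic order on $(x_1,\dots,x_k)$ interacts with sums. We then perform binary searches over the answer array, each costing $O(\log |D^\star|)$ access calls. For a fixed prefix, a binary search lets us count answers below a threshold.

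The heart of the argument (Bringmann et al.'s Lemma 22 and Theorem 23) is that testing whether a target sum is attained can be realised by a number of access queries that is sublinear in the brute-force cost. Let $N=|D^\star|$. The total cost of the reduction is then the preprocessing $O(N^{k-\eps})$ plus (number of queries)$\cdot O(N^{\delta})$. Choosing $\delta$ small enough relative to $\eps$ makes this $O(N^{k-\eps/2})$, which contradicts the conjecture. Because the statement quantifies over all $\delta>0$, we are free to pick such a $\delta$ after $\eps$ is fixed.

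The main obstacle is the middle step: making lexicographic access on a star with the centre last express a zero-sum condition over all $k$ leaves at once, without blowing up $|D^\star|$. My first attempt would be to use the known reduction from Zero-Clique to ``$k$-SUM over a product with per-centre constraints'' and embed it into the star via weight-sorted leaf domains. If needed, I would instead defer entirely to Proposition~19, Lemma~22 and Theorem~23 of Bringmann et al., whose combination is exactly the statement of \Cref{thm:star-hard}. In the paper itself the theorem is cited, so the proof amounts to checking that their parameters match the quantifiers here.
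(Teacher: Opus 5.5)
The paper gives no argument for this statement. It is imported by combining Proposition~19, Lemma~22 and Theorem~23 of Bringmann et al., so your fallback (defer to those results and check that the quantifiers line up, fixing $\delta$ after $\eps$) is exactly what the paper does. The self-contained route you lead with, however, does not work as sketched.

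A bad order is useful for a different reason than the one you give. With $x_{k+1}$ last, the answers of $\qstar_k$ come in contiguous blocks indexed by the leaf tuple. So $O(\log|D^\star|)$ accesses decide whether a \emph{given} tuple $(a_1,\dots,a_k)$ has a common center. In other words, direct access answers online $k$-Set-Disjointness queries on the leaf neighborhoods, and that is the problem through which the cited lower bound is routed. Your star-sum problem asks the converse question: for each center, is there a leaf tuple with a prescribed sum? A center-last order gives you no handle on that, since you can never fix $x_{k+1}$ and search inside it. Sorting leaf domains by a per-value weight also cannot represent the clique's edge weights. Weights on leaf--center edges depend on the center, and weights between two leaves are seen by no single relation. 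The weights have to be absorbed into the construction of the instance, not into the value order.

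The genuinely missing idea is that construction. Suppose each leaf vertex were simply adjacent to all $n$ center vertices. Then $|D^\star|=\Theta(n^2)$, and for $k\ge 2$ the budget $|D^\star|^{k-\eps}$ is already polynomially above the trivial $O(n^{k+1})$ for Zero-$(k+1)$-Clique, so no contradiction arises. What is needed is:
\begin{itemize}
\item a much sparser instance (your guess $|D^\star|\approx n^{1+1/k}$ is what the exponents demand) that still detects zero-weight cliques;
\item a bound of $n^{k+1-\Omega(1)}$ on the number of access calls.
\end{itemize}
Producing this is exactly the Zero-Clique-to-Set-Disjointness reduction of the cited work. Your sketch replaces it with the bare assertion that the test ``can be realised by a number of access queries that is sublinear in the brute-force cost''. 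So either cite the three results, as the paper does, or supply that reduction. The rest of your plan, the $\eps'$/$\delta$ bookkeeping, is fine.
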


In the \emph{Zero-$k$-Clique} problem, given an $n$-node graph with integer edge weights in $[-n^c,\ldots, n^c]$ for some constant $c\ge 1$, the task is to decide whether the graph contains a $k$-clique with total weight $0$. 
The \emph{Zero-Clique Conjecture} states that, for every $k \ge 3$, there is no constant $\varepsilon >0$ such that \emph{Zero-$k$-Clique} has a randomized algorithm running in time $O(n^{k-\varepsilon})$~\cite{bringmann2024stars}.

\textbf{Disruption-free decompositions and the incompatibility number.}
Let $Q(X)\leftarrow R_1(X_1), \ldots, R_\ell(X_\ell)$ be a join query and $\pi=(v_1,\ldots,v_n)$ an
ordering of its variables.
For every $i\in [n]$, we denote by $Q[v_i, \ldots, v_n]$ the query $Q(X\cap \{x_i, \ldots, x_n\}) \leftarrow R_1(X_1\cap \{x_i, \ldots, x_n\}), \ldots, R_\ell(X_\ell\cap \{x_i, \ldots, x_n\})$.
Denote by $S_i$ the vertices in the connected component of $v_i$ in $Q[{v_i,\ldots,v_n}]$.
The \emph{disruption-free decomposition} of $Q$ according to $\pi$ is defined\footnote{In the definition by Bringmann et al.~\cite{bringmann2024stars}, disruption-free decompositions are also defined to contain a hyperedge for every atom of $Q$. Since every atom is contained in the bag created by its largest variable, and adding or removing hyperedges contained in the created bags does not make a difference for our purposes, we chose the current definition for simplicity.} to be the hypergraph $(\var(Q),\{B_1,\ldots,B_n\})$ where $B_i:=\{v_i\}\cup\{v_j\mid j<i , v_j\in N(S_i)\}$ for $i\in[n]$. We call the sets $B_i$ the \emph{bags} of the decomposition and say that $v_i$ is the variable \emph{creating} the bag $B_i$. 
The join query with an atom $R_i(B_i)$ for every bag $B_i$ of the disruption-free decomposition admits lexicographic direct access according to $\pi$ with $O(|D|)$ preprocessing and $O(\log(|D|))$ access time over a database $D$~\cite{bringmann2024stars,carmeli2023linearp}.

Given a hypergraph $H=(V,E)$ and a set of vertices $S\subseteq V$, a fractional edge cover of $S$ using $H$ is a mapping $\mu: E\rightarrow [0,1]$ such that for every $v\in S$ we have $\sum_{e: v\in e} \mu(e) \ge 1$.
The \emph{fractional edge cover number} of $S$ using $H$ is $\fec{H}(S) := \min_\mu {\sum_{e\in E} \mu(e)}$, where the minimum is taken over all fractional edge covers of $S$ using $H$. We remark that it can be computed efficiently by linear programming.
The \emph{incompatibility number} $\iota$ of $Q$ and $\pi$ is defined to be the maximum fractional edge cover number of a bag of the disruption-free decomposition using the query hypergraph. That is,
$\iota(Q,\pi)=\max_{i\in [n]}{\fec{Q}(B_i)}$.

\begin{example}
Consider $Q(v_1, v_2, v_3, v_4, v_5) \leftarrow R_1(v_1, v_4), R_2(v_4,v_3), R_3(v_3, v_5), R_4(v_5, v_2)$ and $\pi= (v_1,v_2,v_3,v_4,v_5)$. We get that $S_5 = \{v_5\}$, $S_4 = \{v_4\}$, $S_3 = \{v_3, v_4, v_5\}$, $S_2 = \{v_2, v_3, v_4,v_5\}$, and $S_1= \{v_1, v_2, v_3, v_4, v_5\}$. The resulting bags are $B_5 = \{v_2, v_3, v_5\}$, $B_4 = \{v_1, v_3, v_4\}$, $B_3= \{v_1, v_2, v_3\}$, $B_2 = \{v_1, v_2\}$, and $B_1 = \{v_1\}$. The biggest fractional edge cover number out of these bags is that of $B_3$, which is $3$, so $\iota(Q,\pi)=3$.
\end{example}

The incompatibility number determines the preprocessing time for direct access as follows:
\begin{theorem}[\cite{bringmann2024stars}]\label{thm:bringmann-et-al}
    Given a join query $Q$ and an ordering $\pi$ of its variables, 
		\begin{itemize}
		  \item lexicographic direct access for $(Q,\pi)$ is possible with $O(|D|^{\iota(Q, \pi)})$ preprocessing time and $O(\log(|D|))$ access time, and
		  \item there is no constant $\varepsilon > 0$ such that for all $\delta > 0$ Lexicographic direct access for $(Q,\pi)$ is possible with preprocessing time $O(|D|^{\iota(Q, \pi)-\varepsilon})$ and access time $O(|D|^\delta)$, assuming the Zero-Clique Conjecture.
		\end{itemize}
\end{theorem}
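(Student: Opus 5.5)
Theorem~\ref{thm:bringmann-et-al} is a result of Bringmann et al.~\cite{bringmann2024stars}, so the plan is to reconstruct its proof from the tools recalled above. The two parts are handled separately.

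\textbf{Upper bound.} First compute the disruption-free decomposition $B_1,\ldots,B_n$ of $Q$ according to $\pi$. For each bag $B_i$, materialize the relation
\[
R_i' := \pi_{B_i}\bigl(\textstyle\Join_{R\in\atoms(Q)} \pi_{B_i\cap X_R} R\bigr),
\]
that is, the join of the projections of all atoms onto $B_i$. A worst-case optimal join algorithm computes this in time $O(|D|^{\fec{Q}(B_i)})$, by the AGM bound. Next, add the original atoms; each lies inside the bag created by its $\pi$-largest variable, so we may semijoin-reduce every $R_i'$ by them. The resulting join query over the bags is equivalent to $Q$ on $D$, because every bag relation is a superset of the projection of $Q(D)$, and the original atoms are present. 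Finally, apply the known algorithm for disruption-free decompositions, which gives direct access with preprocessing linear in the new database and logarithmic access time. The new database has size $O(|D|^{\iota(Q,\pi)})$, so the total preprocessing is $O(|D|^{\iota})$.

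\textbf{Lower bound.} Let $B=B_i$ be a bag attaining $\iota$. By LP duality, $\fec{Q}(B)$ equals the maximum fractional vertex packing of $B$. I would reduce from direct access for a star query $\qstar_k$ with a bad ordering. Here $v_i$, the creator of the bag, plays the role of the center $x_{k+1}$: it comes after all other variables of $B$ and is connected to each of them through $S_i$. The earlier variables $B\setminus\{v_i\}$ play the roles of the leaves. Given the star instance, encode each leaf relation $R^\star_j(x_j,x_{k+1})$ into atoms along a path inside $S_i\cup\{v_j\}$. Encode the remaining variables by products of domains, with the weights chosen by the fractional packing so that every relation has size $O(N)$. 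The answers of $Q$ in lexicographic order $\pi$ then correspond to star answers in a bad order, and the index translation is computable using counts. This first yields $|D|^{k}$-type hardness for an integer $k$. Then generalize to the fractional value $\iota$: I would invoke the weighted or hypergraph variants of the Zero-Clique reductions (Lemma~22 and Theorem~23 of \cite{bringmann2024stars}), which give hardness exactly at the fractional cover number.

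\textbf{Main obstacle.} I expect the fractional lower bound to be the main obstacle. It requires the correct ``clique-embedding'' style reduction in which the variables of the bag receive domain sizes $N^{w_v}$ according to an optimal packing, while the ordering constraint forces preprocessing comparable to enumerating the whole bag. I would first try to handle it by reducing Zero-Clique directly to counting prefix answers for $Q$ at bag $B$, mimicking Theorem~\ref{thm:star-hard}.
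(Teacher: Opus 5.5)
Your upper bound is correct and is the approach the paper relies on. The paper takes this theorem from Bringmann et al., but its own \Cref{thm:generalupper} does the same thing with PANDA in place of a worst-case optimal join under the AGM bound. Your semijoin step is redundant but harmless, since the projection of an atom contained in $B_i$ is the atom itself.

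The lower bound has a genuine gap, exactly at the point you call the main obstacle: you never reach the fractional exponent $\iota$, and the route you propose for it does not exist. Lemma~22 and Theorem~23 of Bringmann et al.\ are, together with Proposition~19, exactly what gives \Cref{thm:star-hard}. That is hardness for the integer star $\qstar_k$ only; there is no weighted variant at a fractional cover number, and none is needed.

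The missing idea is a scaling step inside the encoding, which the paper carries out with colorings in \Cref{lem:coloring-construction} and \Cref{lem:star-reduction}; for $\Delta=\emptyset$ the color number equals $\fec{Q}$.
\begin{itemize}
\item Write an optimal packing of $B=B_i$ as $w_v=p_v/M$ with integers $p_v$. Then $\sum_{v\in B}p_v=M\iota$, and $\sum_{v\in X_R}p_v\le M$ for every atom.
\item Let each $v$ carry a tuple of $p_v$ distinct star leaves, giving $k=M\iota$ leaves in total.
\item Put the center on the connected set $S_i$, with the center coordinate placed last in the tuple at $v_i$.
\item Each relation is then the join of at most $M$ leaf relations of $\qstar_k$ sharing the center, so $|D|=O(N^M)$.
\item The encoding is an order-preserving bijection onto a bad order of $\qstar_k$, so no counting is needed to translate indices.
\item Preprocessing $O(|D|^{\iota-\varepsilon})$ would then give $O(N^{k-M\varepsilon})$ for $\qstar_k$, contradicting \Cref{thm:star-hard}.
\end{itemize}

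Your leaf assignment, with $v_i$ only the center and $B\setminus\{v_i\}$ the leaves, also discards the packing weight of $v_i$. Take $R(x,y),S(y,z),T(x,z)$ with $\pi=(x,y,z)$. The unique optimal packing is $(1/2,1/2,1/2)$, so $\iota=3/2$, and this is reached only because $z$ carries a leaf and the center at once. If $z$ is purely the center, then $R(x,y)$ must hold pairs of leaf values, up to $N^2$ tuples, and you get only a trivial linear bound.

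Similarly, ``every relation has size $O(N)$'' cannot come from product domains that are not tied to the star instance, because such domains add size but no hardness. The normalization has to come from the $N^M$ blow-up of star leaves described above.
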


\textbf{The Polymatroid Bound and the PANDA Algorithm.}
We next present some basics on the polymatroid bound, slightly adapted to our setting.

Let $V$ be a finite set. A \emph{set function} is then a function $h:2^V \rightarrow \mathbb R_+$. If for all $A\subseteq B\subseteq V$ we have $h(A)\le h(B)$, we say that $h$ is \emph{monotone}. We call $h$ \emph{submodular} if for all $A,B\subseteq V$ we have $h(A) + h(B) \ge h(A\cup B) + h(A\cap B)$. If $h$ is monotone and submodular and we have $h(\emptyset)=0$, we also call it a \emph{polymatroid}.
We will only be interested in polymatroids whose ground set $V$ is the set of variables of a CQ. So fix a query $Q(X):= \bigwedge_{i\in [k]} R_i(X_i)$ and let $V:= \bigcup_{i\in [k]} X_i$. We say that a polymatroid $h$ is guarded by $Q$ if for all $i\in [k]$, we have $h(X_i) \le 1$.
We say that $h$ respects an FD $Y\rightarrow z$, if $h(z\mid Y) := h(Y\cup \{z\}) - h(Y) = 0$.

We will be interested in the set function defined for every $S\subseteq V$ by $\pb{Q}{\Delta}(S) := \max_h h(S)$ where the maximum is taken over all polymatroids that are guarded by $Q$ and respect all FDs in $\Delta$. 
Remark that $\pb{Q}{\Delta}(S)$ is the solution of a linear program with integer coefficients with the variable set $\{h(Y)\mid Y\subseteq V\}$ and as constraints the submodularity and monotonicity requirements as well as the guardedness. Since we assume that every variable of $Q$ is in an atom, $\pb{Q}{\Delta}(S)$ is always bounded by a finite number; for example, $|S|$ is a very crude upper bound because $h(S)\le \sum_{x\in S} h(\{x\})\le |S|$ holds for every $h$ by submodularity and monotonicity and the fact that for every $x$ there is $i\in [k]$ with $x\in X_i$ and thus $h(\{x\}) \le h(X_i) \le 1$. As the solution of a linear program with a bounded target function, $\pb{Q}{\Delta}(S)$ is computable and always a rational number.
We are interested in $\pb{Q}{\Delta}(S)$ because of the following result often called the polymatroid bound\footnote{To avoid confusion, we remark that in~\cite{Khamis0S16}, $\pb{Q}{\Delta}(S)$, using different notation, is scaled by a logarithmic factor compared to~\cite{GottlobLVV12}. This difference is purely arithmetic and does not change the \emph{polymatroid bound}. We here use the variant from~\cite{GottlobLVV12} since it better matches our asymptotic perspective.}.
\begin{theorem}[\cite{GottlobLVV12,Khamis0S16}]\label{thm:polymatroid}
    Let $Q(X)$ be a CQ and $\Delta$ a set of FDs. For every database $D$ that respects the constraints in $\Delta$, we have that $|Q(D)| \le |D|^{\pb{Q}{\Delta}(X)}$.
\end{theorem}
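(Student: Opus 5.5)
The plan is the standard entropy argument, which turns the database into a polymatroid. Fix a database $D$ that satisfies $\Delta$, and assume $Q(D)\neq\emptyset$, since otherwise there is nothing to show. Take the uniform distribution over the set of answers $Q(D)$, viewed as assignments of the variables $V=X$, and let $H$ denote Shannon entropy in base $2$. Define the set function $g(S):=H(\text{projection onto }S)$ for $S\subseteq V$. By Shannon's basic inequalities, $g$ is a polymatroid: $g(\emptyset)=0$, monotonicity holds because $H(S)\le H(S\cup T)$, and submodularity is equivalent to nonnegativity of conditional mutual information. Uniformity gives $g(X)=\log|Q(D)|$.

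Next I will rescale by setting $h:=g/\log|D|$. This assumes $|D|\ge 2$; when $|D|\le1$ each relation has at most one tuple, so $|Q(D)|\le 1$ and the claim is trivial. Since nonnegative scaling preserves monotonicity, submodularity and $h(\emptyset)=0$, $h$ is still a polymatroid.

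I then check the two constraints.
\begin{itemize}
\item \emph{Guardedness.} For each atom $R_i(X_i)$, the projection of every answer onto $X_i$ is a tuple of $R_i^D$. Hence the support of the marginal on $X_i$ has size at most $|R_i^D|\le|D|$. Entropy is bounded by the log of the support size, so $g(X_i)\le\log|D|$ and therefore $h(X_i)\le 1$.
\item \emph{Respecting FDs.} Let $R_i: Y\rightarrow z$ be in $\Delta$, with $Y\cup\{z\}\subseteq X_i$. Any two answers agreeing on $Y$ project to facts of $R_i^D$ that agree on $Y$, so they agree on $z$ because $D$ satisfies the FD. Thus $z$ is a deterministic function of $Y$ under the distribution, so $H(z\mid Y)=0$, i.e.\ $h(Y\cup\{z\})=h(Y)$.
\end{itemize}

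Hence $h$ is a feasible polymatroid in the linear program defining $\pb{Q}{\Delta}$. This gives $\log|Q(D)|/\log|D| = h(X)\le \pb{Q}{\Delta}(X)$, and exponentiating yields $|Q(D)|\le|D|^{\pb{Q}{\Delta}(X)}$.

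The one step needing care is the FD step. It works only because FDs are expressed over query variables inside the atom's scope, as allowed in the preliminaries. Repeated variables within an atom do not break it either, since the projection still lands in the relation. Everything else is routine.
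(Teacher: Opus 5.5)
Your entropy argument is the standard one behind the cited bound. The paper gives no proof of its own and defers to the references. For join queries your proof is complete: the rescaled entropy function is a polymatroid, guardedness follows from the support bound, and the FD step is handled correctly, including repeated variables.

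\textbf{Gap: projections.} The statement is about CQs, where the free variables $X$ may be a proper subset of $V=\var(Q)$, while $\pb{Q}{\Delta}$ maximizes over polymatroids on all of $2^{V}$. Your phrase ``viewed as assignments of the variables $V=X$'' silently restricts to join queries. The uniform distribution on $Q(D)$ only gives entropies of subsets of $X$. So $g$ is undefined on an atom $X_i$ that contains a projected variable, and you cannot check the guard $h(X_i)\le 1$ or an FD involving projected variables.

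The obvious repair also fails: take the uniform distribution over \emph{all} homomorphisms from $Q$ to $D$. Its marginal on $X$ need not be uniform, so you only get $g(X)\le\log|Q(D)|$, which is the wrong direction. You would recover only the weaker bound with $\pb{Q}{\Delta}(V)$ in the exponent.

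\textbf{Fix: choose one witness per answer.} For each answer $a\in Q(D)$, fix one homomorphism $\eta_a\colon V\to\dom$ extending $a$. Take the uniform distribution on $W=\{\eta_a\mid a\in Q(D)\}$.
Restriction to $X$ is injective on $W$, so the $X$-marginal is uniform on $Q(D)$ and $g(X)=\log|Q(D)|$. Every element of $W$ is a homomorphism, so your guardedness and FD arguments go through unchanged for all atoms, including those with projected variables. With this change the proof covers the full statement.
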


Due to \Cref{thm:polymatroid}, we call the function~$\pb{Q}{\Delta}$ the \emph{exponent of the polymatroid bound}.

A fundamental result from~\cite{Khamis0S16} is that the polymatroid bound is essentially also a runtime bound for CQ evaluation. In particular, there is an algorithm (called PANDA) that for every CQ $Q(X)$ with a set of FDs $\Delta$ does the following: given a database $D$ respecting $\Delta$, PANDA computes a relation $R \supseteq Q(D)$ in time $\polylog(|D|) |D|^{\pb{Q}{\Delta}(X)}$.

Note that, we cannot expect the PANDA algorithm to compute $Q(D)$ exactly, since doing so might be computationally hard even if the polymatroid bound is small. For example, $Q$ can be a clique query of which we only want the projection to a single variable. This query has polymatroid bound $1$, but for big enough cliques we cannot even expect to decide in linear time if it has any answers, assuming the exponential time hypothesis, see e.g.~\cite{LokshtanovMS11}.
When considering join queries (without projections), this does not happen: in that case, we can always simply filter the relation $R$ computed by PANDA with all atoms to compute $Q(D)$ in time $\polylog(|D|) |D|^{\pb{Q}{\Delta}(X)}$, see~\cite{Khamis0S16} for details.

\section{Reordered Extensions}\label{sec:reorder}

In this section, we inspect the approach of achieving direct access with FDs using a reduction to a reordered extension without dependencies, and we show that this approach yields optimal preprocessing time (up to sub-polynomial factors) for logarithmic access time in the case of unary FDs.
We first provide a natural generalization of $\Delta$-reordered extensions~\cite{carmeli2023linearp} for general FDs.

\begin{definition}[$\Delta$-reordering]\label{def:FD-reordering}
    Given a CQ $Q$, a set of FDs $\Delta$, and an ordering $\pi$ of variables from $Q$, we say that $\pi'$ is a \emph{$\Delta$-reordering} of $\pi$ if it can be obtained by the following procedure: Initialize $\pi'$ empty. Then, for every variable $v$ according to the order $\pi$: if $v$ is not in $\pi'$, append it to $\pi'$; then, for every FD $Y \rightarrow y$ with $Y$ in $\pi'$ and $y$ not in $\pi'$, append $y$ to $\pi'$.
    We say that $\pi$ is \emph{consistent} with respect to $\Delta$ if it is a $\Delta$-reordering of itself.
\end{definition}

We remark that a problem may admit more than one $\Delta$-reordering, and any choice of reordering will work for our purposes. As an example, given the FDs $x\rightarrow y$ and $x\rightarrow z$ and $\pi=(x)$, both $(x,y,z)$ and $(x,z,y)$ are valid reorderings. 
The reordering serves as a way of representing the problem's complexity more directly without changing the query answering problem. 
In other words, for every database $D$ that satisfies $\Delta$, ordering $Q(D)$ by $\pi$ is the same as ordering it by its $\Delta$-reordering $\pi'$.

\begin{observationrep}\label{lem:reordering-validity}
Given a CQ $Q$, a set of FDs $\Delta$, a variable order $\pi$, and a $\Delta$-reordering $\pi'$ of $\pi$, then the triples 
$(Q,\pi',\Delta)$ and $(Q,\pi,\Delta)$ induce the same output in the same order.
\end{observationrep}
\begin{appendixproof}
Given an FD $Y\rightarrow z$, once the assignment for $Y$ is set, $z$ can have at most one possible assignment, so as long as $z$ comes after all variables of $Y$ in the order, its exact position or whether it is considered as part of the order does not influence the answer ordering. This observation was already stated for unary FDs~\cite[Lemma 8.16]{carmeli2023linearp}.
\end{appendixproof}

$\Delta$-reordered extensions extend the atoms and head of CQs using the FDs, in addition to reordering the required variable order.

\begin{definition}[$\Delta$-reordered extension]\label{def:reordered-ext}
    Given a self-join free CQ $Q(X)$, a set of FDs $\Delta$, and a variable order $\pi$, their $\Delta$-reordered extension is a CQ $Q^+(X^+)$, a set of FDs $\Delta^+$ and a variable order $\pi^+$ defined as follows.
    \begin{itemize}
        \item $Q^+$ body and $\Delta^+$ are obtained as the fixpoint of the following extension step: Given an FD $Y \rightarrow z\in\Delta$, if $Y \subseteq X_i$ and $z \notin X_i$ for some atom $R_i(X_i)$, then increase the arity of $R_i$ by one, replace $R_i(X_i)$ with $R_i(X_i, z)$, and add $R_i:Y \rightarrow z$ to the FD set $\Delta^+$.
        \item $Q^+$ head is obtained as the fixpoint of the following extension step, starting with $X^+ := X$: given an FD $Y \rightarrow z\in\Delta$, if $Y \subseteq X^+$ and $z \notin X^+$, then add $z$ to $X^+$.
        \item $\pi^+$ is a $\Delta$-reordering of $\pi$.
    \end{itemize}
\end{definition}

We will next use exact reductions to connect the complexity of a direct access problem and its $\Delta$-reordered extension.
Given a variable order $\pi$, and query answers $a_1, a_2$, we denote by $a_1 \preceq_{\pi} a_2$ the fact that $a_1$ appears before $a_2$ in the lexicographic order defined by $\pi$.

Consider CQs $Q$ and $Q'$, FD sets $\Delta$ and $\Delta'$, and variable orders $\pi$ and $\pi'$.
We say that there is an \emph{exact reduction} from $(Q,\pi,\Delta)$ to $(Q',\pi',\Delta')$ if for every database $D$ that satisfies $\Delta$, we can construct in linear time a database $D'$ that satisfies $\Delta'$ such that there is a bijection $\tau$ from $Q'(D')$ to $Q(D)$ that is computable in constant time, and for all query answers $a_1, a_2$ of $Q'$, we have that $a_1 \preceq_{\pi'} a_2$ iff $\tau(a_1) \preceq_\pi \tau(a_2)$.

\begin{toappendix}
    
As any database that satisfies a set of FDs $\Delta$ also satisfies $\emptyset$, using the identity for the construction and bijection yields a trivial exact reduction from a problem with FDs to the same problem without FDs. 

\begin{observation}\label{obs:remore-fds}
    Given a self-join-free CQ $Q$, a variable order $\pi$, and a set of FDs $\Delta$, there is an exact reduction from $(Q,\pi,\Delta)$ to $(Q,\pi,\emptyset)$.
\end{observation}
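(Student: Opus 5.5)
The plan is to check the definition of exact reduction directly, using the identity map for both the database construction and the bijection. Given a database $D$ that satisfies $\Delta$, we set $D' := D$. This is computable in linear time; in fact, we only need to copy the input.

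The first step is to confirm that $D'$ is a valid input for the target problem $(Q,\pi,\emptyset)$. The target problem only requires that $D'$ satisfy the empty set of FDs, and this holds vacuously for every database over the schema. The schema itself is unchanged, so $D'$ is a database over the same relational symbols with the same arities.

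The second step is to define $\tau$ as the identity on $Q(D') = Q(D)$. This is trivially a bijection from $Q(D')$ to $Q(D)$ and computable in constant time, since each answer is returned unchanged. Both problems use the same variable order $\pi$, so for all answers $a_1, a_2$ we have $a_1 \preceq_{\pi} a_2$ iff $\tau(a_1) \preceq_{\pi} \tau(a_2)$, and the order-preservation condition holds.

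No step is a genuine obstacle. The only point needing care is that validity of the target instance requires nothing beyond membership in the schema, because the target FD set is empty. Self-join-freeness of $Q$ is not used; it is assumed only for uniformity with the surrounding results.
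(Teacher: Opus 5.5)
Your proposal is correct and matches the paper's argument: the paper also takes the identity as both the database construction and the bijection, relying only on the fact that every database satisfying $\Delta$ vacuously satisfies $\emptyset$. Your remark that self-join-freeness is not needed is also accurate.
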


Carmeli and Kr\"oll~\cite[Theorem 2]{carmeli2020enumeration} show exact reductions in both directions between a query and its reordered extension. Inspecting their proofs, we notice that they preserve the answer order.

\begin{lemma}\label{lem:lex-reduct}\label{lem:ext-to-orig}
Given a self-join-free CQ $Q$, a variable order $\pi$, and a set of FDs $\Delta$, there are exact reductions in both directions between $(Q,\pi,\Delta)$ and $(Q^+,\pi,\Delta^+)$.
\end{lemma}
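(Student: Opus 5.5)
Both reductions will be built explicitly, following Carmeli and Kr\"oll~\cite[Theorem 2]{carmeli2020enumeration}. The work is to check that they preserve $\preceq_\pi$, and to handle the head of $Q$ carefully. I treat $Q$ as a join query, so $X^+=X=\var(Q)$ and $\pi$ is a valid order for both problems. For a CQ with projections the head is also extended. In that case the bijection appends or drops the functionally determined extra variables, and the order argument is the same one given below.

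\textbf{Direction $(Q,\pi,\Delta)\to(Q^+,\pi,\Delta^+)$.}
Let $D$ satisfy $\Delta$. We build $D^+$ by following the extension steps in the order they were applied. At a step that turns $R_i(X_i)$ into $R_i(X_i,z)$ because of an FD $R_j:Y\to z$ with $Y\subseteq X_i$, we extend every tuple $t\in R_i$ by the unique $z$-value that the current $R_j$ associates with $t[Y]$. If no such value exists, $t$ cannot take part in any answer and is deleted.

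To do this in linear time, we first build a hash (or sorted) index on $R_j$ from $Y$-values to $z$-values. Only constantly many steps occur, since the query is fixed.

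By construction $D^+$ satisfies $R_i:Y\to z$, and the FDs of $\Delta$ still hold in $D^+$. Every answer of $Q$ over $D$ is consistent with each FD. Hence each answer is also an answer of $Q^+$ over $D^+$, because the added column of each used tuple agrees with the answer's $z$-value. Conversely, each answer of $Q^+$ restricts to an answer of $Q$. So $\tau$ is the identity on assignments, and it preserves $\preceq_\pi$ trivially.

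\textbf{Direction $(Q^+,\pi,\Delta^+)\to(Q,\pi,\Delta)$.}
Let $D^+$ satisfy $\Delta^+$. Take $D$ to be the projection of each $R_i^+$ onto the original positions of $R_i$. We must show that $D$ satisfies $\Delta$ and that $Q(D)=Q^+(D^+)$.

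For the second claim, take an answer $a$ of $Q$ over $D$. For each atom, some tuple of $R_i^+$ projects to $a[X_i]$. Its extra columns are determined by $\Delta^+$ from $a[X_i]$, and an induction over the extension steps shows they agree with $a$. The key fact is that every added variable $z$ is functionally determined, through $Y\subseteq X_i$, by an atom $R_j$ that contains $Y\cup\{z\}$ in $D^+$, where $R_j$ itself may be an extended atom. Since $D^+$ satisfies $R_j:Y\to z$, the value of $z$ is forced to equal $a(z)$.

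The fact that $D$ satisfies $\Delta$ is immediate, because the original FDs of $\Delta$ are contained in $\Delta^+$ or are preserved under projection. Once more $\tau$ is the identity, and the order is preserved.

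The main obstacle is this induction in the second direction. It has to handle chains in which $z$ is added to $R_i$ through an FD whose left-hand side was itself introduced by an earlier extension. I would proceed by induction on the step index, proving that every tuple of $R_i^+$ whose original part matches $a$ also matches $a$ on all columns added so far. Self-join-freeness is used so that extending the relation symbol $R_i$ affects only one atom.
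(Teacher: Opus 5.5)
Your first direction is sound. Extending each tuple through a lookup table on the relation that carries the FD, and deleting tuples without a match, is the construction the paper takes from Carmeli and Kr\"oll. Your order argument (the bijection only touches variables outside $\pi$) is also the one the paper gives.

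\textbf{The gap is in the second direction.} Projecting $D^+$ does not in general give $Q(D)=Q^+(D^+)$. The failing step is ``since $D^+$ satisfies $R_j:Y\to z$, the value of $z$ is forced to equal $a(z)$'': an FD constrains a single relation only. The FD $R_j:Y\to z$ makes $z$ a function of $Y$ inside $R_j^+$, and the new FD $R_i:Y\to z$ makes it a function of $Y$ inside $R_i^+$. Nothing in $\Delta^+$ forces these two functions to agree. For example, take $Q(x,z)\leftarrow R(x),S(x,z)$ with $S:x\to z$. Then $Q^+$ has body $R(x,z),S(x,z)$, and $\Delta^+$ adds $R:x\to z$. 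The database with $R^+=\{(1,5)\}$ and $S^+=\{(1,7)\}$ satisfies $\Delta^+$ and has $Q^+(D^+)=\emptyset$, yet its projection gives $Q(D)=\{(1,7)\}$. So the invariant you plan to prove by induction is false for arbitrary $D^+$. That invariant says a tuple of $R_i^+$ whose original part matches $a$ also matches $a$ on the added columns, and no induction over the extension steps can establish it.

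\textbf{The fix is a filtering step before projecting.} Consider each column $z$ that was added to $R_i$ because of an FD $R_j:Y\to z$ of $\Delta$, so that $Y\cup\{z\}$ are original positions of $R_j$. Delete from $R_i^+$ every tuple $t$ whose restriction to $Y\cup\{z\}$ does not occur in the projection of $R_j^+$ onto $Y\cup\{z\}$. This is a semijoin, computable in linear time with a lookup table. It keeps $\Delta^+$ satisfied, and it removes no tuple used by an answer of $Q^+(D^+)$, since such an answer gives $Y\cup\{z\}$ the same values in both atoms.

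After this filtering your induction goes through. If $t[Y]=a[Y]$, then $t[z]$ is the value that $R_j^+$ associates with $a[Y]$. That value equals $a(z)$, because the tuple of $R_j^+$ witnessing $a$ on the original variables of $R_j$ carries $a[Y]$ and $a(z)$, and $R_j^+$ satisfies $Y\to z$. The projected database then satisfies $\Delta$. Its answers are in bijection with $Q^+(D^+)$ via the identity, extended by the FD-determined head variables for CQs with projection. Your order argument then applies unchanged.
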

\begin{proof}
    We first show an exact reduction from $(Q,\pi,\Delta)$ to $(Q^+,\pi,\Delta^+)$.
    Carmeli and Kr\"oll~\cite[Claim 1 of Theorem 2]{carmeli2020enumeration} show an exact reduction from $(Q(X),\Delta)$ to $(Q^+(X^+),\Delta^+)$ without discussing the answer order.
    In this reduction, the bijection $\tau$ keeps the assignments to the free variables of $Q$ as is and projects away the assignments for the variables $X^+\setminus X$.
    Since $\pi$ only depends on $X$, the answer order is maintained.

We next show an exact reduction from $(Q^+,\pi,\Delta^+)$ to $(Q,\pi,\Delta)$.
    Carmeli and Kr\"oll~\cite[Claim 2 of Theorem 2]{carmeli2020enumeration} show an exact reduction from $(Q^+,\Delta^+)$ to $(Q,\Delta)$ without discussing the answer order.
    In this reduction, the bijection $\tau$ keeps the assignments to the free variables of $Q$ as is and only extends the mappings with assignments for the variables $X^+\setminus X$.
    Since $\pi$ only involves variables of $X$, the answer order is maintained.
\end{proof}

\end{toappendix}

Building on a reduction by Carmeli and Kr\"oll~\cite{carmeli2020enumeration}, we can reduce $(Q,\pi,\Delta)$ to $(Q^+,\pi,\Delta^+)$. The latter represents the same problem as $(Q^+,\pi^+,\Delta^+)$ by \Cref{lem:reordering-validity}. Finally, as any database that satisfies $\Delta^+$ also satisfies $\emptyset$, this can be reduced to $(Q^+,\pi^+,\emptyset)$. Overall, we get an exact reduction from a problem to its $\Delta$-reordered extension without FDs.

\begin{theoremrep}\label{thm:easy-extension}
    Given a self-join-free CQ $Q$, a variable order $\pi$, and a set of FDs $\Delta$, there is an exact reduction from $(Q,\pi,\Delta)$ to $(Q^+,\pi^+,\emptyset)$.
\end{theoremrep}
\begin{appendixproof}
    We reduce $(Q,\pi,\Delta)$ to $(Q^+,\pi,\Delta^+)$ using \Cref{lem:lex-reduct}. The latter represents the same problem as $(Q^+,\pi^+,\Delta^+)$ by \Cref{lem:reordering-validity}. 
    Finally, we apply \Cref{obs:remore-fds} for a reduction from $(Q^+,\pi^+,\Delta^+)$ to $(Q^+,\pi^+,\emptyset)$.
\end{appendixproof}

\Cref{thm:easy-extension} implies an algorithm for $(Q,\pi,\Delta)$ with the same time guarantees we have for $Q^+$ and $\pi^+$ without FDs.

\begin{corollary}\label{cor:ext-alg}
    Given a self-join-free join query $Q$, an ordering $\pi$ of its variables, and a set $\Delta$ of FDs, let~$\iota$ be the incompatibility number of their reordered extension $Q^+$ and $\pi^+$.
	Lexicographic direct access for $(Q,\pi, \Delta)$ is possible with $O(|D|^{\iota})$ preprocessing time and $O(\log(|D|))$ access time.
\end{corollary}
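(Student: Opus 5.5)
The plan is to combine the exact reduction of \Cref{thm:easy-extension} with the upper bound of \Cref{thm:bringmann-et-al}. Given a database $D$ satisfying $\Delta$, we first build in linear time the database $D'$ for $(Q^+,\pi^+,\emptyset)$ promised by \Cref{thm:easy-extension}. Since $Q$ is a join query, every variable is free, so the head extension adds nothing: $X^+=X=\var(Q)=\var(Q^+)$ and $Q^+$ is again a join query. Moreover $\pi^+$ is an ordering of all variables of $Q^+$, which is exactly the setting \Cref{thm:bringmann-et-al} requires.

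Next we bound the size of $D'$. It is produced in linear time, so $|D'|=O(|D|)$. We then run the algorithm of \Cref{thm:bringmann-et-al} on $(Q^+,\pi^+)$ with input $D'$. Its preprocessing takes $O(|D'|^{\iota})=O(|D|^{\iota})$ time, because constant factors inside the base only affect the constant since $\iota$ is a fixed number. Its access time is $O(\log|D'|)=O(\log|D|)$. The linear-time construction of $D'$ is absorbed into the preprocessing, since $\iota\ge 1$ whenever the query has at least one variable: some bag is nonempty and needs a cover of value at least $1$.

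To answer a query for index $j$, we ask the $(Q^+,\pi^+)$ structure for its $j$th answer $a$ and return $\tau(a)$. This takes constant extra time. For correctness, $\tau$ is a bijection from $Q^+(D')$ to $Q(D)$ that preserves the orders $\preceq_{\pi^+}$ and $\preceq_\pi$, so it maps the $j$th answer in $\pi^+$-order to the $j$th answer in $\pi$-order. Out-of-bounds indices coincide because the two answer sets have equal size.

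The only delicate point is checking that $|D'|$ is linear in $|D|$, so that the exponent transfers. That the reduction is exact and order-preserving is already given by \Cref{thm:easy-extension}.
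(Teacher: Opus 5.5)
Your proposal is correct and follows the paper's own route: apply the exact reduction of \Cref{thm:easy-extension} to get $(Q^+,\pi^+,\emptyset)$, then apply the upper bound of \Cref{thm:bringmann-et-al}. Your additional checks are the routine details the paper leaves implicit: that $Q^+$ is again a join query over $\var(Q)$, that $|D'|=O(|D|)$, and that $\iota\ge 1$ so the linear-time construction is absorbed.
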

This algorithm is obtained by applying the reduction from \Cref{thm:easy-extension}, and then the algorithm from \Cref{thm:bringmann-et-al}.

Is this the most efficient way of solving $(Q,\pi,\Delta)$?
In case $\Delta$ contains only unary FDs, we answer this question positively using an exact reduction in the opposite direction in Section~\ref{sec:extension-unary-FDs}; we answer it negatively for general FDs in Section~\ref{sec:challenges}.

\subsection{Unary FDs}\label{sec:extension-unary-FDs}

\begin{toappendix}
    
\begin{lemma}\label{lem:unary-eliminate-fds}
    Given a self-join-free CQ $Q$, a variable order $\pi$, and a set of unary FDs $\Delta$, there is an exact reduction from $(Q^+,\pi^+,\emptyset)$ to $(Q^+,\pi^+,\Delta^+)$.
\end{lemma}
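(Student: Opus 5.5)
Given a database $D$ for $Q^+$ that need not satisfy $\Delta^+$, we build in linear time a database $D'$ that satisfies $\Delta^+$, keeps the same answers in the same order, and has a constant-time bijection. We follow the approach of Carmeli and Kr\"oll~\cite{carmeli2020enumeration} for unary FDs and re-encode values so that the FDs hold by construction. Since every FD in $\Delta$ is unary, every FD in $\Delta^+$ has the form $R_i: x \rightarrow z$. Consider the closure of $\Delta$ under transitivity, and let $\mathrm{dep}(x)$ be the set of variables $z$ with $x \cimply z$.

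\textbf{Construction.} For every variable $x$ of $Q^+$ whose implied set is non-trivial, we replace each value $c$ assigned to $x$ by a tuple that encodes $c$ together with the values of the variables $x$ implies. Concretely, in each atom $R_i(X_i)$ of $Q^+$ we map a fact $t$ to a fact in which the value at position $z$ becomes the pair $(t[Y], t[z])$ whenever $R_i: Y\rightarrow z$ is in $\Delta^+$. Here $t[Y]$ is available because the extension step guarantees $Y\subseteq X_i$. Nested FDs are handled by processing variables in the order in which the extension procedure added them, so the encoding of $z$ contains the already-encoded value of $Y$. With $Y=\{y\}$ unary, each encoded value of $z$ syntactically determines the value of $y$ it was paired with. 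Two facts of $R_i$ that agree on $y$ may still disagree on $z$, so we must instead encode $y$ as $(t[y], t[z])$; then equal encoded $y$-values force equal $z$-values. This is the standard trick, and it is exactly where unarity matters: the determinant is a single column, so we can fuse the determined value into it.

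\textbf{Preservation of answers and order.} Joins must be preserved. For this we check that, whenever $x$ occurs in two atoms, both occurrences are extended with the same dependent variables. This holds because the fixpoint in Definition~\ref{def:reordered-ext} adds $z$ to every atom containing $x$, and by self-join-freeness each atom has its own relation. As a result, a homomorphism into $D'$ corresponds exactly to one into $D$ via projecting to first components. That projection is the constant-time bijection $\tau$. To keep order preserved under $\preceq_{\pi^+}$, we compare encoded values lexicographically by original value first, that is, we order pairs $(c,\cdot)$ by $c$ and then by the second component. In $\pi^+$ each implied $z$ appears after its determinant $x$, so the refined comparison on $x$ agrees with the original comparison on $(x, z)$ in $\pi^+$. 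Finally, to stay in the RAM model, we can rename the pairs to integers in linear time by sorting while respecting this order.

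\textbf{Main obstacle.} The hardest step is showing the encoding is globally consistent. The encoded value of $x$ must be identical in all atoms that contain $x$, which requires all of them to contain the same set $\mathrm{dep}(x)$. This holds by the fixpoint for variables in $X_i$. For chains $x\rightarrow y\rightarrow z$ it must be checked carefully that the closure is reached in every atom. I would prove by induction on the extension steps that every atom containing $x$ contains all of $\mathrm{dep}(x)$, and then define the encoding of $x$ as the tuple of values of $\mathrm{dep}(x)\cup\{x\}$ in a fixed order consistent with $\pi^+$.
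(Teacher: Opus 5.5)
Your final construction (replace each value of $x$ by the tuple of values of $\mathrm{dep}(x)\cup\{x\}$ in $\pi^+$ order, compare these tuples lexicographically, and let $\tau$ read off each variable's own component) is exactly the paper's proof, so the argument is correct once the discarded pairing of $z$ with $t[Y]$ is removed. For CQs with projections, also note, as the paper does, that $\mathrm{dep}(v)$ consists of free variables whenever $v$ is free (by the head extension); this is what makes $\tau$ a bijection on answers and not just on homomorphisms.

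One step to tighten (the paper is equally brief here): order preservation does not follow merely from $z$ appearing after $x$ in $\pi^+$, because variables not implied by $x$ may lie between them. The real reason is that, with unary FDs, every variable in a block of the $\Delta$-reordering is transitively implied by the independent variable opening that block. Hence the first variable whose encoded values differ already compares the first original variable on which the two answers differ.
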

\begin{proof}
    We follow a similar construction to that of \cite[Lemma 8.6]{carmeli2023linearp} and show that it preserves the lexicographic order $\pi^+$.
    Given a database $D$ that is not required to conform to any FDs, we construct a database $D^+$ conforming to $\Delta^+$ as follows.
    
    Given a variable $v$, denote by $\vec{imp}_v$ the sequence of all variables that are transitively implied by $v$ in the order in which they appear in $\pi^+$ 
    (if $v$ does not appear in $\pi$, take an arbitrary order).
    By the definition of the extension, for every variable $v$, every atom of $Q^+$ that contains $v$ also contains all $\vec{imp}_v$ variables.
    For every fact in the relation that corresponds to such an atom, 
    we replace the $v$-value with the concatenation of $\vec{imp}_v$-values. 
    These concatenated domain values are assumed to be sorted lexicographically according to $\vec{imp}_v$.
    This construction can be done in linear time, and the resulting database satisfies the FDs.
    
    We now claim that this construction preserves the answers through a bijection.
    Note that for every free variable $v$, by the definition of the extension, the variables in $\vec{imp}_v$ are all free.
    Every answer to the original problem gives an answer over our construction by assigning every free variable $v$ with the concatenation of the assignments of $\vec{imp}_v$.
    Every answer over our construction gives an answer to the original problem by keeping only the value that corresponds to the original variable for every free variable.

    Since $\pi^+$ contains right after $v$ all variables of $\vec{imp}_v$ that do not appear before $v$, the answer order is preserved.
\end{proof}

\end{toappendix}

In case $\Delta$ contains only unary FDs, we can show an exact reduction from $(Q^+,\pi^+,\emptyset)$ to $(Q^+,\pi^+,\Delta^+)$. Let us demonstrate this reduction using an example.    
Consider $\Delta=\{x_1\rightarrow x_3\}$, the join query with the body $R(x_1,x_3),S(x_3,x_2)$ which is the same as its extension, and the order  $\pi=(x_1,x_2,x_3)$ which becomes $\pi^+=(x_1,x_3,x_2)$.
For every fact $R(c_1,c_3)$, the construction will introduce the fact $R((c_1,c_3),c_3)$, whereas the facts in $S$ stay the same.
An answer $(c_1,c_3,c_2)$ to the original instance corresponds to an answer $((c_1,c_3),c_3,c_2)$ to the constructed instance. The answers we obtain conform to the order $\pi^+$. We remark that we would not be able to have a similar construction with the order $\pi$ because we would need the answers to be sorted by $c_2$ before they are sorted by $c_3$, but $c_3$ appears as part of the concatenated value of $x_1$ while $c_2$ does not.

Since $(Q^+,\pi^+,\Delta^+)$ and $(Q^+,\pi,\Delta^+)$ represent the same problem, and since Carmeli and Kr\"oll~\cite{carmeli2020enumeration} provide a reduction from $(Q^+,\pi,\Delta^+)$ to $(Q,\pi,\Delta)$, we get the following.

\begin{theoremrep}\label{thm:unary-exact}
    Given a self-join-free CQ $Q$, a variable order $\pi$, and a set of unary FDs $\Delta$, there are exact reductions in both directions between $(Q,\pi,\Delta)$ and $(Q^+,\pi^+,\emptyset)$.
\end{theoremrep}
\begin{appendixproof}
    The reduction from $(Q,\pi,\Delta)$ to $(Q^+,\pi^+,\emptyset)$ is given by \Cref{thm:easy-extension}.
    For the other direction, \Cref{lem:unary-eliminate-fds} reduces $(Q^+,\pi^+,\emptyset)$ to $(Q^+,\pi^+,\Delta^+)$, \Cref{lem:reordering-validity} shows that $(Q^+,\pi^+,\Delta^+)$ and $(Q^+,\pi,\Delta^+)$ represent the same problem, and \Cref{lem:ext-to-orig} reduces $(Q^+,\pi,\Delta^+)$ to $(Q,\pi,\Delta)$.
\end{appendixproof}

\Cref{thm:unary-exact} implies that, in case of unary FDs, the complexity of lexicographic direct access (as well as any other problem closed under exact reductions, e.g., lexicographically-ordered enumeration) is the same as the complexity for the reordered extension without FDs, as long as the preprocessing is at least linear.
As the complexity of lexicographic direct access for join queries is well understood, see Theorem~\ref{thm:bringmann-et-al}, we can deduce a complexity result in the presence of unary FDs.

\begin{corollary}\label{cor:unary-inc}
    Given a self-join-free join query $Q$, an ordering $\pi$ of its variables, and a set of unary FDs $\Delta$, let~$\iota$ be the incompatibility number of their reordered extension $Q^+$ and $\pi^+$.
		\begin{itemize}
		  \item Lexicographic direct access for $(Q,\pi,\Delta)$ is possible with $O(|D|^\iota)$ preprocessing time and $O(\log(|D|))$ access time.
		  \item There is no constant $\varepsilon > 0$ such that for all $\delta > 0$ Lexicographic direct access for $(Q,\pi,\Delta)$ is possible with preprocessing time $O(|D|^{\iota-\varepsilon})$ and access time $O(|D|^\delta)$, assuming the Zero-Clique Conjecture.
		\end{itemize}
\end{corollary}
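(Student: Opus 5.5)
The plan is to derive both items of the corollary from \Cref{thm:unary-exact} together with \Cref{thm:bringmann-et-al} applied to the FD-free problem $(Q^+,\pi^+,\emptyset)$. First I check that the latter theorem applies. Since $Q$ is a join query, all variables are free. The extension only adds variables that already occur in $Q$ to atoms and to the head, so $Q^+$ is again a join query with $\var(Q^+)=\var(Q)$, and it remains self-join-free. Moreover, $\pi^+$ is an ordering of all these variables, because a $\Delta$-reordering of an ordering of all variables contains each variable exactly once. Hence $\iota=\iota(Q^+,\pi^+)$ is well defined, and \Cref{thm:bringmann-et-al} applies to $(Q^+,\pi^+)$.

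\emph{Upper bound.} This is exactly \Cref{cor:ext-alg}. Given $D$ satisfying $\Delta$, I build $D'$ in linear time using the exact reduction of \Cref{thm:easy-extension}; note that $|D'|=O(|D|)$, since the construction runs in linear time. I then run the $O(|D'|^{\iota})$ preprocessing of \Cref{thm:bringmann-et-al} on $D'$. To answer an access request for index $j$, I fetch the $j$-th answer of $Q^+(D')$ in order $\pi^+$ in $O(\log |D|)$ time and apply the constant-time bijection $\tau$. Because $\tau$ preserves the order, this yields the $j$-th answer of $Q(D)$ in order $\pi$, and an out-of-bounds request maps to an out-of-bounds error since $\tau$ is a bijection.

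\emph{Lower bound.} I argue by contraposition using the reverse reduction from \Cref{thm:unary-exact}, which is the one place the unary assumption is needed. Suppose that for some $\varepsilon>0$ and every $\delta>0$ there is an algorithm for $(Q,\pi,\Delta)$ with preprocessing $O(|D|^{\iota-\varepsilon})$ and access $O(|D|^{\delta})$. Take an arbitrary database $D'$ for $Q^+$, with no FDs required. Reduce it in linear time to a database $D$ satisfying $\Delta$ with $|D|=O(|D'|)$, and run the assumed algorithm on $D$. To answer the $j$-th access for $(Q^+,\pi^+)$, query index $j$ on $D$ and map back through the order-preserving constant-time bijection. This gives preprocessing $O(|D'|)+O(|D'|^{\iota-\varepsilon})$. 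Since $\iota\ge 1$ whenever $Q$ has at least one variable, choosing $\varepsilon<\iota-1$ if needed (or simply noting that the bound is stated for small $\varepsilon$) makes this $O(|D'|^{\iota-\varepsilon'})$ for some $\varepsilon'>0$, with access time $O(|D'|^{\delta})$ for every $\delta$. This contradicts the second item of \Cref{thm:bringmann-et-al} for $(Q^+,\pi^+)$ under the Zero-Clique Conjecture.

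The main subtlety I expect is bookkeeping rather than mathematics. I need to confirm that the databases produced by the reductions have linear size; this holds because each construction runs in linear time. I also need to handle the additive linear term when $\iota-\varepsilon<1$. If $\iota=1$, the lower bound is vacuous, since sublinear preprocessing is impossible anyway: the algorithm must read the input. I would note this edge case explicitly, and otherwise shrink $\varepsilon$ so that $\iota-\varepsilon\ge 1$.
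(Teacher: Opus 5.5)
Your proposal is correct and follows the paper's route: the paper also obtains the corollary by combining the two exact reductions of \Cref{thm:unary-exact} with \Cref{thm:bringmann-et-al} applied to $(Q^+,\pi^+)$, and, as you note, the reverse reduction is the only place where unariness is used. Your explicit handling of the additive linear reduction cost goes beyond what the paper writes, which only alludes to it with ``as long as the preprocessing is at least linear''; for $\iota>1$ you shrink $\varepsilon$, and for $\iota=1$ the input-reading argument works if you state it precisely: preprocessing plus one access with $\delta<1$ takes sublinear total time, which an adversary database satisfying $\Delta$ rules out.
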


\subsection{Challenges with general FDs}\label{sec:challenges}

Let us notice that we cannot have a similar construction to that of Section~\ref{sec:extension-unary-FDs} for general FDs. Consider the FD $y_1,y_2\rightarrow z$ and the join query with the body $R(y_1,x_1),S(y_1,y_2,z),T(y_2,x_2)$, which is equal to its extension. A similar construction would copy the $z$ value to $y_1$ or $y_2$, but since $z$ does not appear in all atoms where $y_1$ and $y_2$ appear, such a construction cannot be done in linear time and would not constitute an exact reduction.
In fact, this is not just a problem with the specific construction: we will see in \Cref{ex:controller-query} that \Cref{cor:unary-inc} does not hold for general FDs.
That is, reordered extensions do not capture the direct access complexity in general.

A natural question is whether reordered extensions capture the complexity in the restricted case of acyclic queries (and general FDs).
We remark that, for enumeration with constant delay after linear preprocessing, $\Delta$-extensions do capture all tractable cases for acyclic CQs, as they do for unary FDs, even though the general case is open~\cite{carmeli2020enumeration}.
One could expect that a proof in the spirit of that used for enumeration would also apply here. However, we found that this proof does not generalize to our case, and in fact, we answer this question negatively, as shown by the following acyclic example.

\begin{example}\label{ex:controller-query}
Consider $Q(v_1, v_2, v_3, v_4, v_5) \leftarrow R_1(v_1,v_4),R_3(v_4,v_3,v_5),R_2(v_5,v_2)$, with the FD $v_4,v_5\rightarrow v_3$, and the variable order $\pi=(v_1,v_2,v_3,v_4,v_5)$.

The reordered extension is the same as the original problem.
The incompatibility number is $\iota(Q,\pi)=\rho^*(B_3) = 3$ because $B_3=\{v_1,v_2,v_3\}$. Thus, the optimal preprocessing time required for direct access with polylog access time for the reordered extension $(Q^+,\pi^+,\emptyset)$ is $\Theta(|D|^3)$ up to subpolynomial factors by Theorem~\ref{thm:bringmann-et-al}.
However, the FDs reduce the complexity in this case, and there is an algorithm for direct access for $(Q^+,\pi^+,\Delta^+)=(Q,\pi,\Delta)$ with preprocessing time $O(|D|^{2})$, as we explain next.

Consider the following algorithm.
First, build a lookup table according to the FD in $R_3$. That is, given a pair $(a_4,a_5)$, the lookup table searches for a value $a_3$ such that $(a_4,a_3,a_5)\in R_3$. Note that, due to the FD, at most one such value exists. 
Then, for every $(a_1,a_4) \in R_1$ and $(a_5,a_2)\in R_2$, search for $(a_4,a_5)$ in the lookup table, and if a value $a_3$ is found, store $(a_1,a_2,a_3,a_4,a_5)$ as an answer to $Q$.
Building the lookup table takes linear time, and it is used at most $|D|^2$ times, with a constant time for each lookup.
Overall, the algorithm takes $O(|D|^2)$ time to produce all the answers. We can now sort all answers, store them in an array, and support direct access with constant access time.

The preprocessing time $O(|D|^2)$ is tight for our example (up to sub-polynomial factors) since we can use our query to solve $\qstar_2$, as we sketch next.
For every fact $R_1^\star(a,c)$, introduce the facts $R_1(a,c)$ and $R_3(c,c,c)$. For every fact $R_2^\star(b,c)$, introduce the fact $R_2(c,b)$. This construction satisfies the FD as $v_3$, $v_4$, and $v_5$ are always equal in $R_3$. As this is a linear-time exact reduction, direct access to our example cannot be faster than for $\qstar_2$, which, by \Cref{thm:star-hard}, requires essentially quadratic preprocessing.
The next two sections devise general algorithms and lower bounds that correctly capture the complexity of this example.
\end{example}

\section{An Algorithm Based on the Polymatroid Bound}\label{sct:algorithm}

In this section, we will show our main algorithmic result, an algorithm for direct access that uses PANDA. To this end, we define a generic way of defining width measures based on disruption-free decompositions and specialize this framework to apply the polymatroid bound. We then give an algorithm whose runtime is determined by the resulting measure. Finally, we show that this algorithm should be applied on a $\Delta$-reordering to get better guarantees.

\subsection{Width Measures for Disruption-Free Decompositions}\label{sec:abstractwidth}

This section introduces a width measure based on the polymatroid bound that captures the complexity we obtain when using PANDA over a decomposition for the preprocessing for direct access. We start in a slightly more abstract way, introducing a generic width measure which we then reuse in Section~\ref{sec:lower} for lower bounds. 

The underlying idea is the same as in~\cite{bringmann2024stars}: we want to transform the input query into an acyclic query compatible with the given variable order (i.e., the incompatibility number is $1$) and construct a matching database, so we can then simply run a direct access algorithm from~\cite{carmeli2023linearp}. The new query has atoms whose variable sets are the bags of the disruption-free decomposition, and the complexity bottleneck lies in materializing relations for these atoms. 
Without FDs, bounding this complexity by the AGM-bound of the bags using a worst-case optimal join algorithm yields an optimal approach~\cite{bringmann2024stars}. For the case with FDs, there are generally no known tight bounds on join sizes and no worst-case optimal join algorithms, so we introduce a generic way of measuring the size of a bag which we then specialize for different applications. We now explain the technical details.

\begin{definition}\label{def:fwidth}
Let $\pi$ be a variable order of a join query $Q(X)$, and let $f:2^X\rightarrow \mathbb R_+$. We define the \emph{disruption-free $f$-width} of $Q$ and $\pi$ as $\width{f}(Q, \pi):=\max_{i\in n} f(B_i)$, where $\{X, \{B_1, \ldots, B_n\}\}$ is the disruption free decomposition of $Q$ according to $\pi$.
\end{definition}
We remark that this essentially the same as the definition of $f$-width proposed by Adler~\cite{Adler06}.
Notice that $\width{\fec{Q}}(Q,\pi)$ is the incompatibility number of $Q$ and $\pi$, as defined in~\Cref{sec:prelim}.
Thus, the algorithm from \Cref{cor:ext-alg} runs with preprocessing time $O(|D|^{\width{\fec{Q^+}}(Q^+,\pi^+)})$.

In this section, we will mostly be interested in the case where the function $f$ from \Cref{def:fwidth} is the exponent of the polymatroid bound, in which case we call the resulting width measure  $\width{\pb{Q}{\Delta}}(Q, \pi)$ the \emph{disruption-free polymatroid bound} of $Q$, $\Delta$ and $\pi$.

\begin{example}\label{ex:running-polymatroid}
Consider \Cref{ex:controller-query} again. We claim that~$\width{\pb{Q}{\Delta}}(Q, \pi)=2$. Set $X:= \{v_1, v_2, v_3, v_4, v_5\}$. First note that $B_3 = \{v_1, v_2, v_3\}$. It is readily checked that the set function defined by $h(S):= |S\cap \{v_1, v_2\}|$ is a polymatroid, guarded by $Q$, and respecting the FD $v_4, v_5 \rightarrow v_3$. Moreover, $h(B_3) = 2$, so $\width{\pb{Q}{\Delta}}(Q, \pi) \ge h(B_3) \ge 2$.

Next, consider any polymatroid $h$ guarded by $Q$ and respecting the FD. We will show that $h(B_i)\le 2$ for every bag $B_i$, which proves that $\width{\pb{Q}{\Delta}}(Q, \pi)\le 2$. First, by submodularity, we get $h(\{v_3, v_4, v_5)\}) + h(X\setminus \{v_3\}) \ge h(X)+ h(\{v_4, v_5\})$.
Since $h$ respects the FD, $h(\{v_4, v_5\}) = h(\{v_3, v_4, v_5\})$.
So overall, $h(X\setminus \{v_3\}) \ge h(X)$.
Let $B_i$ be any bag of the disruption-free decomposition, then
    $h(B_i) \le h(X) \le h(X\setminus \{v_3\}) \le h(\{v_1, v_4\}) + h(\{v_5, v_2\})\le 2$,
where the first inequality is by monotonicity, the third by submodularity, and the fourth because $h$ is guarded by $Q$.
Overall, we get that $\width{\pb{Q}{\Delta}}(Q, \pi)= 2$.
\end{example}

\subsection{The Algorithm}\label{sec:algorithm}

We get the following runtime bounds based on the disruption-free polymatroid bound.

\begin{theorem}\label{thm:generalupper}
Let $Q(X)$ be a join query, $\pi$ a variable order, and $\Delta$ a set of FDs. Then there is an algorithm that, given a database $D$ that respects $\Delta$, allows direct access on $Q(D)$ with preprocessing time $O(|D|^{\width{\pb{Q}{\Delta}}(Q, \pi)}\polylog(|D|))$ and access time $O(\polylog(|D|))$.\footnote{A very recent preprint~\cite{AboKhamisNS25} offers an improved version of PANDA with a tighter output bound. Using it would improve the $\polylog(|D|)$ factors in both the preprocessing and the access time to $\log(D)$.}
\end{theorem}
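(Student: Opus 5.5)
The plan is to materialize, for every bag $B_i$ of the disruption-free decomposition of $Q$ according to $\pi$, a relation $R_{B_i}$ over the variables $B_i$. We then run the known linear-preprocessing direct access algorithm for the join query $Q'$ whose atoms are $R_{B_1}(B_1),\ldots,R_{B_n}(B_n)$. As recalled in \Cref{sec:prelim}, this query admits lexicographic direct access according to $\pi$ with preprocessing linear in its database size and logarithmic access time~\cite{bringmann2024stars,carmeli2023linearp}.

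For each bag $B_i$ we consider the CQ $Q_i(B_i)$ with the same body as $Q$, i.e., $Q$ projected onto $B_i$, together with the FDs $\Delta$. We run PANDA on $Q_i$ with input $D$. This produces a relation $T_i \supseteq Q_i(D)$ in time $\polylog(|D|)\,|D|^{\pb{Q}{\Delta}(B_i)}$. That is, at most $\polylog(|D|)\,|D|^{\width{\pb{Q}{\Delta}}(Q,\pi)}$ per bag, and there are $n=|X|$ bags, which is a constant.

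The main obstacle is that $T_i$ may be a strict superset of the projection $\pi_{B_i}Q(D)$. We must ensure that the join of the $T_i$ is still exactly $Q(D)$.

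\begin{itemize}
\item \textbf{Soundness.} Every atom $R_j(X_j)$ of $Q$ is contained in the bag created by its $\pi$-largest variable. We therefore semijoin-filter each $T_i$ with every atom whose variables lie inside $B_i$, and we also add the original atoms of $Q$ as atoms of $Q'$. Adding these atoms does not harm compatibility, as the footnote in \Cref{sec:prelim} notes. Then any tuple of $\Join_i T_i \Join \Join_j R_j$ satisfies every atom of $Q$, so the join is contained in $Q(D)$.
\item \textbf{Completeness.} Each $T_i$ contains $\pi_{B_i}Q(D)$, so every answer of $Q$ survives the join.
\end{itemize}

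Hence $Q'(D')=Q(D)$, where $D'$ consists of the filtered $T_i$ and the original relations. The size of $D'$ is $O(\polylog(|D|)\,|D|^{\width{\pb{Q}{\Delta}}(Q,\pi)})$. Building the filters takes time quasilinear in this size, using sorting or hashing.

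Finally, we apply the direct access algorithm for $Q'$ on $D'$, since $Q'$ has incompatibility number $1$ with respect to $\pi$. Its preprocessing is linear in $|D'|$, which gives the claimed $O(|D|^{\width{\pb{Q}{\Delta}}(Q,\pi)}\polylog(|D|))$ bound. Its access time is $O(\log|D'|)=O(\polylog(|D|))$, since $\log|D'|=O(\log|D|)$. We do not need $D'$ to satisfy $\Delta$, because the cited algorithm works without FDs.

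The one delicate point to double-check is that the decomposition's running-intersection/compatibility guarantee still holds for $Q'$ once the original atoms are added. The footnote in \Cref{sec:prelim} asserts this, since every original atom is contained in a bag.
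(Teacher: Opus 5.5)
Your proposal is correct and matches the paper's proof: run PANDA on each bag query $Q_i(B_i)$ of the disruption-free decomposition, filter each result by the atoms of $Q$ contained in that bag, and apply the linear-preprocessing direct access algorithm for the bag query. The only difference is that you also add the original atoms to $Q'$. This is harmless but redundant, because the filtering step already gives soundness; that is all the paper uses.
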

\begin{proof}
Denote by $Q'$ the join query represented by the disruption-free decomposition of $Q$ according to $\pi$ (containing an atom for every bag).
We will show how to build a database $D'$ such that $Q(D)=Q'(D')$ in $O(|D|^{{\width{\pb{Q}{\Delta}}(Q, \pi)}}\polylog(|D|))$ time. 
As discussed in the preliminaries, we can then use a direct access algorithm from \cite{carmeli2023linearp} for $Q'(D')$ with linear preprocessing and logarithmic access time.
This acts as a direct access algorithm for $Q(D)$ with the required time bounds.

Consider a bag $B_i$ of the decomposition, and consider a CQ $Q_i(B_i)$ representing this bag, with the same body as $Q$ and the bag variables as the free variables.
We can compute a relation $S_i \supseteq Q_i(D)$ using PANDA in time $|D|^{\pb{Q}{\Delta}(B_i)}\polylog(|D|)$. This fits within the required time because $\pb{Q}{\Delta}(B_i)\le {{\width{\pb{Q}{\Delta}}(Q, \pi)}}$.
We then filter $S_i$ by all atoms of $Q$ that are contained in the bag. That is, if $R_i(X_i)$ is an atom of $Q$ with $X_i\subseteq B_i$, remove from $S_i$ all facts that, projected to $X_i$, do not appear in $R_i^D$.
Denote by $D'$ the obtained database, with a relation $S_i$ for each bag $B_i$ of the decomposition. 

We have that $Q(D)\subseteq Q'(D')$ because PANDA gives a superset of the answers $Q(D)$ projected to the bag variables, and the filtering step does not remove answers of $Q$. We have that $Q(D)\supseteq Q'(D')$ because the relations comprising $D'$ are filtered by all atoms of $Q$ (since every atom of $Q$ is contained in some bag). 
\end{proof}

\begin{example}
\Cref{thm:generalupper} provides an algorithm for \Cref{ex:controller-query} with quadratic preprocessing since $\width{\pb{Q}{\Delta}}(Q, \pi)=2$, as seen in \Cref{ex:running-polymatroid}.
\end{example}

\subsection{The Effect of Reorderings}

In~\Cref{sec:algorithm}, we have seen a direct access algorithm that is exponential in~$\width{\pb{Q}{\Delta}}(Q, \pi)$.  From \Cref{lem:reordering-validity}, we know that we can reorder the variables specifying a desired lexicographic order in certain ways while maintaining the same answer order. The obvious question is then whether we should always do so, or whether leaving the variable order unchanged can yield faster preprocessing. This section shows that $\Delta$-reorderings may decrease the disruption-free polymatroid bound, they never increase it, and every $\Delta$-reordering yields the same bound. Thus, it is always a good idea to replace a variable order with an arbitrary $\Delta$-reordering before applying the algorithm of \Cref{thm:generalupper}. 

\begin{proposition}\label{prop:improvementpolymatroid}
Let $Q$ be a join query, $\pi$ a variable order, and $\Delta$ a set of FDs.
If $\pi'$ and $\pi''$ are $\Delta$-reorderings of $\pi$, then $\width{\pb{Q}{\Delta}}(Q, \pi'')=\width{\pb{Q}{\Delta}}(Q, \pi') \le \width{\pb{Q}{\Delta}}(Q, \pi)$.
\end{proposition}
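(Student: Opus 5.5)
The plan is to compare the bags of the decompositions for $\pi$ and for a $\Delta$-reordering $\pi'$ through the \emph{closure} of a set under $\Delta$. Write $\mathrm{cl}(S)$ for the set of variables $z$ with $S\cimply \{z\}$. The first, easy step is to show that $\pb{Q}{\Delta}(S)=\pb{Q}{\Delta}(\mathrm{cl}(S))$ for every $S$. Take any polymatroid $h$ guarded by $Q$ that respects $\Delta$, and an FD $Y\rightarrow z$ with $Y\subseteq T$. Submodularity together with $h(Y\cup\{z\})=h(Y)$ gives $h(T\cup\{z\})\le h(T)+h(Y\cup\{z\})-h(Y)=h(T)$. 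Iterating this along the derivation of the closure shows $h(\mathrm{cl}(S))=h(S)$, and taking the maximum over all such $h$ gives the claim.

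With this in hand, the inequality reduces to a combinatorial statement. It suffices to show that for every bag $B'$ of the disruption-free decomposition according to $\pi'$ there is a bag $B$ of the decomposition according to $\pi$ with $B'\subseteq \mathrm{cl}(B)$. Monotonicity and the first step then give $\pb{Q}{\Delta}(B')\le \pb{Q}{\Delta}(\mathrm{cl}(B))=\pb{Q}{\Delta}(B)$, and maximizing over bags yields $\width{\pb{Q}{\Delta}}(Q,\pi')\le\width{\pb{Q}{\Delta}}(Q,\pi)$. To prove the containment I would use the standard path characterization of bags: $u\in B_v$ iff $u=v$, or $u$ precedes $v$ and there is a path from $v$ to $u$ whose internal vertices all come after $v$.

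For each variable $x$ I would define its \emph{trigger} $t(x)$: the variable of $\pi$ whose processing caused $x$ to be appended to $\pi'$. Then $t(x)\le_\pi x$, triggers are nondecreasing along $\pi'$, and $x\in\mathrm{cl}(\{y\mid y\le_\pi t(x)\})$. Fix a bag $B'_w$ of $\pi'$ and let $C$ be the component of $w$ in the $\pi'$-suffix starting at $w$. Every vertex of $C$ has trigger at least $t(w)$, so its $\pi$-position is at least $t(w)$. I would let $z$ be the $\pi$-earliest vertex of $C$. Then $C$ lies in the component of $z$ in the $\pi$-suffix from $z$, so every neighbor $u$ of $C$ that is $\pi$-before $z$ lies in $B_z$.

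The main obstacle is the remaining neighbors $u$: those with $u$ $\pi$-after $z$ but $\pi'$-before $w$. Such $u$ were pulled forward by an FD. I would try to show, by induction along the derivation of $u$ from its trigger prefix, that each FD premise used either lies in $C$'s neighborhood before $z$ (hence in $B_z$) or is itself such a pulled-forward variable. This would give $u\in\mathrm{cl}(B_z)$. If this fails, I would change the choice of $z$, taking instead the bag of $\pi$ created at the trigger $t(w)$ and working with the whole trigger group.

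For the equality between two $\Delta$-reorderings, I would observe that $\pi'$ and $\pi''$ differ only in the order within trigger groups. Any such arrangement is again produced by the reordering procedure, so $\pi''$ is a $\Delta$-reordering of $\pi'$ and vice versa. Applying the inequality in both directions then gives $\width{\pb{Q}{\Delta}}(Q,\pi'')=\width{\pb{Q}{\Delta}}(Q,\pi')$.
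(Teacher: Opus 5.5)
There is a gap at exactly the step you flag as the main obstacle. Your closure-invariance step, the reduction to showing $B'\subseteq\mathrm{cl}(B)$, and the equality via mutual reorderings are all sound, and the route differs from the paper's. But the containment itself is left as "I would try \ldots\ if this fails \ldots", and the invariant you propose for it does not close. The premises $Y$ of the FD $Y\rightarrow u$ that pulled $u$ forward are neighbors of $u$, not of $C$.

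Take atoms $R(y,u)$ with $R: y\rightarrow u$ and $A(c,u)$, and $\pi=(y,c,u)$, so $\pi'=(y,u,c)$. For $w=c$ you get $C=\{c\}$, $z=c$ and $B'_c=\{u,c\}$, and $u$ is pulled forward. Its premise $y$ is $\pi$-before $z$, has no neighbor in $C$, and is not pulled forward, so your dichotomy misses it. It does lie in $B_c=\{y,c\}$, but for a reason your invariant does not record. You also never show $w\in\mathrm{cl}(B_z)$ when $w\neq z$, although $w$ is then pulled forward too.

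The missing idea is to argue about $S_z$, the component of $z$ in the $\pi$-suffix from $z$, instead of $C$. Here $C\subseteq S_z$ and $B_z=\{z\}\cup\{v<_\pi z\mid v\in N(S_z)\}$.

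\emph{Claim:} every $x\in S_z$ with $x<_{\pi'}z$ lies in $\mathrm{cl}(B_z)$. Induct on the $\pi'$-position of $x$.
\begin{itemize}
\item Since triggers are nondecreasing along $\pi'$, $t(x)\le_\pi t(z)\le_\pi z<_\pi x$. So $x$ is not its own trigger, and it was appended by an FD $Y\rightarrow x$ whose premises all precede $x$ in $\pi'$.
\item Every FD lives inside an atom, so each $y\in Y$ is a neighbor of $x$.
\item If $y<_\pi z$, then $y\in N(S_z)$, hence $y\in B_z$.
\item Otherwise $y>_\pi z$ (note $y\neq z$, since $y<_{\pi'}x<_{\pi'}z$). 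Then $y\in S_z$, and induction gives $y\in\mathrm{cl}(B_z)$.
\end{itemize}
Hence $x\in\mathrm{cl}(B_z)$.

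Now $w$ is either $z$ or such an $x$. Let $u\in N(C)$ with $u<_{\pi'}w$; then $u\neq z$ because $z\in C$. If $u<_\pi z$, then $u\in B_z$. If $u>_\pi z$, then $u\in S_z$ and $u<_{\pi'}w\le_{\pi'}z$, so the claim applies. Thus $B'_w\subseteq\mathrm{cl}(B_z)$. Your first choice of $z$ works and no fallback is needed.

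With this filled in, your proof is a valid alternative to the paper's. The paper never compares $\pi$ and $\pi'$ globally. It moves one FD-implied variable a single position earlier, shows that only the two swapped bags change, and shows both lie in $B_i\cup\{v_{i+1}\}$ with the premises already in $B_i$. It then chains such swaps by a bubbling argument. Your version replaces that bookkeeping with an explicit map $w\mapsto z$ between bags. Since it uses only monotonicity and closure invariance (which $\Delta$-stability gives by iteration), it also covers the color number, like the paper's abstract version.
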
 

Let us first show that reordering can decrease $\width{\pb{Q}{\Delta}}(Q, \pi)$, improving the runtime of \Cref{thm:generalupper}. The following example shows that this runtime gain is unbounded.

\begin{example}
    Consider $Q^\star_k$ with $\pi= (x_1, \ldots, x_k, x_{k+1})$ and $\Delta=\{x_1 \rightarrow x_{k+1}\}$. We claim that $\width{\pb{Q_k^\star}{\Delta}}(Q_k^\star, \pi)\ge k-1$. To see this, note first that the set function defined by $h(S):= |S\cap \{v_2, \ldots, v_k\}|$ is a polymatroid, respecting the FD, and guarded by $Q^\star_k$. Moreover, we have that $B_{k+1} = \{v_1, \ldots, v_{k+1}\}$, and $h(B_{k+1}) = k-1$. So, $\width{\pb{Q_k^\star}{\Delta}}(Q_k^\star, \pi)\ge \pb{Q_k^\star}{\Delta}(B_{k+1})\ge k-1$.
    After reordering to $\pi'=(x_1, x_{k+1}, x_2, \ldots, x_k)$, we get that, other than $B_1=\{x_1\}$, all bags are of the form $\{x_i, x_{k+1}\}$. Since $Q^\star_k$ contains the atoms $R_i^\star(x_i, x_{k+1})$, any guarded polymatroid has $h(\{x_i,x_{k+1}\})\le 1$, and it follows that $\width{\pb{Q_k^\star}{\Delta}}(Q_k^\star, \pi')\le 1$. So, the preprocessing time from \Cref{thm:generalupper} decreases from roughly $|D|^{k-1}$ to quasilinear due to $\Delta$-reordering.
\end{example}

Instead of proving Proposition~\ref{prop:improvementpolymatroid} directly, we show a more abstract version that is also useful for \Cref{sec:lower}. To this end, we define another property of set functions.
Let $Q(X)$ be a join query and $\Delta$ a set of FDs. We call a set function $f:2^X\rightarrow \mathbb R_+$ \emph{$\Delta$-stable} if for every FD $Y\rightarrow z$ in $\Delta$ and every set $Y\subseteq S\subseteq X$, we have that $f(S) = f(S\cup \{z\})$.
Since $\pb{Q}{\Delta}$ is $\Delta$-stable, all properties that we show for general $f$-stable functions hold for $\pb{Q}{\Delta}$.
\begin{lemmarep}\label{lem:polymatroidstable}
Let $Q(X)$ be a join query and let $\Delta$ be a set of FDs.  Then, $\pb{Q}{\Delta}$ is $\Delta$-stable.
\end{lemmarep}
\begin{appendixproof}
Consider an FD $Y\rightarrow z$ in $\Delta$ and a set $Y\subseteq S\subseteq X$.
It is known that for any polymatroid $h:2^V \rightarrow \mathbb R_+$, all sets $A\subseteq B\subseteq V$ and $v\in V$, we have $h(A\cup \{v\})-h(A) \ge h(B\cup \{v\}) - h(B)$~\cite[Chapter~44]{Schrijver03}. Setting $A=Y, B=S$ and $v=z$, it follows that $h(Y\cup \{z\}) - h(Y) \ge h(S\cup \{z\}) - h(S)$. By monotonicity, $h(S\cup \{z\}) - h(S)\ge 0$.
If $h$ also respects the FD $Y\rightarrow z$, we have that $h(Y\cup \{z\}) - h(Y) = 0$, and thus $h(S\cup \{z\}) = h(S)$.
It follows that $\pb{Q}{\Delta}(S\cup \{z\}) = \pb{Q}{\Delta}(S)$.  
\end{appendixproof}

We can show that, for a $\Delta$-stable set function $f$, the $f$-width of a $\Delta$-reordering is at most that of the original order. We get that two $\Delta$-reorderings of the same order have the same $f$-width because they are $\Delta$-reorderings of each other.

\begin{proposition}\label{prop:general-reordering-equal-and-better}
Let $Q$ be a join query with FDs $\Delta$ and a variable order $\pi$. Let $\pi'$ and $\pi''$ be $\Delta$-reorderings of $\pi$. If $f:2^{\var(Q)}\rightarrow \mathbb R_+$ is monotone and $\Delta$-stable, then $\width{f}(Q, \pi'') = \width{f}(Q, \pi')\le \width{f}(Q, \pi)$.
\end{proposition}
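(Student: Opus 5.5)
The plan is to prove the inequality $\width{f}(Q,\pi')\le\width{f}(Q,\pi)$ for an arbitrary $\Delta$-reordering $\pi'$ of $\pi$, and then get the equality from it. For the equality I would show that any two $\Delta$-reorderings $\pi',\pi''$ of $\pi$ are $\Delta$-reorderings of each other. Both consist of the same sequence of \emph{blocks}: each block is a variable $u$ taken from $\pi$, followed by the variables newly derivable from the current prefix. The two orders differ only in the order in which the derivable variables are appended inside a block. Running the procedure of \Cref{def:FD-reordering} on $\pi'$, one can choose at each step to append the implied variables in the order they appear in $\pi''$, and this reproduces $\pi''$. Applying the inequality in both directions then gives $\width{f}(Q,\pi'')=\width{f}(Q,\pi')$.

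For the inequality, fix a variable $v$ with bag $B'_v$ in the decomposition for $\pi'$. Let $P'$ be the set of variables before $v$ in $\pi'$, let $S'_v$ be the component of $v$ in $Q$ restricted to $\var(Q)\setminus P'$, and let $B'_v=\{v\}\cup (N(S'_v)\cap P')$. Choose $w\in S'_v$ to be the earliest element of $S'_v$ with respect to $\pi$. Then $S'_v$ lies in the $\pi$-suffix starting at $w$, it is connected, and it contains $w$, so $S'_v\subseteq S_w$, where $S_w$ is the component used for the bag $B_w$ of $\pi$. Consequently every element of $B'_v$ is in $S_w$ or in $N(S_w)$. The elements of $B'_v$ lying $\pi$-before $w$ are therefore neighbours of $S_w$ that precede $w$, so they are in $B_w$. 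The remaining elements of $B'_v$ form a set $Z$. These are variables that come $\pi$-at-or-after $w$ but precede $v$ in $\pi'$, apart from $v$ itself, which lies in $S'_v$.

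The key claim, and the expected main obstacle, is that $f(B_w\cup Z)=f(B_w)$. Once it holds, monotonicity gives $f(B'_v)\le f(B_w\cup Z)=f(B_w)\le \width{f}(Q,\pi)$. (If $v\neq w$, then $v$ itself is $\pi$-after $w$; I would handle $v$ together with $Z$, or note that $v\in S_w$ can be added by the same argument.) To prove the claim I would use the structure of the reordering procedure. Each $z\in Z$ was pulled forward in $\pi'$ by some FD $Y\to z$ whose left side $Y$ was already placed. Since every FD is expressed inside a single atom, $Y\cup\{z\}$ is contained in one atom, and as $z\in S_w\cup N(S_w)$, every element of $Y$ lies in $S_w\cup N(S_w)$. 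Elements of $Y$ that are $\pi$-before $w$ are then in $B_w$. Elements of $Y$ that are $\pi$-after $w$ precede $z$ in $\pi'$, so they were themselves appended as implied variables earlier.

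I would therefore argue by induction on the position in $\pi'$. Processing the variables of $Z$, together with any such auxiliary implied variables, in $\pi'$-order, each is implied by an FD whose left side is contained in $B_w$ plus variables already added. Repeated application of $\Delta$-stability, in the form $f(S)=f(S\cup\{z\})$ whenever $Y\subseteq S$, then shows that adding all of them does not change the value of $f$, and monotonicity discards the auxiliary ones. The delicate point to verify is that the auxiliary variables needed in this chase stay inside $S_w\cup N(S_w)$. I would try to guarantee this by always choosing, for each pulled-forward variable, the FD used by the procedure at the moment it was appended, and then checking that the atom containing that FD keeps the chain of derivations within the neighbourhood of $S_w$.
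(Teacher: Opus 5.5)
Your proposal is correct. The equality part (showing that $\pi'$ and $\pi''$ are $\Delta$-reorderings of each other and then applying the inequality twice) is exactly the paper's argument.

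For the inequality you take a different route. The paper proves a local swap lemma. If $v_{i+1}$ is implied by variables preceding $v_i$, then exchanging $v_i$ and $v_{i+1}$ changes only the two bags they create. Each new bag is dominated by the old $B_i$ after one use of $\Delta$-stability and monotonicity. An arbitrary $\Delta$-reordering is then reached by bubbling implied variables forward one position at a time. You instead compare the two decompositions globally. You send each $v$ to the $\pi$-earliest vertex $w$ of $S'_v$ and show that $B'_v$ is contained in $B_w$ plus variables that can be chased from $B_w$ by FDs. The swap lemma keeps every step trivial but needs the bubbling and termination argument. Your approach avoids intermediate orders and gives an explicit bag-to-bag domination, at the price of an inductive chase.

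The ``delicate point'' you flag does go through, but your stated reason is not quite the right one. From $z\in S_w\cup N(S_w)$ alone you cannot conclude that the left-hand side of its FD lies in $S_w\cup N(S_w)$. What makes it work is the following:
\begin{itemize}
\item Every variable $u$ you need to chase lies $\pi$-after $w$ and shares an atom with a vertex of $S_w$. It is therefore in the same component of the $\pi$-suffix from $w$, i.e., $u\in S_w$ itself.
\item Consequently the left-hand side $Y$ of its FD consists of neighbours of $S_w$. The elements of $Y$ that are $\pi$-before $w$ are in $B_w$.
\item The elements of $Y$ that are $\pi$-after $w$ are $\pi'$-before $u$, hence $\pi'$-before $w$. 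Such an inversion relative to $w$ cannot be an independent variable of the reordering, because when the loop reaches it, $w$ is already placed. So it was appended via an FD, and, being $\pi$-after $w$ and adjacent to $u\in S_w$, it again lies in $S_w$.
\end{itemize}
Let $T$ be the set of variables of $S_w$ that are $\pi$-after $w$ but $\pi'$-before $w$. Induction on $\pi'$-position gives $f(B_w\cup T)=f(B_w)$. Moreover $B'_v\subseteq B_w\cup T$, which also covers $v$ itself when $v\neq w$. Monotonicity then gives $f(B'_v)\le f(B_w)$ and completes your argument.
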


Thus, the algorithm we propose in this section is obtained by taking a $\Delta$-reordering and then applying \Cref{thm:generalupper}. Given a join query $Q$, a set $\Delta$ of FDs, and a $\Delta$-reordering $\pi'$ of a desired variable order, the preprocessing obtained is $O(|D|^{\width{\pb{Q}{\Delta}}(Q, \pi')}\polylog(|D|))$.

\begin{toappendix}
    
\subsection{Proof of \Cref{prop:general-reordering-equal-and-better}}

We first show that for monotone and $\Delta$-stable functions $f$, reordering can only decrease the disruption-free $f$-width.
We start by proving a special case.
\begin{lemma}\label{lem:swapimprovement}
Let $Q$ be a join query, $\Delta$ a set of FDs, and $\pi = (v_1 \ldots v_n)$ a variable order. Let $\pi'= (v_1 \ldots v_{i-1} v_{i+1}v_iv_{i+2}\ldots v_n)$ such that there is $Y\subseteq \{v_1\ldots v_{i-1}\}$ where $Y\rightarrow v_{i+1}$ is an FD in $\Delta$.  If $f:2^{\var(Q)}\rightarrow \mathbb R_+$ is monotone and $\Delta$-stable, then $\width{f}(Q, \pi') \le \width{f}(Q, \pi)$.
\end{lemma}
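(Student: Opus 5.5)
The plan is to compare the disruption-free decompositions of $Q$ according to $\pi$ and $\pi'$ bag by bag. We show that every bag of the $\pi'$-decomposition is contained either in a bag of the $\pi$-decomposition, or in $B_i\cup\{v_{i+1}\}$, where $B_i$ is the bag created by $v_i$ under $\pi$. Monotonicity handles the first case and $\Delta$-stability the second.

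\textbf{Unchanged bags.} For a variable $v_j$ with $j<i$ or $j>i+1$, the suffix $\{v_j,\ldots,v_n\}$ of $\pi'$ starting at $v_j$ equals the corresponding suffix of $\pi$, and so does the set of earlier variables. Hence the component $S_j$ and the bag $B_j$ are identical in both decompositions. Only the two bags created by $v_i$ and $v_{i+1}$ can change.

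\textbf{Bag created by $v_{i+1}$ in $\pi'$.} Let $S'$ be the component of $v_{i+1}$ in $Q[v_{i+1},v_i,v_{i+2},\ldots,v_n]$. Its bag is $B'=\{v_{i+1}\}\cup(N(S')\cap\{v_1,\ldots,v_{i-1}\})$.
\begin{itemize}
\item If $v_i\notin S'$, then $S'=S_{i+1}$ and $B'\subseteq B_{i+1}$, so $f(B')\le f(B_{i+1})$ by monotonicity.
\item If $v_i\in S'$, then $S'=S_i$, and therefore $B'\subseteq B_i\cup\{v_{i+1}\}$.
\end{itemize}

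\textbf{Bag created by $v_i$ in $\pi'$.} Let $T$ be the component of $v_i$ in $Q[v_i,v_{i+2},\ldots,v_n]$. Then $T\subseteq S_i$. Its bag is $B''=\{v_i\}\cup(N(T)\cap\{v_1,\ldots,v_{i-1},v_{i+1}\})$. Every prefix neighbor of $T$ is a prefix neighbor of $S_i$, so $B''\subseteq B_i\cup\{v_{i+1}\}$. Moreover, if $v_{i+1}\in B''$, then $v_{i+1}$ is adjacent to $T$ and hence lies in $S_i$.

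\textbf{Key step: $f(B_i\cup\{v_{i+1}\})=f(B_i)$ whenever $v_{i+1}\in S_i$.} This is the only non-routine point, and it is where we use that FDs are expressed on query variables inside an atom. The FD $Y\rightarrow v_{i+1}$ belongs to some atom $R(X_R)$ with $Y\cup\{v_{i+1}\}\subseteq X_R$. So every $y\in Y$ is a neighbor of $v_{i+1}\in S_i$. Since also $Y\subseteq\{v_1,\ldots,v_{i-1}\}$, this gives $Y\subseteq N(S_i)\cap\{v_1,\ldots,v_{i-1}\}\subseteq B_i$. By $\Delta$-stability applied with $S=B_i$, we get $f(B_i\cup\{v_{i+1}\})=f(B_i)$.

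\textbf{Conclusion.} In both cases where a new bag lands in $B_i\cup\{v_{i+1}\}$, we have $v_{i+1}\in S_i$: for $B'$ because $v_{i+1}\in S'=S_i$, and for $B''$ by the observation above. If $v_{i+1}\notin B''$, then simply $B''\subseteq B_i$. So monotonicity together with the key step bounds $f(B')$ and $f(B'')$ by $f(B_i)$ or $f(B_{i+1})$. Taking maxima over all bags yields $\width{f}(Q,\pi')\le\width{f}(Q,\pi)$.

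The main obstacle is guaranteeing $Y\subseteq B_i$. This requires the FD's variables to co-occur in an atom with $v_{i+1}$ and $v_{i+1}$ to be connected to $v_i$ in the suffix. The case split on whether $v_i\in S'$ is designed exactly so that this holds whenever we need it.
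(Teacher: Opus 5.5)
Your proof is correct and follows essentially the same route as the paper: only the bags created by $v_i$ and $v_{i+1}$ can change, and when $v_{i+1}\in S_i$ both new bags are bounded via $B_i\cup\{v_{i+1}\}$. This uses that $Y$ shares an atom with $v_{i+1}$, so $Y\subseteq B_i$ and $\Delta$-stability applies. The paper instead splits once on whether $v_i$ and $v_{i+1}$ are connected in $Q[v_i,\ldots,v_n]$, showing both bags are unchanged otherwise, and bounds the bag of $v_{i+1}$ through $f(B_i\setminus\{v_i\})$; your per-bag split is only a cosmetic difference and works equally well.
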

\begin{proof}
For every $j$, let $B_j$ and $B_j'$ the bags as in definition of the disruption-free decompositions for $\pi$ and $\pi'$, respectively. Moreover, let $S_j$ and $S_j'$ be defined as the sets $S_j$ for $\pi$ and $\pi'$. We first observe that for $j \notin \{i, i+1\}$, we have $S_j= S_j'$ and thus $B_j= B_j'$. So,
the only bags that could lead to a difference between the disruption-free $f$-widths are those created by $v_i$ and $v_{i+1}$.

Consider first the case that $v_i$ and $v_{i+1}$ are not connected by a path in $Q[v_i, v_{i+1}, \ldots, v_n]$. Then, $S_i= S_i'$ and $S_{i+1}= S_{i+1}'$, and thus $B_i= B_i'$ and $B_{i+1}= B_{i+1}'$. It follows directly that the disruption-free $f$-width of $Q$ and $\pi$ is the same as that of $Q$ and $\pi'$.

If $v_i$ and $v_{i+1}$ are connected by a path in $Q[v_i, v_{i+1}, \ldots, v_n]$, then
$v_i$ and $v_{i+1}$ appear in the same connected component of $Q[v_i, v_{i+1}, \ldots, v_n]$, and this connected component is
$S_{i} = S_{i+1}'$. 
We claim that both $f(B_i')$ and $f(B_{i+1}')$ are at most $f(B_i)$, from which the lemma follows directly. 

We first show that $f(B_{i+1}') \le f(B_i)$. To this end, observe that, due to $S_{i} = S_{i+1}'$, we have $B_i\setminus \{v_i\} = B_{i+1}'\setminus \{v_{i+1}\}$. Moreover, since the FD $Y\rightarrow v_{i+1}$ is covered by a relation, we have that all variables in $Y$ are neighbors of $v_{i+1}$ in $Q$, so $Y\subseteq B_{i+1}'$ and thus by $\Delta$-stability, we have $f(B_{i+1}') = f(B_{i+1}'\setminus \{v_i+1\})= f(B_i\setminus \{v_i\}) \le f(B_i)$, where the inequality follows from monotonicity of $f$.

We next show that $f(B_{i}') \le f(B_i)$. First note that $v_{i+1}\in S_i$ since there is a path from $v_i$ to $v_{i+1}$ in $Q[v_i, v_{i+1}, \ldots, v_n]$. Since all variables in $Y$ are neighbors of $v_{i+1}$, we have $Y\subseteq B_i$. It follows that $f(B_i) = f(B_i \cup \{v_{i+1}\})$ since $f$ is $\Delta$-stable. We have $S_i'\subseteq S_i$, so $B_i'\subseteq B_i\cup \{v_{i+1}\}$, and thus $f(B_i')\le f(B_i\cup \{v_{i+1}\}) = f(B_i)$.
\end{proof}

We can now prove the general case using \Cref{lem:swapimprovement}.

\begin{lemma}\label{lem:improvementabstract}
Let $Q$ be a join query, $\pi$ a variable order, and $\Delta$ a set of FDs. Let $\pi'$ be a $\Delta$-reordering of $\pi$. If $f:2^{\var(Q)}\rightarrow \mathbb R_+$ is monotone and $\Delta$-stable, then $\width{f}(Q, \pi') \le \width{f}(Q, \pi)$
\end{lemma}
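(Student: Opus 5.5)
The plan is to transform $\pi$ into $\pi'$ by a finite sequence of adjacent transpositions, each of the kind covered by \Cref{lem:swapimprovement}. Chaining the resulting inequalities will give $\width{f}(Q,\pi')\le\width{f}(Q,\pi)$.

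First I make the structure of a $\Delta$-reordering explicit. By \Cref{def:FD-reordering}, every variable of $\pi'$ is of one of two kinds. It is either a \emph{leader}, appended because it is the next variable of $\pi$ not yet in $\pi'$, or an \emph{implied} variable $y$, appended because some FD $Y\rightarrow y$ has $Y$ entirely in the current prefix. Leaders occur in $\pi'$ in the same relative order as in $\pi$. Each implied variable $y$ is moved earlier than in $\pi$, but it never passes any variable that precedes it in $\pi'$. This is because $y$ is placed immediately after a prefix of $\pi'$ all of whose variables were added before $y$.

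The transformation builds $\pi'$ position by position, by induction on $j$. Suppose the current order $\sigma$ agrees with $\pi'$ on its first $j-1$ positions and the remaining variables are still in their $\pi$-relative order. Let $w$ be the $j$-th variable of $\pi'$. If $w$ is a leader, then it is already at position $j$ of $\sigma$, since it is the first remaining variable of $\pi$, and nothing needs to be done. If $w$ is implied, there is an FD $Y\rightarrow w$ with $Y$ contained in the first $j-1$ positions of $\sigma$. I bubble $w$ leftward one step at a time until it reaches position $j$. Each step swaps $w$ with its left neighbour $v$, where $v$ lies at a position $\ge j$. The variables before that pair therefore include all of $Y$. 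This is exactly the hypothesis of \Cref{lem:swapimprovement}, with $v_i=v$, $v_{i+1}=w$, and $Y\subseteq\{v_1,\dots,v_{i-1}\}$. Each swap therefore does not increase the $f$-width. After the bubbling, the suffix from position $j+1$ on is still in $\pi$-relative order, because only $w$ moved, so the invariant is maintained. At the end $\sigma=\pi'$, and monotonicity of $\width{f}$ along the whole chain of swaps gives the claim.

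The main obstacle is bookkeeping rather than mathematics. I must argue carefully that during the bubbling of $w$ the set $Y$ stays inside the fixed prefix of length $j-1$. This holds because that prefix is never touched again. I must also check that the lemma's hypothesis is stated for $Y$ preceding the swapped pair, not merely preceding $w$ in $\pi$; the bubbling setup guarantees this. A further point to check is that \Cref{lem:swapimprovement} implicitly uses that $Y\cup\{w\}$ lies in a single atom, so that $Y$ consists of neighbours of $w$. This holds under the paper's convention that FDs are expressed on the variables of an atom. With this lemma in hand, the equality $\width{f}(Q,\pi'')=\width{f}(Q,\pi')$ in \Cref{prop:general-reordering-equal-and-better} follows by observing that $\pi''$ is a $\Delta$-reordering of $\pi'$ and vice versa.
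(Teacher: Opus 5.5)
Your proposal is correct and follows essentially the same route as the paper: it transforms $\pi$ into $\pi'$ through a chain of one-step advances of an implied variable $z$ past its left neighbour. Each step meets the hypothesis of \Cref{lem:swapimprovement} because the body of the FD for $z$ lies in the already-settled prefix. You also state explicitly the invariant that the unsettled suffix stays in $\pi$-relative order, which guarantees that the first position where the current order and $\pi'$ differ is occupied in $\pi'$ by an implied variable; the paper only uses this tacitly, so your bookkeeping is slightly more careful.
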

\begin{proof}
The idea is that we can simulate the construction of any $\Delta$-reordering by iteratively bubbling variables to the front, increasing the disruption-free $f$-width in none of the steps due to \Cref{lem:swapimprovement}.
More precisely, we claim that there exists a sequence $\pi=\pi_1\ldots,\pi_\ell=\pi'$ of variable orders such that $\pi_j$ and $\pi_{j+1}$ satisfy the conditions of \Cref{lem:swapimprovement} for all $j$. We conclude that $\width{f}(Q, \pi_{j+1}) \le \width{f}(Q, \pi_j)$, and overall $\width{f}(Q, \pi') \le \width{f}(Q, \pi)$.

Consider a variable order $\pi_j\neq \pi'$, let $t$ be the first position in which they differ, and let $z$ be the variable in position $t$ in $\pi'$. By the definition of the reordering, there is an FD $Y\rightarrow z$ with $Y$ appearing in the first $t-1$ positions of $\pi_j$. Define $\pi_{j+1}$ to be similar to $\pi_j$ but with the variable $z$ advanced by one position. Clearly, the conditions of \Cref{lem:swapimprovement} apply for $\pi_j$ and $\pi_{j+1}$.
Compared to $\pi_j$, $\pi_{j+1}$ either has a later first different position with respect to $\pi'$, or the same first different position $t$ and the variable in position $t$ in $\pi'$ is closer to $t$ in $\pi_{j+1}$ than in $\pi_j$. Thus, this sequence will eventually produce $\pi'$.
\end{proof}

We get that different reorderings have the same disruption-free $f$-width.

\begin{lemma}\label{lem:reorderingsallthesameabstract}
Let $Q$ be a join query, $\pi$ a variable order, and $\Delta$ a set of FDs. Let $\pi'$ and $\pi''$ be $\Delta$-reorderings of $\pi$. If $f:2^{\var(Q)}\rightarrow \mathbb R_+$ is monotone and $\Delta$-stable, then $\width{f}(Q, \pi') = \width{f}(Q, \pi'')$.
\end{lemma}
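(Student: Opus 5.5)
The plan is to reduce the statement to \Cref{lem:improvementabstract}, applied in both directions. It suffices to show that $\pi''$ is a $\Delta$-reordering of $\pi'$, and symmetrically that $\pi'$ is a $\Delta$-reordering of $\pi''$. Then \Cref{lem:improvementabstract} gives both $\width{f}(Q,\pi'')\le \width{f}(Q,\pi')$ and $\width{f}(Q,\pi')\le\width{f}(Q,\pi'')$, hence equality.

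To show that $\pi''$ is a $\Delta$-reordering of $\pi'$, I would run the procedure of \Cref{def:FD-reordering} on input $\pi'$ and argue that the nondeterministic choices can be resolved so that the output is exactly $\pi''$. I would first isolate a structural description of any $\Delta$-reordering of $\pi$. Write $\pi=(v_1,\ldots,v_n)$ and let $C_k$ be the $\Delta$-closure of $\{v_1,\ldots,v_k\}$. Then a $\Delta$-reordering of $\pi$ is a sequence in which the blocks $C_k\setminus C_{k-1}$ appear consecutively for $k=1,\dots,n$. Within each nonempty block, the first element is $v_k$, and every later element is implied by an FD whose left-hand side lies among earlier-listed elements. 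This holds because, in \Cref{def:FD-reordering}, after $v_k$ is processed the appended variables are exactly those derivable from what is already present, that is, $C_k$. Strictly, the definition appends only what single FDs give at that moment, so I would read ``for every FD'' as iterating to a fixpoint. I would state this explicitly, since it is what makes $\pi'$ consistent.

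Now process $\pi'$ with the procedure. Since $\pi'$ is closed under prefixes in the sense above, whenever the procedure reaches an element of $\pi'$, all variables derivable from the current output are already listed. So the output could simply copy $\pi'$, showing that $\pi'$ is consistent. To produce $\pi''$ instead, I would use the block description. The prefixes of $\pi'$ and $\pi''$ ending at block boundaries coincide as sets, namely $C_k$. Processing $\pi'$ block by block, when we reach the first element $v_k$ of block $k$ in $\pi'$, it is not yet present, so we append it. We then append the remaining elements of $C_k\setminus C_{k-1}$ in the order $\pi''$ uses, which is legal because each is implied by earlier-listed elements of $\pi''$, all lying in $C_k$. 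Later elements of block $k$ in $\pi'$ are then skipped as already present. The output is $\pi''$.

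The main obstacle is making the block characterization precise. The fixpoint reading of the reordering procedure has to be consistent with how \Cref{lem:improvementabstract} was proved. The argument that the variable in position $t$ of $\pi''$ has an FD whose left-hand side lies in the common prefix must survive the swap of roles between $\pi$ and $\pi'$. Once the characterization holds, the rest is bookkeeping.
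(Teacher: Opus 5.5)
Your proposal is correct and follows the paper's proof. The paper also shows that $\pi'$ and $\pi''$ are $\Delta$-reorderings of each other, by splitting each order into ``independent'' variables and their sets $\dep(v_j)$ of dependent variables (exactly your blocks $C_k\setminus C_{k-1}$), and then applies \Cref{lem:improvementabstract} twice. Your explicit fixpoint reading of \Cref{def:FD-reordering} is the one the paper implicitly needs for $\dep(v_j)$ to be choice-independent, and the ``swap of roles'' you flag needs no re-checking, since \Cref{lem:improvementabstract} holds for any order and any of its $\Delta$-reorderings.
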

\begin{proof}
    We notice that $\pi'$ is a $\Delta$-reordering of $\pi''$ and vice-versa.
    Indeed, in the construction of $\Delta$-reorderings in Definition~\ref{def:FD-reordering}, there are two reasons to add a variable $z$ at a specific position to a reordering $\pi'$: either there is a set $Y$ already added to $\pi'$ such that $Y\rightarrow z$ or there is no such FD but $z$ is the next variable in $\pi$ that has not already been added to $\pi'$. We call the former type of variables \emph{dependent} and the latter type \emph{independent}. For every independent variable $v_j$, let us define the set $\dep(v_j)$ to contain all dependent variables that we have to add until we add the next independent variable. Clearly, $\dep(v_j)$ does not depend on the choices in the construction and is thus well-defined. Also, the orders $\pi'$ and $\pi''$ can only differ in the order inside the sets $\dep(y_j)$ while the positions of the independent variables are fixed, and any order inside the sets $\dep(y_j)$ is allowed in a reordering.
    Thus, $\pi'$ is a $\Delta$-reordering of $\pi''$ and vice-versa.
    
    By applying \Cref{lem:improvementabstract} twice, we get that $\width{f}(Q, \pi') \le \width{f}(Q, \pi'') \le \width{f}(Q, \pi')$, so overall $\width{f}(Q, \pi')=\width{f}(Q, \pi'')$.
\end{proof}

\Cref{prop:general-reordering-equal-and-better} is obtained by combining \Cref{lem:reorderingsallthesameabstract} and \Cref{lem:improvementabstract}.
\end{toappendix}

\section{Lower Bounds Based on FD-Aware Incompatibility}\label{sec:lower}

 In Section~\ref{sec:challenges}, we have seen that the extension approach cannot be used to show lower bounds for direct access when the FDs we consider are not unary. Thus, to show lower bounds in this setting, we have to develop new techniques. In this section, we do so by combining the decomposition approach used to show lower bounds for direct access without FDs~\cite{bringmann2024stars} with the coloring approach used to show size lower bounds for CQs under FDs~\cite{GottlobLVV12}.
 
\subsection{FD-Aware Incompatibility}

We define a generalization of the incompatibility number~\cite{bringmann2024stars} for queries with FDs.
To this end, we use a generalization of the fractional edge cover number from~\cite{GottlobLVV12}, \emph{the color number}.

Given a join query $Q$, a $k$-\emph{coloring} $\col: \var(Q) \rightarrow 2^{[k]}$ with $k\ge 1$ assigns a set of colors to each query variable. 
Given a set of variables $V$, we set $\col(V)=\bigcup_{v\in V}{\col(v)}$, and we say that $V$ \emph{contains} a color $c$ if $c\in\col(V)$.
Given a set of FDs, we say that a coloring $\col$ is \emph{valid} if for every FD $Y\rightarrow z$, we have that $\col(z)\subseteq \col(Y)$. 

\begin{definition}[Color Number~\cite{GottlobLVV12}]\label{def:color-num}
Given a self-join-free join query $Q$ with a distinguished set of variables $S$ and a set $\Delta$ of FDs, the \emph{color number} $\cn{Q}{\Delta}(S)$ is the maximum over all valid colorings of $Q$ of the number of colors appearing in $S$ divided by the maximum number of colors appearing in an atom.
That is, $\cn{Q}{\Delta}(S):=\max_{\col}{\frac{|\col(S)|}{\max_{R_i(X_i)\in \atoms(Q)}|\col(X_i)|}}$.
\footnote{The color number was originally defined for queries with self-joins~\cite{GottlobLVV12}. We consider here a simplified definition without self-joins.}
\end{definition}

We call the width measure~$\width{\cn{Q}{\Delta}}$, as specified by \Cref{def:fwidth}, the~\emph{FD-aware incompatibility number}.
We remark that, in case there are no FDs, the FD-aware incompatibility number coincides with the incompatibility number from~\cite{bringmann2024stars}:
without FDs, the fractional edge cover number and the color number coincide~\cite[Section~3.1]{GottlobLVV12}.

\begin{example}\label{ex:running-color}
    Consider the query from \Cref{ex:controller-query} again. Its FD-aware incompatibility number is $2$, given by $B_3$ and the coloring with $\col(v_1)=\{1\}$, $\col(v_2)=\{2\}$, and $\col(v_3)=\col(v_4)=\col(v_5)=\emptyset$.
    Without the FD, the incompatibility number is $3$, as can be seen by the coloring $\col'(v_1)=\{1\}$, $\col'(v_2)=\{2\}$, $\col'(v_3)=\{3\}$, and $\col'(v_4)=\col'(v_5)=\emptyset$. However, given the FD $v_4,v_5\rightarrow v_3$, $\col'$ is not valid.
\end{example}

\begin{lemmarep}\label{lem:incompatibilitystable}
Let $Q(X)$ be a self-join-free join query and $\Delta$ a set of FDs. Then, the color number $\cn{Q}{\Delta}$ is monotone and $\Delta$-stable.
\end{lemmarep}
\begin{appendixproof} 
    First, notice that the color number is monotone. Consider the sets $S_1\subseteq S_2\subseteq X$, and  let $\col'$ be an optimal coloring for $Q$, $S_1$, and $\Delta$. Then, \[C(Q,S_1, \Delta)=\frac{|\col'(S_1)|}{\max_{R_i(X_i)\in \atoms(Q)}|\col'(X_i)|}\le \frac{|\col'(S_2)|}{\max_{R_i(X_i)\in \atoms(Q)}|\col'(X_i)|}\le C(Q,S_2, \Delta).\] 

    Next, we show $\Delta$-stability.
    Consider an FD $Y\rightarrow z$ of $\Delta$ and a set $Y\subseteq S\subseteq X$.
    Let $\col$ be an optimal coloring for $Q$, $S\cup \{z\}$, and $\Delta$. Since $\col$ is valid, we have that $\col(z)\subseteq \col(Y)$. Thus, $|\col(S\cup \{z\})|=|\col(S)|$, and
    \[C(Q,S\cup \{z\}, \Delta)=\frac{|\col(S\cup \{z\})|}{\max_{R_i(X_i)\in \atoms(Q)}|\col(X_i)|}=\frac{|\col(S)|}{\max_{R_i(X_i)\in \atoms(Q)}|\col(X_i)|}\le C(Q,S, \Delta).\]
    The inequality in the other direction holds because the color number is monotone, so, overall, we get that $C(Q,S, \Delta)=C(Q,S\cup \{z\}, \Delta)$.
\end{appendixproof}

As a direct consequence of \Cref{prop:general-reordering-equal-and-better} and \Cref{lem:incompatibilitystable}, it is well-defined to refer to the FD-aware incompatibility number of a $\Delta$-reordering.
\begin{proposition}\label{prop:reorderingsallthesame}
Let $Q$ be a self-join-free join query with FDs $\Delta$ and a variable order $\pi$. Let $\pi'$ and~$\pi''$ be $\Delta$-reorderings of $\pi$. Then, $\width{\cn{Q}{\Delta}}(Q, \pi') =\width{\cn{Q}{\Delta}}(Q, \pi'')$.
\end{proposition}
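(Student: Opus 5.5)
This follows directly from \Cref{prop:general-reordering-equal-and-better} once its hypotheses are checked for $f=\cn{Q}{\Delta}$. That proposition says that if $f:2^{\var(Q)}\rightarrow \mathbb R_+$ is monotone and $\Delta$-stable, then any two $\Delta$-reorderings $\pi'$ and $\pi''$ of $\pi$ satisfy $\width{f}(Q,\pi'')=\width{f}(Q,\pi')$.

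First, $\cn{Q}{\Delta}$ is indeed a set function $2^{\var(Q)}\rightarrow\mathbb R_+$ in the required sense. For every valid coloring in which some atom carries at least one color, the ratio is well-defined and nonnegative. The only degenerate case is the all-empty coloring, where the denominator is $0$. We exclude it, as is implicit in \Cref{def:color-num}, or equivalently restrict the maximum to colorings with a nonzero denominator. Since the coloring that gives color $1$ to every variable is always valid, the maximum is taken over a nonempty set. The maximum is finite because $|\col(S)|\le\sum_{v\in S}|\col(v)|$ and each $\col(v)$ lies in some atom's color set, so the ratio is at most $|S|$. Hence $\cn{Q}{\Delta}$ takes values in $\mathbb R_+$.

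Second, \Cref{lem:incompatibilitystable} states that $\cn{Q}{\Delta}$ is monotone and $\Delta$-stable for self-join-free join queries. This is precisely our setting.

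Applying \Cref{prop:general-reordering-equal-and-better} with $f=\cn{Q}{\Delta}$ then gives $\width{\cn{Q}{\Delta}}(Q,\pi')=\width{\cn{Q}{\Delta}}(Q,\pi'')$, which is the claim. No step is a real obstacle. The substantive work, namely that swapping a variable forward past an FD-implied position never increases the $f$-width and that two reorderings are reorderings of each other, is already contained in \Cref{prop:general-reordering-equal-and-better}. The only point needing care is the well-definedness of the color number as a real-valued function, handled above.
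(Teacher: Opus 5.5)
Your proposal is correct and matches the paper's argument: the paper also derives this proposition directly from \Cref{prop:general-reordering-equal-and-better} with $f=\cn{Q}{\Delta}$, taking monotonicity and $\Delta$-stability from \Cref{lem:incompatibilitystable}. Your additional check that the color number is a well-defined, finite, nonnegative set function is a harmless detail that the paper leaves implicit.
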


\subsection{Hardness Proof}

We will show that the FD-aware incompatibility number of a $\Delta$-reordering gives a lower bound for direct access.
As in~\cite{bringmann2024stars}, we will use a reduction from star queries.
The following lemma shows how to use the hardness of $\qstar$ to deduce the hardness of any query that admits a coloring with certain properties.
We say that a coloring is \emph{connected} if, for every color, the set of variables assigned this color is connected.
\begin{toappendix}
\begin{lemma}\label{lem:nice-coloring}
    Let $Q$ be a CQ, $S$ a subset of its variables, and $\Delta$ a set of FDs. There is~a valid connected coloring witnessing its color number $\cn{Q}{\Delta}(S)$ such that $S$ contains all colors.
\end{lemma}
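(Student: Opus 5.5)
Start from an arbitrary valid coloring $\col$ that attains $\cn{Q}{\Delta}(S)$; one exists because, after normalization, only finitely many colorings are relevant. Then transform it in two steps. Neither step may make the coloring invalid. Neither may lower $|\col(S)|$, and neither may raise the maximum number of colors in an atom. Both quantities are invariant under renaming colors, so we may rename freely.

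\textbf{Step 1 (all colors appear in $S$).} For every color $c \notin \col(S)$, delete $c$ from every variable. Validity is preserved: if $\col(z)\subseteq\col(Y)$, then removing the same color everywhere gives $\col(z)\setminus\{c\}\subseteq \col(Y)\setminus\{c\}$. The numerator $|\col(S)|$ does not change, and each $|\col(X_i)|$ can only shrink, so the ratio does not decrease. The denominator must stay positive. If every color were deleted, then $|\col(S)|=0$, which cannot be optimal: coloring every variable with a single color gives ratio at least $1$ whenever $S\neq\emptyset$, and this coloring is valid because every variable has the same color set. The case $S=\emptyset$ is degenerate and handled separately. By optimality the ratio cannot increase, so it stays equal to $\cn{Q}{\Delta}(S)$.

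\textbf{Step 2 (connectedness).} Take a color $c$ whose class $V_c=\{v : c\in\col(v)\}$ is disconnected in $H_Q$, and let $K_1,\dots,K_m$ be its connected components. Replace $c$ by fresh colors $c_1,\dots,c_m$, where $c_j$ goes to the variables of $K_j$. Every atom $X_i$ is a clique of $H_Q$, so $X_i\cap V_c$ lies inside a single component. Hence $|\col(X_i)|$ is unchanged. The numerator can only grow, since $c\in\col(S)$ is replaced by at least one $c_j$, and possibly several. For validity, take an FD $Y\rightarrow z$ with $z\in K_j$. By our convention FDs are expressed inside an atom $R_i$, so $Y\cup\{z\}\subseteq X_i$. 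The old coloring gives $c\in\col(Y)$, so some $y\in Y$ carries $c$. That $y$ shares an atom with $z$, so it lies in $K_j$ and now carries $c_j$. Thus $\col(z)\subseteq\col(Y)$ still holds. The ratio does not decrease, so by optimality it still equals the color number. The split may put some $c_j$ outside $S$; if so, reapply Step 1. Step 1 deletes colors and never breaks connectedness, because a connected class either stays or disappears.

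\textbf{Termination and the main obstacle.} Each split strictly increases the number of connected color classes. That number is bounded, since with the denominator fixed one can bound how many distinct nonempty color classes matter, or we can work directly with the finite normalized set of colorings. So the process ends. The delicate point I expect is validity under splitting. It depends on every FD lying within a single atom, which is the paper's convention that FDs are stated over the variables of an atom $R_i$. I would state this reliance explicitly.
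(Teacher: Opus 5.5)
Your proof is correct and follows essentially the same route as the paper. Both start from an optimal valid coloring and delete the colors absent from $S$. Both then repair disconnected color classes using two facts: each atom meets only one component of a color class, and every FD lies inside an atom, so validity is preserved. In both, optimality pins the ratio. The only difference is in that repair step: the paper simply removes a disconnected color from all but one component meeting $S$, which leaves $|\col(S)|$ unchanged directly, whereas you split it into fresh colors and then prune with Step 1. The two are equivalent, and splitting all colors at once would make your termination argument unnecessary.
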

\begin{proof}
    Consider a valid coloring witnessing the color number.
    First, remove colors that do not appear in $S$. 
    Then, if there are several connected components of variables containing a color, remove this color from all but one component that includes a variable in $S$.
    Since every FD is contained in an atom, removing entire connected components does not affect the validity of the coloring.
    These two steps do not change the number of colors that appear in $S$, and they do not increase the number of colors assigned to an atom. Thus, they do not decrease the color number. Since we started with a coloring witnessing the maximum possible color number, we conclude that the color number of the modified coloring is equal to that of the original coloring. 
\end{proof}
\end{toappendix}
Given a coloring $\col$ and an ordering of variables $\pi$, we say that a variable $v$ \emph{introduces} a color~$c$ if $v$ is the first variable in $\pi$ such that $c \in \col(v)$.

\begin{lemmarep}\label{lem:star-reduction}
Let $Q$ be a self-join-free join query over a schema with FDs, $\pi$ an ordering of its variables, $1 <\iota\in\mathbb{R}$, and $2\le k\in\mathbb{N}$, such that there is a valid connected $(k+1)$-coloring of $Q$ satisfying:
\begin{itemize}
    \item The color $k+1$ shares an atom with every other color.
    \item The maximum number of colors out of $[k]$ that appear in an atom is $\frac{k}{\iota}$.
    \item The variable that introduces the color $k+1$ does not come before a variable that introduces another color.
\end{itemize}
If there exists $\eps > 0$ such that for all $\delta > 0$ there is a direct access algorithm for $Q$ and $\pi$ with preprocessing time $O(|D|^{\iota - \varepsilon})$ and access time $O(|D|^{\delta})$, 
then there exists $\eps'>0$ such that for all $\delta'>0$ there is a direct access algorithm for $\qstar_k$ with respect to a bad ordering with preprocessing time $O(|D^\star|^{k-\varepsilon'})$ and access time $O(|D^\star|^{\delta'})$.
\end{lemmarep}
\begin{proofsketch}
    The construction is similar to that used by Bringmann et al.~\cite[Lemma 17]{bringmann2024stars}; we repeat the ideas here for completeness.
    Let $\col: \var(Q) \rightarrow 2^{[k+1]}$ be the said coloring of $Q$.
    Given a database $D^\star$ for $\qstar_k$, we construct a database $D$ for $Q$.
    Given an atom $R(v_1,\ldots,v_n)$, let $c_1,\ldots,c_\ell$ be those colors of $[k]$ that appear in $\col(v_1)\cup\ldots\cup\col(v_n)$. We note that $\ell\le \frac{k}{\iota}$.
    Compute the join of $R_{c_1}(x_{c_1},x_{k+1}),\ldots,R_{c_\ell}(x_{c_\ell},x_{k+1})$.
    Every answer assignment $h$ to this join defines a function $f_h$ from colors to the domain of $D^\star$; we extend this to a function $f^+_h$ from a set of colors by setting $f^+_h(\{c_1,\ldots,c_m\})=(f_h(c_1),\ldots,f_h(c_m))$ assuming $c_1<...<c_m$, and $f^+_h(\emptyset)=\bot$; and we add to $R^D$ the fact obtained by this function: $R(f^+_h(\col(v_1)),\ldots,f^+_h(\col(v_n)))$. We assume that the domain is sorted lexicographically. The construction satisfies the FDs because the coloring is valid.
    It can be shown that the construction encodes the answers in $\qstar_k(D^\star)$ in a bad lexicographic order, so direct access to $Q(D)$ simulates direct access to $\qstar_k(D^\star)$.
\end{proofsketch}
\begin{proof}
The construction is similar to that used by Bringmann et al.~\cite[Lemma 17]{bringmann2024stars}; we repeat the ideas here for completeness.
    Let $\col: \var(Q) \rightarrow 2^{[k+1]}$ be the said coloring of $Q$.

    \textbf{Construction.}
    Given a database $D^\star$ for $\qstar$, we construct a database $D$ for $Q$.
    Given an atom $R(v_1,\ldots,v_n)$, let $c_1,\ldots,c_\ell$ be those colors of $[k]$ that appear in $\col(v_1)\cup\ldots\cup\col(v_n)$. We note that $\ell\le \frac{k}{\iota}$.
    Compute the join of $R_{c_1}(x_{c_1},x_{k+1}),\ldots,R_{c_\ell}(x_{c_\ell},x_{k+1})$.
    Every answer assignment $h$ to this join defines a function $f_h$ from colors to the domain of $D^\star$; we extend this to a function $f^+_h$ from a set of colors by setting $f^+_h(\{c_1,\ldots,c_m\})=(f_h(c_1),\ldots,f_h(c_m))$ assuming $c_1<...<c_m$, and $f^+_h(\emptyset)=\bot$; and we add to $R^D$ the fact obtained by this function: $R(f^+_h(\col(v_1)),\ldots,f^+_h(\col(v_n)))$. We assume that the domain is sorted lexicographically. The construction satisfies the FDs because the coloring is valid.

    \textbf{Correctness.}
    We claim that we get a bijection $\tau$ from $Q(D)$ to $\qstar(D^\star)$: if a coordinate corresponding to a color $c$ is assigned a value $d$ in an answer $a$ to $Q(D)$, then $\tau(a)$ assigns $d$ to $x_c$.
    First, since the coloring is connected, all occurrences of the same color get assigned the same value, and so $\tau$ is well-defined.
    Given an answer to $\qstar(D^\star)$, it is clear that it is obtained by our construction because the construction combines subjoins of $\qstar$.
    Given an answer to $Q(D)$, we argue that it yields an answer to $\qstar$. Indeed, given any atom $R_i(x_i,x_{k+1})$ of $\qstar$, the color $i$ shares an atom with the color $k+1$ by the conditions of this lemma, and hence there exists a relation in $Q$ that uses the atom $R_i(x_i,x_{k+1})$ in the subjoin defining it.
    It is left to notice that the construction produces the results in a bad ordering because the variable that introduces $k+1$ does not come before variables that introduce other colors. If the same variable introduces both $k+1$ and another color $c$, our construction ensures that the values are sorted first by the assignment to $x_c$ and then by the assignment to $x_{k+1}$.

    \textbf{Time complexity.}
    Since every atom contains at most $\frac{k}{\iota}$ colors out of $[k]$, every relation of $D$ is obtained by the join of at most $\frac{k}{\iota}$ relations of $D^\star$. Thus,  $|D|=O(|D^\star|^{\frac{k}{\iota}})$.
    Given an algorithm for $Q$ with preprocessing time $O(|D|^{\iota - \eps})$, we set $\eps'=\min(k-\frac{k}{\iota},\frac{k\eps}{\iota})$. With this choice, we get that the construction time for the database $D$ is $O(|D^\star|^{\frac{k}{\iota}})\le O(|D^\star|^{k-\eps'})$, and the preprocessing time of this algorithm over our construction is 
    $O(|D|^{\iota - \eps})\le O(|D^\star|^{\frac{k}{\iota}(\iota-\eps)})\le O(|D^\star|^{k-\eps'})$.
    Then, given $\delta'>0$, we set $\delta=\frac{\delta'\iota}{k}$ and get that  the access time is $O(|D|^\delta)\le O(|D^\star|^{\frac{k}{\iota}\frac{\delta'\iota}{k}})\le O(|D^\star|^{\delta'})$.
\end{proof}

It remains to find the coloring used by \Cref{lem:star-reduction}.

\begin{lemmarep}\label{lem:coloring-construction}
    Let $Q$ be a self-join-free join query over a schema with FDs $\Delta$, and $\pi$ an ordering of its variables. Let $\pi'$ be a $\Delta$-reordering of $\pi$.
    If $\width{\cn{Q}{\Delta}}(Q, \pi')>1$, then there exists $2\le k\in\mathbb{N}$ such that there is a valid connected $(k+1)$-coloring of $Q$ satisfying:
\begin{enumerate}
    \item The color $k+1$ shares an atom with every other color. \label{property:touching-center}
    \item The maximum number of colors out of $[k]$ that appear in an atom is $\frac{k}{\width{\cn{Q}{\Delta}}(Q, \pi')}$.\label{property:bounded-colors-in-atom}
    \item The variable that introduces the color $k+1$ does not come before a variable that introduces another color.\label{property:center-last}
\end{enumerate}
\end{lemmarep}
\begin{proofsketch}
One can show that there is a valid connected coloring $\col: \var(Q) \rightarrow 2^{[k]}$ for some integer $k$ such that $|\col(B)| = k$ for some bag $B$ of the disruption-free decomposition and $\width{\cn{Q}{\Delta}}(Q, \pi') = \frac{k}{\max_{R(X)\in \atoms(Q)}|\col(X)|}$. Let $v_t$ be the last variable in $\pi$ that introduces a color, and let this color w.l.o.g.~be $k$.
We define $B_t$ and $S_t$ as in the definition of a disruption-free decomposition: $B_t$ is the bag obtained when treating $v_t$ (i.e., the bag in which $v_t$ is the last variable with respect to $\pi$), and $S_t$ is the connected component of $Q[v_t,\ldots,v_n]$ that contains $v_t$. Add a fresh color $k+1$ to all color sets $\col(v)$ with $v\in S_t$, and call the resulting coloring~$\col'$. One can show that $\col'$ has all claimed properties.
\end{proofsketch}
\begin{proof}
Use \Cref{lem:nice-coloring} to get a valid connected coloring $\col: \var(Q) \rightarrow 2^{[k]}$ for some integer $k$ such that $|\col(B)| = k$ for some bag $B$ of the disruption-free decomposition and $\width{\cn{Q}{\Delta}}(Q, \pi') = \frac{k}{\max_{R(X)\in \atoms(Q)}|\col(X)|}$. Let $v_t$ be the last variable in $\pi$ that introduces a color, and let this color w.l.o.g.~be $k$.
We define $B_t$ and $S_t$ as in the definition of a disruption-free decomposition: $B_t$ is the bag obtained when treating $v_t$ (i.e., the bag in which $v_t$ is the last variable with respect to $\pi$), and $S_t$ is the connected component of $Q[v_t,\ldots,v_n]$ that contains $v_t$. Add a fresh color $k+1$ to all color sets $\col(v)$ with $v\in S_t$, and call the resulting coloring~$\col'$.

\begin{claim}
    $\col'$ is a valid coloring.
\end{claim}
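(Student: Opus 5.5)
The plan is to check validity FD by FD. Fix an FD $Y\rightarrow z$ in $\Delta$; we must show $\col'(z)\subseteq\col'(Y)$. Since $\col$ is valid, $\col(z)\subseteq\col(Y)\subseteq\col'(Y)$, so only the fresh color matters. If $z\notin S_t$ there is nothing to show. So it suffices to prove: whenever $z\in S_t$, some $y\in Y$ also lies in $S_t$. Throughout, the order in which $v_t$ is ``last'' is read as the $\Delta$-reordering $\pi'=(v_1,\ldots,v_n)$, since the bags are taken with respect to $\pi'$.

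\emph{Main case.} As in the preliminaries, the FD is expressed on the variables of an atom, so $Y\cup\{z\}\subseteq X_i$ for some atom $R_i(X_i)$. Hence every $y\in Y$ is a neighbor of $z$. If some $y\in Y$ lies in $\{v_t,\ldots,v_n\}$, then $y$ and $z$ are adjacent in $Q[v_t,\ldots,v_n]$ through the restriction of that atom. Since $z\in S_t$, this gives $y\in S_t$, so $k+1\in\col'(y)\subseteq\col'(Y)$, as required.

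\emph{Remaining case: all of $Y$ precedes $v_t$ in $\pi'$ while $z\in S_t$.} This is where consistency of $\pi'$ with $\Delta$ is needed. By \Cref{def:FD-reordering}, once all of $Y$ has been appended, $z$ is appended in the same closure phase, before the next independent variable. So $z$ can come after $v_t$ only if $v_t$ is itself a dependent variable appended in that same closure phase, between the last element of $Y$ and $z$.

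\emph{Main obstacle.} The difficulty is this within-closure-phase ordering. I would first try to exclude it directly. The variables added in one closure phase are all transitively implied by variables already placed. Using the validity of $\col$ (transitively, $\col(v_t)\subseteq\col(\{v_1,\ldots,v_{t-1}\})$ for such a dependent $v_t$), a dependent variable never introduces a color. That contradicts the choice of $v_t$ as the variable introducing color $k$. I expect this transitivity argument to need care: the implying set of a dependent variable may itself contain dependent variables of the same phase, so I would argue by induction along the closure phase that every dependent variable's colors are already contained in the colors of earlier variables.

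If that route fails, the fallback is to rearrange only the dependent variables inside that closure phase so that $z$ is placed before $v_t$. This yields another $\Delta$-reordering of $\pi$, and by \Cref{prop:reorderingsallthesame} it has the same FD-aware incompatibility number. One would then rerun the choice of $\col$ and $v_t$ for that order.
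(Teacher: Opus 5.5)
Your proposal is correct and is essentially the paper's argument. The paper takes the last body variable $x$ of the FD and shows $v_t\le_\pi x$: otherwise $v_t$ would lie in the closure block after $x$, hence be implied by earlier variables, and by validity of $\col$ could not introduce color $k$. Then $x$ neighbors the head and lies in $S_t$, which is exactly your two cases merged into one. The induction you worry about, and the fallback, are unnecessary: a dependent $v_t$ is appended via a single FD $Y'\rightarrow v_t$ whose body entirely precedes it, so $\col(v_t)\subseteq\col(Y')$ directly, and this also covers the boundary case $v_t=z$.
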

\begin{claimproof}
Since $\col'$ is constructed from the valid coloring $\col$ by adding the fresh color $k+1$ to every variable in $S_t$, we only need to show that for every FD $X\rightarrow y$ such that $y\in S_t$, there is a variable $x\in X$ in $S_t$.

Denote by $x$ the last variable of $X$ in $\pi$.
Let $u_x$ be the last variable in $\pi$ implied by the variables in the prefix of $\pi$ ending with $x$. In particular, $x\le_\pi u_x$. Due to the FD, we have that $y\le_\pi u_x$. Since $y\in S_t$, we have that $y\ge_\pi v_t$. So overall, $v_t\le_\pi u_x$.
Since $v_t$ introduces the color $k$ in the valid coloring $\col$, it cannot be implied by a set of variables coming before it (as the color $k$ must appear in one of the implying variables).
As $\pi$ is consistent with the FDs, this means that we cannot have that $x<_\pi v_t\le_\pi u_x$.
We conclude that $v_t\le_\pi x$.

Since there must be an atom containing all variables of the FD, $x$ and $y$ are neighbors, and since $y\in S_t$, we conclude that $x\in S_t$ as well.
\end{claimproof}

It is clear from the construction that the coloring $\col'$ is connected.
In $\col'$, as in $\col$, the maximum number of colors from $[k]$ in an atom is $\frac{k}{\width{\cn{Q}{\Delta}}(Q, \pi')}$, as required by \Cref{property:bounded-colors-in-atom} of the lemma. 
By construction, the color $k+1$ is introduced in $v_t$ which also introduces the last color of $[k]$. This proves \Cref{property:center-last} of the lemma.
We will show next that every color of $[k]$ appears in a variable of the bag $B_t$. We also have that every variable of $B_t$ is a neighbor of a variable in $S_t$ by definition of the bag. Since we assigned the color $k+1$ to every variable in $S_t$, every color shares an atom with $k+1$, as required by \Cref{property:touching-center} of the lemma.
It is left to prove the following claim.

\begin{claim}\label{clm:allcolors}
 $[k]\subseteq \col(B_t)$.
\end{claim}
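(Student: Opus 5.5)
The plan is to compare $B_t$ with the bag $B=B_j$ that witnesses $|\col(B)|=k$, where $B_j$ is created by $v_j$. I will show first that $S_j\subseteq S_t$. Then I will follow each color from its occurrence in $B_j$ back along its connected color class until it reaches $B_t$. Throughout, $\pi$ denotes the (consistent) order used to build the decomposition, $v_t$ is the last variable in $\pi$ that introduces a color, and for a color $c\in[k]$ I write $u_c$ for the variable that introduces it, so that $u_c\le_\pi v_t$. The argument uses only the definitions of $S_i$ and $B_i$ and the fact that $\col$ is connected, as provided by \Cref{lem:nice-coloring}.

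\textbf{Step 1: $t\le j$ and $S_j\subseteq S_t$.} Every variable of $B_j$ comes no later than $v_j$ in $\pi$. Color $k$ occurs in $B_j$, so it is introduced no later than $v_j$, which gives $v_t\le_\pi v_j$. Let $w\in B_j$ carry color $k$. Because $v_t$ is the first variable with color $k$, every vertex of color $k$, including $w$, lies in $\{v_t,\ldots,v_n\}$. Connectivity of color $k$ then gives a path from $v_t$ to $w$ inside $Q[v_t,\ldots,v_n]$. There are two cases.
\begin{itemize}
\item If $w=v_j$, then $v_j\in S_t$.
\item Otherwise $w$ is a neighbor of $S_j$. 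Both $w$ and $S_j$ lie in the suffix starting at $v_t$, since $S_j$ lies in the suffix starting at $v_j\ge_\pi v_t$. Hence $S_j\cup\{w\}$ is connected in $Q[v_t,\ldots,v_n]$ and contains $w$, which is connected to $v_t$.
\end{itemize}
In both cases $S_j$ is a connected set of $Q[v_j,\ldots,v_n]\subseteq Q[v_t,\ldots,v_n]$ that is connected to $v_t$, so $S_j\subseteq S_t$.

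\textbf{Step 2: locating each color near $S_t$.} Fix $c\in[k]$ and a vertex $y\in B_j$ with $c\in\col(y)$. By Step 1, $y$ either lies in $S_t$ (when $y=v_j$) or is a neighbor of $S_j\subseteq S_t$. If $y<_\pi v_t$, then $y$ is an earlier neighbor of $S_t$, hence $y\in B_t$, and we are done. If $y\ge_\pi v_t$, then $y$ lies in $S_t$ itself, because it is in the suffix and adjacent to or inside $S_t$.

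\textbf{Step 3: walking back to an earlier variable.} Suppose $y\in S_t$. If $u_c=v_t$, then $c\in\col(v_t)\subseteq\col(B_t)$. Otherwise $u_c<_\pi v_t$. Take a path from $y$ to $u_c$ that stays inside the color class of $c$, which exists by connectivity. Let $z$ be the first vertex on this path with $z<_\pi v_t$. All earlier vertices of the path are in the suffix starting at $v_t$ and connected to $y$, so they lie in $S_t$. Hence $z$ is a neighbor of $S_t$ that precedes $v_t$, so $z\in B_t$ and $c\in\col(z)$. This proves $[k]\subseteq\col(B_t)$.

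The main obstacle is Step 1. It requires the right choice of witness: one must use color $k$, the color introduced by $v_t$, together with its connectivity, to tie $S_j$ to $S_t$. A lesser point is careful bookkeeping about which vertices fall in the suffix $\{v_t,\ldots,v_n\}$, so that adjacency to $S_t$ can be upgraded to membership in $S_t$.
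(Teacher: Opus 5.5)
Your proof is correct, and it reaches the claim by a different route than the paper.

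\textbf{The paper's route.} The paper walks down a chain of bags $B=B_{i_1},B_{i_2},\ldots,B_{i_\ell}=B_t$, where each bag is created by the second-largest element of the previous one. It proves by induction that every bag in the chain contains all of $[k]$.
\begin{itemize}
\item At each step the creator $v_{i_j}$ lies strictly after $v_t$. So each of its colors was introduced earlier, and walking along that color class shows it also appears on an earlier member of $B_{i_j}$.
\item The nesting fact $B_{i_j}\setminus\{v_{i_j}\}\subseteq B_{i_{j+1}}$, which the paper invokes ``by definition of the decomposition'', then carries all colors one bag further.
\item The presence of color $k$ in every bag of the chain keeps the creators $\ge_\pi v_t$, so the chain stops exactly at $B_t$.
\end{itemize}

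\textbf{Your route.} You collapse the chain into a single step. Connectivity of the color $k$ introduced by $v_t$ places a color-$k$ vertex of $B_j$ inside $S_t$, which gives $S_j\subseteq S_t$ directly. Then one walk per color, along its color class to its introducer $u_c\le_\pi v_t$, lands that color in $B_t$.

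\textbf{Comparison.} Your version is shorter, avoids the induction, and uses only the definitions of $S_i$ and $B_i$. It never needs the nesting property of consecutive bags, which the paper uses without spelling out. The paper's version yields the extra fact that every intermediate bag of the chain is fully colored, and it follows the template of Bringmann et al. It uses color $k$ only to control where the chain ends. You use color $k$ in a stronger way: its connectivity is what makes the one-shot containment $S_j\subseteq S_t$ hold, which would not be true for an arbitrary earlier bag.
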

\begin{claimproof}
    We define a sequence of bags $B_{i_1},\ldots,B_{i_\ell}$ such that $B_{i_1}=B$ is the bag from the definition of $\col$, $B_{i_{j+1}}$ is the bag created by the second largest element of $B_{i_j}$, and $B_{i_\ell}=B_t$ is the bag created by $v_t$. We will prove that this sequence is well-defined and that $[k]\subseteq \col(B_{i_j})$ for every $j\in[\ell]$.

    First, note that $\col(B_{i_1})=[k]$ by choice of $\col$. Next, assume that $[k]\subseteq \col(B_{i_j})$ and $B_{i_j}\neq B_t$.
    Let $v_{i_j}$ be the variable creating $B_{i_j}$ (and thus the largest variable in $B_{i_j}$).

    First, we claim that $v_{i_j} >_\pi v_t$. To see this, 
    observe that $B_{i_j}$ contains the color $k$, which is introduced in $v_t$. It follows that $B_{i_j}$ must contain a variable $\ge_\pi v_t$. Since $B_{i_j}$ only contains variables $\le_\pi v_{i_j}$ by definition, we deduce that $v_{i_j}\ge_\pi v_t$. Since we are in the case that $B_{i_j}\ne B_t$, we have that $v_{i_j} \neq v_t$ and thus $v_{i_j} >_\pi v_t$, as claimed.
    
    Let $c\in\col(v_{i_j})$.
    Since $v_{i_j}>_\pi v_t$ and $v_t$ introduces the last color, there exists a variable $u<_\pi v_{i_j}$ with $c\in\col(u)$. Since the coloring $\col$ is connected, there is a path from $v_{i_j}$ to $u$ on which every variable contains $c$. Let $u'$ be the first variable on this path such that $u'<_\pi v_{i_j}$. By definition of the decomposition, $B_{i_j}$ contains $u'$.
    Since this is true for every color in $\col(v_{i_j})$, we get that $\col(v_{i_j})\subseteq \col(B_{i_j}\setminus\{v_{i_j}\})$. It follows that $\col(B_{i_j}\setminus\{v_{i_j}\})=\col(B_{i_j})\supseteq [k]$.
    Hence, $B_{i_j}\setminus\{v_{i_j}\}\neq\emptyset$, so $B_{i_j}$ has a second largest element, and $B_{i_{j+1}}$ is well-defined. 
    By definition of the decomposition, $B_{i_{j+1}}$ must contain $B_{i_j}\setminus\{v_{i_j}\}$, so $[k]\subseteq\col(B_{i_j}\setminus\{v_{i_j}\})\subseteq\col(B_{i_{j+1}})$.

    Since the query is finite, the sequence $B_{i_1},\ldots,B_{i_\ell}$ must end, and we get that $B_{i_\ell}=B_t$.
\end{claimproof}

This completes the proof of \Cref{lem:coloring-construction}.
\end{proof}

We can now deduce the main result of this section.

\begin{theorem}\label{thm:generallower}
    Let $Q$ be a self-join-free join query with FDs $\Delta$ and $\pi$ an ordering of its variables. Let $\pi'$ be a $\Delta$-reordering of $\pi$.
    If $\width{\cn{Q}{\Delta}}(Q, \pi') > 1$, then there is no $\eps > 0$ such that for all $\delta > 0$ there is a direct access algorithm for $Q$ and $\pi$ with preprocessing time $O(|D|^{\width{\cn{Q}{\Delta}}(Q, \pi') - \varepsilon})$ and access time $O(|D|^{\delta})$, assuming the Zero-Clique Conjecture.
\end{theorem}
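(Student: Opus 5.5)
The plan is to chain the three ingredients already established: \Cref{lem:coloring-construction}, \Cref{lem:star-reduction}, and \Cref{thm:star-hard}, and then transfer the bound from $\pi'$ back to $\pi$ via \Cref{lem:reordering-validity}. Write $\iota:=\width{\cn{Q}{\Delta}}(Q,\pi')>1$.

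\textbf{Step 1: replace $\pi$ by $\pi'$.} By \Cref{lem:reordering-validity}, $(Q,\pi,\Delta)$ and $(Q,\pi',\Delta)$ produce the same answers in the same order on every database satisfying $\Delta$. Hence a direct access algorithm for $Q$ and $\pi$ with preprocessing $O(|D|^{\iota-\eps})$ and access $O(|D|^{\delta})$ is also one for $Q$ and $\pi'$ with the same bounds. So it suffices to prove hardness for $\pi'$.

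\textbf{Step 2: obtain the coloring and reduce from the star query.}
\begin{itemize}
\item Since $\iota>1$, \Cref{lem:coloring-construction} gives $k\ge 2$ and a valid connected $(k+1)$-coloring of $Q$ with the three properties, with respect to the ordering $\pi'$. Note that the lemma is stated so that the order used for ``introduces'' is the reordering; this is why Step 1 is needed.
\item These are exactly the hypotheses of \Cref{lem:star-reduction}, applied to $Q$, the ordering $\pi'$, and this $\iota$.
\item Suppose, for contradiction, that there is $\eps>0$ such that for every $\delta>0$ an algorithm with preprocessing $O(|D|^{\iota-\eps})$ and access $O(|D|^{\delta})$ exists for $(Q,\pi')$. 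Then \Cref{lem:star-reduction} yields $\eps'>0$ such that for every $\delta'>0$ there is a direct access algorithm for $\qstar_k$ under a bad ordering with preprocessing $O(|D^\star|^{k-\eps'})$ and access $O(|D^\star|^{\delta'})$.
\item This contradicts \Cref{thm:star-hard}, which applies since $k\ge 2$, assuming the Zero-Clique Conjecture.
\end{itemize}

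\textbf{Main obstacle.} The delicate point is checking compatibility of orders. The database constructed in \Cref{lem:star-reduction} must satisfy $\Delta$, which holds because the coloring is valid. The ``introduces'' condition must refer to the same order the algorithm sorts by, and for this we rely on Step 1's equivalence of $\pi$ and $\pi'$.

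A second point is that \Cref{lem:coloring-construction} states the per-atom color count as exactly $k/\iota$, which matches the parameter in \Cref{lem:star-reduction}. If $k/\iota$ were only an upper bound, the reduction still works, since it only uses $\ell\le k/\iota$.
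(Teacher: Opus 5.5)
Your proof is correct and follows the paper's argument: pass to the $\Delta$-reordering $\pi'$ via \Cref{lem:reordering-validity}, obtain the coloring from \Cref{lem:coloring-construction}, feed it into \Cref{lem:star-reduction}, and conclude with \Cref{thm:star-hard}. Your two remarks make explicit points the paper's short proof leaves implicit: the ``introduces'' condition must be read with respect to $\pi'$, and the reduction only needs $\ell \le k/\iota$.
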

\begin{proof}
    According to \Cref{lem:reordering-validity}, $\pi$ and $\pi'$ define the same problem, so we will work with $\pi'$, which is consistent with $\Delta$. \Cref{lem:coloring-construction} identifies a coloring that, by \Cref{lem:star-reduction}, shows that the hardness of a star query implies the hardness of $Q$. The hardness of the star query is given by \Cref{thm:star-hard}, assuming the Zero-Clique Conjecture.
\end{proof}

\begin{example}
\Cref{thm:generallower} shows a quadratic lower bound for \Cref{ex:controller-query} since $\width{\cn{Q}{\Delta}}(Q, \pi)=2$, as seen in \Cref{ex:running-color}.
\end{example}

Note that, by \Cref{prop:reorderingsallthesame}, all $\Delta$-reorderings yield the same bound in \Cref{thm:generallower}.
Due to \Cref{prop:improvementpolymatroid}, the best algorithm we currently have is taking an (arbitrary) $\Delta$-reordering of $\pi$ and applying \Cref{thm:generalupper}.
Comparing this algorithm with the lower bound of \Cref{thm:generallower}, we get an optimal algorithm (up to sub-polynomial factors) whenever $\width{\cn{Q}{\Delta}}(Q, \pi')$ and $\width{\pb{Q}{\Delta}}(Q, \pi')$ coincide. Let us discuss the connection between these two measures next.
Of course, we cannot expect the lower bounds from Theorem~\ref{thm:generallower} to be higher than the upper bounds in Theorem~\ref{thm:generalupper}. Indeed, we can show the following:

\begin{propositionrep}\label{prop:width-comparison}
For every self-join-free join query with FDs $\Delta$ and every variable ordering~$\pi$, we have 
\begin{align*}
\width{\cn{Q}{\Delta}}(Q, \pi) \le \width{\pb{Q}{\Delta}}(Q, \pi).
\end{align*}
\end{propositionrep}
\begin{toappendix}
In the proof of Proposition~\ref{prop:width-comparison}, we will use the following result, which is implicit in~\cite[Section~6]{GottlobLVV12}. We give a simple proof for the convenience of the reader.
\begin{lemma}\label{lem:color-bounds-polymatroid}
For every self-join-free CQ $Q(X)$ and set $\Delta$ of FDs, $\cn{Q}{\Delta}(X) \le \pb{Q}{\Delta}(X)$.
\end{lemma}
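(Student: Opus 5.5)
Given a valid coloring $\col:\var(Q)\rightarrow 2^{[k]}$ that witnesses $\cn{Q}{\Delta}(X)$, I will build a polymatroid $h$ that is guarded by $Q$, respects all FDs in $\Delta$, and satisfies $h(X) = \cn{Q}{\Delta}(X)$. Since $\pb{Q}{\Delta}(X)$ is the maximum of $h(X)$ over all such polymatroids, this gives the inequality. Let $m:=\max_{R_i(X_i)\in\atoms(Q)}|\col(X_i)|$. The degenerate case $m=0$ can be dealt with first: then no variable carries a color, because every variable lies in some atom, so $\col(X)=\emptyset$ and both sides are at least $0$. From now on I assume $m\ge 1$.

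The candidate is the normalized coverage function $h(S):=|\col(S)|/m$. First I will verify that it is a polymatroid. Clearly $h(\emptyset)=0$. Monotonicity holds because $\col(S)\subseteq\col(T)$ whenever $S\subseteq T$. For submodularity, note that $\col(A\cup B)=\col(A)\cup\col(B)$ and $\col(A\cap B)\subseteq\col(A)\cap\col(B)$. Inclusion–exclusion then gives
$|\col(A)|+|\col(B)| = |\col(A\cup B)| + |\col(A)\cap\col(B)| \ge |\col(A\cup B)|+|\col(A\cap B)|$.

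Next come guardedness and FD-respect. For every atom $R_i(X_i)$ we have $h(X_i)=|\col(X_i)|/m\le 1$ by the choice of $m$, so $h$ is guarded. For an FD $Y\rightarrow z$, validity of the coloring gives $\col(z)\subseteq\col(Y)$. Hence $\col(Y\cup\{z\})=\col(Y)$, so $h(z\mid Y)=0$ and $h$ respects the FD.

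Finally, $h(X)=|\col(X)|/m$. Because $\col$ is a maximizing coloring, this equals $\cn{Q}{\Delta}(X)$, and therefore $\cn{Q}{\Delta}(X)\le\pb{Q}{\Delta}(X)$. There is no real obstacle in this argument. The only points needing care are the submodularity step, which rests on $\col(A\cap B)\subseteq\col(A)\cap\col(B)$, and the handling of $m=0$. The same construction, with $X$ replaced by an arbitrary set $S$, yields $\cn{Q}{\Delta}(S)\le\pb{Q}{\Delta}(S)$ for every $S$. Taking the maximum over the bags then gives Proposition~\ref{prop:width-comparison}.
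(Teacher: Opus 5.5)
Your proposal is correct and is essentially the paper's proof: the paper also takes a valid coloring, defines $f(S)=|\col(S)|/d$ with $d$ the maximum number of colors in an atom, and checks that it is a polymatroid (using the same inclusion--exclusion argument), that it is guarded, and that it respects the FDs, concluding $\pb{Q}{\Delta}(X)\ge |\col(X)|/d$. The differences are cosmetic. The paper argues for an arbitrary valid coloring, so it needs neither a maximizing coloring nor your $m=0$ case. Your observation that the argument works for any set $S$ is what is needed to derive Proposition~\ref{prop:width-comparison}.
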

\begin{proof}
Consider a valid coloring $\col$ of $Q$. Let $d:= \max_{R_i(X_i)\in \atoms(Q)} |\col(X_i)|$ be the maximum number of colors assigned to the variables of an atom by $\col$. Let $V$ be the set of variables of $Q$, then we define $f:2^V\rightarrow \mathbb R_+$ by $f(S) = \frac{|\col(S)|}{d}$.

We claim that $f$ is a polymatroid.
Clearly, $f$ is monotone. For submodularity, consider $A, B\subseteq V$. Then,
\begin{align*}
d \cdot\left(f(A\cup B) + f(A\cap B)\right) &= |\col(A\cup B)| + |\col(A\cap B)| \\&= |\col(A) \setminus \col(B)| + |\col(B) \setminus \col(A)| + |\col(A) \cap \col(B)| + |\col(A\cap B)| \\&\le |\col(A) \setminus \col(B)| + |\col(B) \setminus \col(A)| + 2 |\col(A) \cap \col(B)| \\&= |\col(A)| + |\col(B)| \\&= d\cdot \left(f(A) + f(B)\right),
\end{align*}
so $f$ is submodular and thus a polymatroid.

Clearly, $f$ is guarded by $Q$. Moreover, for every FD $Y\rightarrow z$, we have $\col(z) \subseteq \col(Y)$ because $\col$ is valid. Thus
$d\cdot f(Y\cup \{z\}) = |\col(Y\cup \{z\})| = |\col(Y)| = d\cdot f(Y)$,
so $f$ respects the FD. It follows that 
$\pb{Q}{\Delta}(X)\ge f(X) = \frac{|\col(X)|}{d} \ge \cn{Q}{\Delta}(X)$.
\end{proof}
\end{toappendix}
\begin{proof}[Proof of Proposition~\ref{prop:width-comparison}]
    The claim follows directly from Lemma~\ref{lem:color-bounds-polymatroid}.
\end{proof}

Generally, $\width{\cn{Q}{\Delta}}(Q, \pi)$ and $\width{\pb{Q}{\Delta}}(Q, \pi)$ are not equal, and the gap between them can be arbitrarily big, as we see from the following result.

\begin{propositionrep}
For all $k\in \mathbb{N}$, there is a join query $Q$, a variable order $\pi$, and a set of FDs $\Delta$, such that $\width{\pb{Q}{\Delta}}(Q, \pi')\ge \width{\cn{Q}{\Delta}}(Q, \pi') + k$ for every $\Delta$-reordering $\pi'$ of $\pi$.
\end{propositionrep}
\begin{proofsketch}
    We start from a query $Q_k'$ in~\cite{GottlobLVV12} that shows a separation between the color number and the exponent of the polymatroid bound for join queries, and we add a fresh variable and some atoms to force the variables into a common bag of the disruption-free decomposition.
\end{proofsketch}
\begin{proof}
The starting point is a join query $Q_k'(X')$ with a set of FDs $\Delta$ from~\cite[Proposition~6.11]{GottlobLVV12} such that $\cn{Q_k'}{\Delta}(X') \le 2$, and for every $N\in \mathbb N$, there is a database $D$ of size $N$ respecting $\Delta$ such that $|Q_k'(D)| \ge |D|^k$.

Treating a join query as a set of atoms, we define $Q_k := Q_k' \cup \{ R_x(x, x^*) \mid x\in X'\}$,
where~$x^*$ is a fresh variable and all $R_x$ are fresh relation symbols, and we set $X := X'\cup \{x^*\}$.
Let $\pi$ be an order of $X$, stable with respect to $\Delta$, such that $x^*$ comes last.
Note that $x^*$ does not appear in any FD, so $\pi$ can be chosen by taking any ordering of $X'$, appending~$x^*$, and performing $\Delta$-reordering.
Observe that the bag created by $x^*$ in the disruption-free decomposition is $B:= X'\cup \{x^*\}$, so $B$ contains all variables of $Q_k$.

To bound the color number $\cn{Q_k}{\Delta}(B)$, let $\col$ be an optimal coloring, so 
\begin{align*}
    \cn{Q_k}{\Delta}(B) &=
    \frac{|\col(B)|}{\max_{R_i(X_i)\in \atoms(Q_k)}|\col(X_i)|}\\
    &\le \frac{|\col(X')|}{\max_{R_i(X_i)\in \atoms(Q_k)}|\col(X_i)|} + \frac{|\col(x^*)|}{\max_{R_i(X_i)\in \atoms(Q_k)}|\col(X_i)|}\\
    &\le \frac{|\col(X')|}{\max_{R_i(X_i)\in \atoms(Q_k')}|\col(X_i)|} + \frac{|\col(x^*)|}{\max_{R_i(X_i)\in \atoms(Q_k)}|\col(X_i)|}\\
    &\le \cn{Q_k'}{\Delta}(X') + 1 \le 3,
\end{align*}
where we use that $\col$ restricted to $X'$ is a valid coloring for $Q'_k$ and $\cn{Q_k'}{\Delta}(X')\le 2$.
For every bag $B'$ of the decomposition, $B'\subseteq B$, and since $\cn{Q_k}{\Delta}$ is monotone, we get that $\cn{Q_k}{\Delta}(B')\leq \cn{Q_k}{\Delta}(B)\leq 3$. Therefore, $\width{\cn{Q}{\Delta}}(Q, \pi)\le 3$. Due to \Cref{prop:reorderingsallthesame}, for every $\Delta$-reordering $\pi'$ of $\pi$, $\width{\cn{Q}{\Delta}}(Q, \pi')=\width{\cn{Q}{\Delta}}(Q, \pi)\le 3$.

To show the lower bound on $\pb{Q_k}{\Delta}(B)$, we extend for some $N\in \mathbb N$ the database $D$ for $Q_k'$ from~\cite{GottlobLVV12} into a database for $Q_k$ by fixing $x^*$ to a constant and adding all facts in $R_x$ in which $x$ takes any possible value. Call the resulting database $D^*$, then the size of $D^*$ has increased w.r.t.~that of $D$ only by a constant factor at most. $D^*$ satisfies all constraints in $\Delta$ because so does $D$. Moreover, $|Q_k(D^*)| = |Q_k'(D)| \ge |D|^k$. Due to \Cref{thm:polymatroid}, $|Q_k(D)| \le |D|^{\pb{Q}{\Delta}(X)}$.
It follows that $\pb{Q_k}{\Delta}(B)\ge k$ and thus $\width{\pb{Q_k}{\Delta}}(Q_k, \pi) \ge \pb{Q_k}{\Delta}(B) \ge k$.
 Due to \Cref{prop:improvementpolymatroid}, for every $\Delta$-reordering~$\pi'$ of~$\pi$, $\width{\pb{Q_k}{\Delta}}(Q_k, \pi')=\width{\pb{Q_k}{\Delta}}(Q_k, \pi)\ge k$.

 Given $k\in \mathbb{N}$, we have that  
 \begin{align*}
 \width{\pb{Q_{k+3}}{\Delta}}(Q_{k+3}, \pi')\ge k+3 \ge \width{\cn{Q_{k+3}}{\Delta}}(Q_{k+3}, \pi') + k
 \end{align*}
 for every $\Delta$-reordering $\pi'$ of $\pi$.
\end{proof}

\section{Linear Preprocessing Dichotomy}\label{sec:sizepreserving}

Having shown lower and upper bounds for direct access on join queries and general FDs, we now turn to showing a dichotomy for linear preprocessing time algorithms. A crucial part of such algorithms for query answering is often getting a decomposition of the query in which all materialized bags have linear size. Since the relations corresponding to the bags of the decomposition are essentially answers to the query at hand with additional projections, the main technical contribution of this section is proving a combinatorial property of CQs under FDs that have color number~$1$. 
From this property, we can then easily compute a linear-size superset of the query answers. This superset computation forms a decomposition of linear materialized bag size that is then used similarly to the proof of \Cref{thm:generalupper} to yield a linear preprocessing time direct access algorithm for all join queries under FDs and lexicographic orders for which such an algorithm exists, assuming the Zero-Clique Conjecture.

We will now formulate and prove our linear-time dichotomy for lexicographic direct access with FDs.
To this end, we need an additional definition.
Let $Q(X)$ be a CQ with a set of FDs $\Delta$. A set of variables $S$ is \emph{$\Delta$-guarded} if there is an atom $R_i(X_i)$ of $Q$ such that $X_i \cimply S$.

\begin{theorem}\label{thm:linear-dichotomy}
    Let $Q$ be a self-join-free join query over a schema with FDs $\Delta$ and $\pi$ an ordering of its variables.
    Let $H$ be the disruption-free decomposition of $Q$ according to a $\Delta$-reordering of $\pi$.
    There is a direct access algorithm for $Q$ and $\pi$ with preprocessing time $O(|D|)$ and access time $O(\log|D|)$
    if and only, for every bag in $H$, the bag variables are $\Delta$-guarded in $Q$, assuming the Zero-Clique Conjecture.
\end{theorem}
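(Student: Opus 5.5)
The proof splits into the two directions. Throughout, we fix a $\Delta$-reordering $\pi'$ of $\pi$. By \Cref{lem:reordering-validity}, $(Q,\pi,\Delta)$ and $(Q,\pi',\Delta)$ are the same problem, so we may work with $\pi'$ and its disruption-free decomposition $H$.

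\textbf{Upper bound.} Assume every bag $B_i$ of $H$ is $\Delta$-guarded, witnessed by an atom $R_j(X_j)$ with $X_j\cimply B_i$. We materialize a relation for $B_i$ in linear time, then proceed exactly as in the proof of \Cref{thm:generalupper}. Fix an order in which the variables of $B_i\setminus X_j$ are derived from $X_j$ by single FDs $Y\rightarrow z$, with $Y$ already derived. For each such FD, we build in linear time a lookup table (hashing, or sorting in the RAM model) from the relation of the atom containing $Y\cup\{z\}$, mapping each $Y$-value to its unique $z$-value. Starting from the tuples of $R_j$, we extend each tuple by successive constant-time lookups, discarding it when a lookup fails. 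This yields a relation $S_i$ with $|S_i|\le|R_j|$ that contains the projection of $Q(D)$ onto $B_i$. As in \Cref{thm:generalupper}, we then filter $S_i$ by every atom contained in $B_i$, so the join of the bag relations equals $Q(D)$. Finally, the linear-preprocessing, logarithmic-access algorithm of \cite{carmeli2023linearp} applies to this acyclic, order-compatible query.

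\textbf{Lower bound.} Suppose some bag $B$ of $H$ is not $\Delta$-guarded. By \Cref{thm:generallower}, it suffices to show $\width{\cn{Q}{\Delta}}(Q,\pi')>1$, i.e., $\cn{Q}{\Delta}(B)>1$. A linear-preprocessing algorithm would then contradict the bound with $\eps=\width{\cn{Q}{\Delta}}(Q,\pi')-1$, since logarithmic access time is $O(|D|^\delta)$ for every $\delta>0$. So we need the combinatorial statement: if $\cn{Q}{\Delta}(S)\le 1$, then $S$ is $\Delta$-guarded.

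\textbf{The main obstacle} is this combinatorial statement, which I would prove by contrapositive. Assume no atom's FD-closure contains $S$. For each atom $R_j(X_j)$, pick a variable $s_j\in S$ with $s_j\notin \mathrm{cl}(X_j)$, where $\mathrm{cl}$ denotes the closure under $\Delta$. We want a valid coloring in which $S$ receives strictly more colors than any single atom. The natural valid colorings come from closures: the map that gives $v$ the color $j$ iff $v\notin\mathrm{cl}(X_j)$ is valid, because closed sets are closed under FDs. Under this map, an atom $X_{j'}$ has color $j$ iff $X_{j'}\not\subseteq \mathrm{cl}(X_j)$, while $S$ has every color $j$. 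The remaining difficulty is that an atom might see all these colors too. To handle this, I would refine the coloring using the complement of a single closure: take a minimal set of atoms whose joint closure covers $S$, and count colors per atom, aiming for $|\col(S)|>\max_j|\col(X_j)|$. If that fails, I would fall back on LP duality. The fractional version of the color number equals a maximum over a polymatroid-like LP, so $\cn{Q}{\Delta}(S)\le 1$ yields a dual certificate that is a combination of atoms and FD steps, and then argue that such a certificate of value 1 must consist of a single atom whose closure contains $S$. This reverse implication, that color number $1$ forces a guarding atom, is the step where I expect most of the work.
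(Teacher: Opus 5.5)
Your upper bound is the paper's argument: the closure-extension of \Cref{lem:computeOutputLinear} replaces PANDA inside the proof of \Cref{thm:generalupper}. One small fix there: derive the whole closure of $X_j$, including intermediate variables outside $B_i$, before projecting onto $B_i$. Your reduction of the lower bound to the statement ``$\cn{Q}{\Delta}(S)=1$ implies $S$ is $\Delta$-guarded'', combined with \Cref{thm:generallower} and $\eps=\width{\cn{Q}{\Delta}}(Q,\pi')-1$, is also exactly how the paper proceeds.

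The gap is that this combinatorial statement is never proved, and it is the only new ingredient the lower bound needs (it is \Cref{lem:linearguarded}). You set aside your closure coloring because of a supposed difficulty. You then gesture at an unspecified refinement. Finally you fall back on an LP-duality argument whose key claim, that a value-$1$ certificate ``must consist of a single atom whose closure contains $S$'', is unjustified. It is also false as stated, since an optimal solution can split its weight among several atoms that each guard $S$. What is true, and what the paper proves, is that every atom in the support of an optimal solution guards $S$. As written, the lower-bound direction is incomplete.

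The difficulty you name does not exist, and the coloring you already wrote down finishes the proof. Take $\col(v)=\{j : v\notin\mathrm{cl}(X_j)\}$. You correctly checked that it is valid and that an unguarded $S$ receives all $m$ colors. By your own criterion, atom $X_j$ carries color $j$ iff $X_j\not\subseteq\mathrm{cl}(X_j)$, which never holds. So every atom misses its own color, and $\max_j|\col(X_j)|\le m-1$. If $m=1$, then $S\subseteq\var(Q)=X_1$ is trivially guarded. If $m\ge 2$, the denominator is positive because every colored variable of $S$ lies in some atom, so $\cn{Q}{\Delta}(S)\ge m/(m-1)>1$.

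With this observation added, your proof is complete, and it takes a different, more elementary route than the paper. The paper dualizes the Gottlob et al.\ LP into a covering LP over $\Delta$-compatible sets (\Cref{prop:lp-color}). It then picks an atom $R(Z)$ in the support of an optimal solution. Finally it shows $Z\cimply X$ by growing a $\Delta$-compatible set from an unimplied variable of $X$ while avoiding the closure of $Z$. Your explicit witness coloring avoids LP duality and the external LP characterization entirely. The paper's route has the advantage of giving an LP characterization of the color number as a by-product.
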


Let us remark that, as in Theorem~\ref{thm:generallower}, the choice of $\Delta$-reordering does not matter in Theorem~\ref{thm:linear-dichotomy}. This is due to Proposition~\ref{prop:general-reordering-equal-and-better} and its application for the different width measures.
The upper bound of \Cref{thm:linear-dichotomy} follows from the following observation that $\Delta$-guardedness yields a simple algorithm that slightly improves on PANDA for queries with $\Delta$-guarded output variables with respect to the output size and computation time. 
\begin{lemma}\label{lem:computeOutputLinear}
Let $Q(X)$ be a CQ and $\Delta$ a set of FDs. If $X$ is $\Delta$-guarded in $Q$, then there is an algorithm that, given a database $D$ respecting $\Delta$, computes a relation $R_{sup}\supseteq Q(D)$ in linear time. 
\end{lemma}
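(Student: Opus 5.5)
Since $X$ is $\Delta$-guarded, there is an atom $R_i(X_i)$ of $Q$ with $X_i \cimply X$. The plan is to start from the relation $R_i^D$ and extend each of its tuples, one variable at a time, along a derivation sequence of FDs. The output $R_{sup}$ is the resulting set of extended tuples, projected onto $X$.

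The derivation sequence is computed once, from the query alone, so it costs constant time in data complexity. We fix FDs $Y_1\rightarrow z_1,\ldots,Y_m\rightarrow z_m$ from $\Delta$ such that each $Y_j\subseteq X_i\cup\{z_1,\ldots,z_{j-1}\}$ and $X\subseteq X_i\cup\{z_1,\ldots,z_m\}$. Such a sequence exists by the definition of $\cimply$ and standard closure computation. Each FD $Y_j\rightarrow z_j$ is stated over some atom $R_{\ell_j}(X_{\ell_j})$ with $Y_j\cup\{z_j\}\subseteq X_{\ell_j}$, and $D$ respects it. Hence, for each $j$, we can build in linear time a lookup table $T_j$ from $R_{\ell_j}^D$, mapping each value combination of $Y_j$ to the unique value of $z_j$, if one exists. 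We use hashing, or sorting followed by a trie, to obtain constant or logarithmic lookups; to stay strictly linear in the word-RAM model, we use the standard linear-time perfect-hashing/radix techniques.

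Next, we process every tuple $t\in R_i^D$ in turn. For $j=1,\ldots,m$, we look up the current values of $Y_j$ in $T_j$. If no value is found, we discard $t$, since no answer of $Q$ can extend it. Otherwise we assign $z_j$ the returned value; if $z_j$ already has a value and it differs, we also discard $t$. Every tuple of $R_i^D$ yields at most one extended tuple, and the work per tuple is a constant number of lookups. Therefore $R_{sup}$ has size at most $|D|$ and is computed in $O(|D|)$ time.

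It remains to show $R_{sup}\supseteq Q(D)$. Take an answer $a\in Q(D)$, that is, a homomorphism on $\var(Q)$ (for a CQ, the projection of such a homomorphism). Its restriction to $X_i$ is a tuple of $R_i^D$. By induction on $j$, the value the procedure assigns to $z_j$ equals $a(z_j)$. This holds because the fact of $R_{\ell_j}$ used by $a$ agrees with $a$ on $Y_j$, and the FD makes the $z_j$-value unique. So the procedure reconstructs $a$ on $X_i\cup\{z_1,\ldots,z_m\}\supseteq X$, and the projection of $a$ lies in $R_{sup}$.

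The step most likely to need care is the bookkeeping for variables that are not in $X$ but are needed as intermediate $z_j$'s, particularly when $Q$ has projections. This is harmless, because the argument above uses the full homomorphism, not only the free variables. The other point needing care is achieving truly linear, rather than quasilinear, time for the lookup tables, for which we rely on standard word-RAM techniques.
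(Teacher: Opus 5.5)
Your proposal is correct and follows the paper's own argument. The paper also picks an atom $R(Z)$ with $Z \cimply X$, builds a lookup table per FD, extends the tuples of that atom's relation along the implications, and projects onto $X$. You simply spell out in more detail the derivation sequence, the handling of intermediate (possibly projected) variables, and the correctness induction via a full homomorphism.
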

\begin{proof}[Proof Sketch]
Consider an atom $R(Z)$ such that $Z \cimply X$. For every FD in $\Delta$, build a lookup table that retrieves an assignment to the FD head given an assignment to the FD body. Extend $R$ using these lookup tables to get a relation with variables $Z \cup X$ and then project to keep only $X$.
\end{proof}

The following lemma proves the combinatorial property needed for our dichotomy.
\begin{lemma}\label{lem:linearguarded}
    \label{thm:lin-size} Let $Q(X)$ be a CQ and $\Delta$ a set of FDs.
    If $\cn{Q}{\Delta}(X) = 1$, then $X$ is $\Delta$-guarded in $Q$.
\end{lemma}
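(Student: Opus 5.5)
We argue by contraposition: assuming $X$ is not $\Delta$-guarded, we build a valid coloring witnessing $\cn{Q}{\Delta}(X) > 1$. For a set $Z$ of variables, write $\mathrm{cl}(Z)$ for its closure under $\Delta$, i.e., the set of variables $v$ with $Z\cimply \{v\}$. The hypothesis says that for every atom $R_i(X_i)$ we have $X \not\subseteq \mathrm{cl}(X_i)$; since $X$ contains all variables, this means $\mathrm{cl}(X_i)\neq X$ for every $i$.

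\textbf{The coloring.} Let the atoms be $R_1(X_1),\ldots,R_\ell(X_\ell)$. Use $\ell$ colors and set
\[
\col(v) := \{\, i\in[\ell] \mid v\notin \mathrm{cl}(X_i)\,\}.
\]
That is, color $i$ is placed on exactly the variables outside the closure of atom $i$.

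\textbf{Validity.} Take an FD $Y\rightarrow z$ and a color $i\in\col(z)$, so $z\notin\mathrm{cl}(X_i)$. If every $y\in Y$ were in $\mathrm{cl}(X_i)$, then $z\in\mathrm{cl}(X_i)$, because closures are closed under the FDs. Hence some $y\in Y$ lies outside $\mathrm{cl}(X_i)$, so $i\in\col(y)$. This gives $\col(z)\subseteq\col(Y)$.

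\textbf{Counting colors.} Every color appears in $X$, since $\mathrm{cl}(X_i)\neq X$. So $|\col(X)|=\ell$. On the other hand, atom $X_i$ does not contain color $i$, because $X_i\subseteq\mathrm{cl}(X_i)$. So $|\col(X_i)|\le \ell-1$ for every $i$. Therefore
\[
\cn{Q}{\Delta}(X)\ \ge\ \frac{\ell}{\ell-1}\ >\ 1 .
\]

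\textbf{Degenerate cases.} If $\ell=1$, the single atom contains all variables, so $X$ is trivially guarded. If some atom receives no color at all, the ratio is undefined; but that atom already has $\ell-1\ge 1$ missing colors, and since the maximum over atoms is what matters, at least one atom meets a color, so the denominator is positive. The only real care point is the convention when the maximum number of colors in an atom is $0$, which cannot happen here because some color exists and some atom contains each variable.

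\textbf{Main obstacle.} The step that needs the most care is choosing the right coloring. The natural first attempt, a single color on the variables outside one closure, fails: it gives ratio $1$. Using one color per atom, each color complementary to that atom's closure, is what makes every atom miss at least one color while $X$ sees all of them. Validity then follows from closure under the FDs.
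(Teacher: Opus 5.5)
Your proof is correct, but it takes a different route from the paper's.

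\textbf{A slip to fix.} For a CQ $Q(X)$ the head $X$ need not contain all variables. The lemma is in fact applied to heads that are bags $B_i$ of a decomposition. So the step ``since $X$ contains all variables, $\mathrm{cl}(X_i)\neq X$'' is unjustified. It would also not suffice: if $\mathrm{cl}(X_i)\supsetneq X$, color $i$ misses $X$. What you need is exactly what you wrote first, namely that non-guardedness gives $X\not\subseteq \mathrm{cl}(X_i)$ for every atom. Then color $i$ lies on some $x\in X$, so $|\col(X)|=\ell$. Since that $x$ occurs in some atom, the denominator is positive. With this replacement the rest goes through unchanged and gives $\cn{Q}{\Delta}(X)\ge \ell/(\ell-1)>1$.

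\textbf{Comparison with the paper.} The paper argues via linear programming. It dualizes the characterization of the color number by Gottlob et al.~\cite{GottlobLVV12} to obtain a covering LP. That LP has one constraint per set $S$ that is compatible with $\Delta$ and meets $X$ (\Cref{prop:lp-color}). An optimum of $1$ forces some atom $R(Z)$ to hit every such $S$. The paper then builds iteratively a compatible set that meets $X$ but avoids the closure of $Z$, which is a contradiction.

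Your argument is the primal counterpart of this. A coloring is valid exactly when each color class is compatible with $\Delta$. Your classes $\var(Q)\setminus\mathrm{cl}(X_i)$ are compatible sets, each avoiding one atom and all meeting $X$. So you exhibit the witnessing coloring directly.

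Your route avoids LP duality and the entropy-style LP of Gottlob et al.\ entirely, and it yields an explicit gap of $\ell/(\ell-1)$. The paper's route produces the LP characterization of \Cref{prop:lp-color} as a reusable tool. Your observation also shows that the paper's iterative construction of $S^*$ could be replaced simply by $\var(Q)\setminus\mathrm{cl}(Z)$.
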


We remark that $\Delta$-guardedness in fact characterizes the CQs and sets of FDs with linear output size, see~\Cref{app:linearoutput} for details.
Before proving \Cref{lem:linearguarded}, let us see how it combines with \Cref{lem:computeOutputLinear} to prove \Cref{thm:linear-dichotomy}.

\begin{proof}[Proof Sketch for \Cref{thm:linear-dichotomy}]
    For the positive direction,
    follow the proof of \Cref{thm:generalupper} on a $\Delta$-reordering, but replace the application of PANDA with \Cref{lem:computeOutputLinear}.

    For the negative direction, let $\pi'$ be a $\Delta$-reordering of $\pi$ and consider the corresponding disruption-free decomposition $H$.
    According to 
    \Cref{lem:linearguarded}, if some bag $B_i$ of $H$ is not $\Delta$-guarded, then $\cn{Q}{\Delta}(B_i)> 1$ and thus $\width{\cn{Q}{\Delta}}(Q, \pi') > 1$. 
    According to \Cref{thm:generallower} with $\varepsilon = \width{\cn{Q}{\Delta}}(Q, \pi')-1$, there exists $\delta > 0$ such that there is no direct access algorithm for $Q$ and $\pi$ with preprocessing time $O\left(|D|\right)$ and access time $O(|D|^{\delta})$, assuming the Zero-Clique Conjecture.
\end{proof}

To complete the proof of \Cref{thm:linear-dichotomy}, it remains to show \Cref{lem:linearguarded}.
We need some more definitions.
We call a set of variables $S$ \emph{compatible} with an FD $Y \rightarrow z$ if $z \notin S$ or $S \cap Y \neq \emptyset$. In other words, if $S$ contains the head of the FD, it has to contain at least one variable of the body. We say that $S$ is \emph{compatible} with a set of FDs $\Delta$ if it is compatible with every FD of $\Delta$.
Our proof is based on a linear programming characterization of the color number.

  \begin{lemmarep}
    \label{prop:lp-color}
  Let $Q(X)$ be a CQ and $\Delta$ a set of FDs. The optimal value of the following linear program on variables $\sigma_{R_i(X_i)}$ is $\cn{Q}{\Delta}(X)$.
\begin{align*}
    \text{ minimize } & \sum_{R_i(X_i) \in Q} \sigma_{R_i(X_i)} & \\
    \text{ subject to }  & \sum_{R_i(X_i) \in Q: S \cap X_i \neq \emptyset} \sigma_{R_i(X_i)} \geq 1 & \text{$\forall S\subseteq V$ compatible with $\Delta$ with $S \cap X \neq \emptyset$} \\
            & \sigma_{R_i(X_i)} \geq 0 & \text{$\forall R_i(X_i) \in Q$}
\end{align*}
\end{lemmarep}
\begin{proofsketch}
We start from a linear programming characterization of the color number by Gottlob et al.~\cite{GottlobLVV12}, take its dual, and then perform simplification steps, deleting unnecessary constraints.
\end{proofsketch}
\begin{proof}
Denote the variables of $Q$ by $V$.
A known characterization of the color number states that it is given by the optimal value of the following linear program~\cite[Proposition 6.10]{GottlobLVV12}, which we denote $P_1(Q,\Delta)$.
\begin{align*}
\text{maximize }& h(X) \\
\text{subject to }& h(X_i) \leq 1 &\text{ for every atom $R_i(X_i)$ of $Q$} \\
	& h(\{z\} \mid Y) \le 0 &\text{ for every FD $Y \rightarrow z$ of $\Delta$} \\
	& I(S \mid V \setminus S) \geq 0 &\text{ for every $S \subseteq V$}
\end{align*}
with $h(S_1 \mid S_2) = \sum_{S: S \cap S_1 \neq \emptyset, S\cap S_2 = \emptyset} I(S \mid V \setminus S)$ for every pair of disjoint sets $S_1,S_2 \subseteq V$. 
 
We denote $I_S = I(S \mid V \setminus S)$. The linear program $P_1(Q,\Delta)$ can be rewritten as the follows.
\begin{align*}
\text{maximize }& \sum_{S \cap X \neq \emptyset} I_S & \\
\text{subject to }& \sum_{S \cap X_i \neq \emptyset} I_S \leq 1 & \text{ for every atom $R_i(X_i)$ of $Q$} \\
	& \sum_{\{z\} \subseteq S \subseteq V \setminus Y} I_S \leq 0  & \text{ for every FD $Y \rightarrow z$ of $\Delta$} \\
	& I_S \geq 0 & \text{ for every $S \subseteq V$}
\end{align*}

Consider the dual of $P_1(Q,\Delta)$. It has the same optimal value, and it has one variable per constraint of $P_1(Q,\Delta)$, that is, one variable $\sigma_{R_i(X_i)}$ for each atom of $Q$ and one variable $\sigma_{Y \rightarrow z}$ for each FD of $\Delta$. 
It has one constraint $c_S$ for every variable $I_S$ of $P_1(Q,\Delta)$, and it bounds the sum of the corresponding variables by the coefficient of $I_S$ in the objective function of $P_1(Q,\Delta)$. Its objective is to minimize the sum of the variables corresponding to the constraints of $P_1(Q,\Delta)$ bounded by $1$.
The dual of $P_1(Q,\Delta)$, which we denote $P_2(Q,\Delta)$, is then
\begin{align*}
  \text{minimize } & \sum_{R_i(X_i) \in Q} \sigma_{R_i(X_i)} \\
  \text{subject to }
            & \sum_{R_i(X_i) \in Q: S \cap X_i \neq \emptyset} \sigma_{R_i(X_i)} + \sum_{Y \rightarrow z\in\Delta: \{z\} \subseteq S \subseteq V \setminus Y} \sigma_{Y \rightarrow z} \geq 1 \text{ for every $S \subseteq V$ with $S \cap X_i \neq \emptyset$} \\
            & \sum_{R_i(X_i) \in Q: S \cap X_i \neq \emptyset} \sigma_{R_i(X_i)} + \sum_{Y \rightarrow z\in\Delta: \{z\} \subseteq S \subseteq V \setminus Y} \sigma_{Y \rightarrow z} \geq 0 \text{ for every $S \subseteq V$ with $S \cap X_i=\emptyset$} \\
            & \sigma_{R_i(X_i)} \geq 0 \text{ for every $R_i(X_i) \in Q$}, \\
            & \sigma_{Y \rightarrow z} \geq 0 \text{ for every $Y \rightarrow z \in \Delta$}
\end{align*}

We now proceed to simplify $P_2(Q,\Delta)$. Since all variables of $P_2(Q,\Delta)$ are non-negative, the constraints $c_S$ with $S \cap X = \emptyset$ are trivially satisfied and can be removed.
Second, as the objective function of $P_2(Q,\Delta)$ does not use variables of the form $\sigma_{Y \rightarrow z}$, we can set all such variables to $1$ without changing the optimal value of the program. After doing so, each constraint $c_S$ for which there exists an FD $Y \rightarrow z$ in $\Delta$ such that $\{z\} \subseteq S \subseteq V \setminus Y$ is trivially satisfied.
In other words, if $S$ is not compatible with $\Delta$, then $c_S$ is trivially satisfied after setting $\sigma_{Y \rightarrow z}$ to $1$ for all FDs $Y \rightarrow z$ in $\Delta$. Hence, $P_2(Q,\Delta)$ is equivalent to the linear program stated in this lemma.
\end{proof}

We can now prove \Cref{lem:linearguarded}.

\begin{proof}[Proof of \Cref{lem:linearguarded}]
We want to prove that there is an atom $R(Z)$ of $Q$ such that $Z \cimply X$.
Denote by $P_C(Q,\Delta)$ the linear program defined in \Cref{prop:lp-color}, and denote by $c_S$ its constraint for the set $S\subseteq V$. Since the color number is $1$, the optimal value of $P_C(Q,\Delta)$ is $1$. 

We first claim that there exists an atom $R(Z)$ of $Q$ such that $\sigma_{R(Z)}$ appears in every constraint $c_S$ of $P_C(Q,\Delta)$. Indeed, consider an optimal solution $(\sigma^*)$ of $P_C(Q,\Delta)$ and a constraint $c_S$ of the program.
We have that:
\begin{align*}
  1 = \sum_{R_i(X_i) \in Q} \sigma^*_{R_i(X_i)} \geq \sum_{R_i(X_i) \in Q: S \cap X_i \neq \emptyset} \sigma^*_{R_i(X_i)} \geq 1
\end{align*}
where the first equality is since the optimal value of the program is $1$, and the last inequality is due to the constraint $c_S$.
Thus, the two sums equal $1$.
For every atom $R_i(X_i)$ of $Q$ such that $\sigma^*_{R_i(X_i)} > 0$, we have that $\sigma^*_{R_i(X_i)}$ must appear in $c_S$, because otherwise the sum above would be smaller than $1$. Hence any arbitrary atom $R(Z)$ such that $\sigma^*_{R(Z)} > 0$ has the desired property.

We now proceed to prove that $Z \cimply X$. Denote by $Z^*$ the set of variables $y$ of $Q$ such that $Z \cimply y$ and assume toward a contradiction that there exists $x \in X$ such that $x \notin Z^*$. We will construct a set $S^* \subseteq V$  such that:
\begin{itemize}
\item $c_{S^*}$ is a constraint of $P_C(Q,\Delta)$, that is, $S^*$ is compatible with $\Delta$ and $S^* \cap X \neq \emptyset$,
\item $Z \cap S^* = \emptyset$.
\end{itemize}

To do so, we construct a sequence $S^0 \subsetneq \dots \subsetneq S^r$ for some $r\in [n]$ such that $S^0 = \{x\}$ and  $Z^* \cap S^i = \emptyset$ for every $i \leq r$.
We start with $S^0 = \{x\}$ which has the desired property since $x \notin Z^*$. Now, given $S^i$, we explain how to construct $S^{i+1}$: if there is an FD $Y \rightarrow z$ in $\Delta$ such that $z \in S^i$ and $Y \cap S^i = \emptyset$, then we pick $z' \in Y \setminus Z^*$ and define $S^{i+1} = S^i \cup \{z'\}$. Observe that $z'$ always exists. Indeed, if $Y \subseteq Z^*$, then we would have $z \in Z^*$, which does not hold because $z \in S^i$ and we ensure by induction that $S^i \cap Z^* = \emptyset$.
If no such FD exists, we stop and set $S^* = S^i$. Observe that this process must stop at some point since $|S^i|$ is an increasing sequence bounded by $|V|$. 

Since we do not insert to $S^*$ elements of $Z^*$, this construction ensures that $S^{*} \cap Z^*=\emptyset$.
Now we claim that the set $S^*$ is compatible with $\Delta$. Indeed, let $Y \rightarrow z$ be an FD of $\Delta$. If $z \notin S^*$, then $S^*$ is compatible with the FD by definition. If $z \in S^*$, then we must have $S^* \cap Y \neq \emptyset$ by construction of $S^*$, and so $S^*$ is compatible with the FD. Overall, $S^*$ is compatible with $\Delta$, and $X \cap S^* \neq \emptyset$ since $x \in S^*$.

The linear program $P_C(Q,\Delta)$ contains the constraint $c_{S^*}$, but since $Z \cap S^* = \emptyset$, $\sigma_{R(Z)}$ does not appear in $c_{S^*}$, which contradicts the established property of $R(Z)$. Hence, we have that $x \in Z^*$ for every $x \in X$, and so $Z \cimply X$.
\end{proof}

\begin{toappendix}
    
\section{Characterizing Linear Output Size}\label{app:linearoutput}

In this Appendix, we give new characterizations of the CQs with guaranteed linear output size.
We say that $Q$ has \emph{linear output size} under $\Delta$ if, for every database $D$ respecting $\Delta$, we have that $|Q(D)|=O(|D|)$.
As a stricter property, we say that $Q$ is \emph{size preserving} under $\Delta$ if, over such databases, we have that $|Q(D)| \le |D|$.
We previously only defined $\cn{Q}{\Delta}$ and $\pb{Q}{\Delta}$ when $Q$ is a join query. In the following, we also use this notation when $Q$ is a CQ by ignoring the projections.
The main result of this section is the following Theorem.
\begin{theorem}\label{conj:linear}
    Let $Q(X)$ be a self-join-free CQ, and let $\Delta$ be a set of FDs. Then, the following statements are equivalent:
    \begin{enumerate}
        \item $Q(X)$ is size preserving under $\Delta$.\label{item:SP}
        \item $Q(X)$ has linear output size under $\Delta$.\label{item:lin}
        \item $\cn{Q}{\Delta}(X) = 1$.\label{item:CN}
        \item $\pb{Q}{\Delta}(X) = 1$.\label{item:PB}
        \item $X$ is $\Delta$-guarded in $Q$.\label{item:FG}
    \end{enumerate}
\end{theorem}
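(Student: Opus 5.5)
We prove the equivalences by closing a cycle of implications, $\ref{item:FG}\Rightarrow\ref{item:SP}\Rightarrow\ref{item:lin}\Rightarrow\ref{item:PB}\Rightarrow\ref{item:CN}\Rightarrow\ref{item:FG}$. Most links are already available from earlier in the paper.

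\textbf{From guardedness to output size.} For $\ref{item:FG}\Rightarrow\ref{item:SP}$, take an atom $R(Z)$ with $Z\cimply X$. Every answer of $Q(D)$ is the projection to $X$ of some homomorphism. The restriction of that homomorphism to $Z$ is a fact of $R^D$. Since $D$ respects $\Delta$, this restriction determines the values of all variables transitively implied by $Z$, in particular all of $X$: apply the FDs one at a time, each inside an atom whose fact is consistent with the homomorphism. Hence the map from answers to facts of $R^D$ is injective, and $|Q(D)|\le|R^D|\le|D|$. The implication $\ref{item:SP}\Rightarrow\ref{item:lin}$ is trivial.

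\textbf{From linear output size to the bounds.} For $\ref{item:lin}\Rightarrow\ref{item:PB}$, note first that $\pb{Q}{\Delta}(X)\ge 1$. Take $h(S)=1$ for $S\ne\emptyset$ and $h(\emptyset)=0$; this is a polymatroid, it is guarded, and it respects every FD with nonempty body. Empty-body FDs would need separate care, and I would assume FD bodies are nonempty. The upper bound does not follow from \Cref{thm:polymatroid}, which bounds output size by the polymatroid exponent and not conversely. Instead, prove the contrapositive via colorings: if $\cn{Q}{\Delta}(X)>1$, the construction in the proof of \Cref{lem:star-reduction} (or the size lower bound of \cite{GottlobLVV12}) yields databases respecting $\Delta$ of size $N$ with $|Q(D)|\ge N^{\cn{Q}{\Delta}(X)}$ up to constants. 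Concretely, give each color $c$ a domain of size $N^{1/d}$, where $d$ is the maximum number of colors in an atom. Let each atom's relation be the product over its colors, so its size is at most $N$, and encode each variable's value as the tuple of its colors' values. FDs hold because the coloring is valid. The output then has size $N^{|\col(X)|/d}$; this is where projection onto $X$ and self-join-freeness are used. This gives $\ref{item:lin}\Rightarrow\ref{item:CN}$ directly.

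\textbf{Closing the cycle.} \Cref{lem:color-bounds-polymatroid} gives $1\le\cn{Q}{\Delta}(X)\le\pb{Q}{\Delta}(X)$. The link $\ref{item:PB}\Rightarrow\ref{item:CN}$ follows from this chain together with $\cn{Q}{\Delta}(X)\ge 1$, which holds via the single-color coloring of any $x\in X$. The link $\ref{item:CN}\Rightarrow\ref{item:FG}$ is exactly \Cref{lem:linearguarded}. Since the coloring argument gives $\ref{item:lin}\Rightarrow\ref{item:CN}$ rather than $\ref{item:lin}\Rightarrow\ref{item:PB}$, I would reorder the cycle as $\ref{item:FG}\Rightarrow\ref{item:SP}\Rightarrow\ref{item:lin}\Rightarrow\ref{item:CN}\Rightarrow\ref{item:FG}$. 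I would then attach \ref{item:PB} separately: $\ref{item:PB}\Rightarrow\ref{item:CN}$ as above, and $\ref{item:FG}\Rightarrow\ref{item:PB}$ by a polymatroid argument. For the latter, $h(X)\le h(Z\cup X)=h(Z)\le 1$, applying the FD-respecting steps one at a time together with $\Delta$-stability (\Cref{lem:polymatroidstable}).

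The main obstacle is the lower-bound construction for $\ref{item:lin}\Rightarrow\ref{item:CN}$. Its details are the coloring-to-database encoding: FD satisfaction requires that variables sharing no colors be constant, and the projection to $X$ must preserve distinctness of the color values appearing in $X$.
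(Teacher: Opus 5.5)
Your proof is correct, but it is organized differently from the paper's. The paper cites \cite[Theorem~6.1]{GottlobLVV12} for the equivalence of (1), (2) and (3), and then closes (4)$\Rightarrow$(3)$\Rightarrow$(5)$\Rightarrow$(4) using \Cref{lem:color-bounds-polymatroid}, \Cref{lem:linearguarded} and \Cref{lem:FDguarded-to-polymatroid}.

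You instead route the output-size properties through $\Delta$-guardedness. You prove (5)$\Rightarrow$(1) by an elementary injectivity argument: an answer is determined by the fact of the guarding atom that its homomorphism uses. You prove (2)$\Rightarrow$(3) by the standard coloring-product construction, which is the lower-bound half of Gottlob et al. The payoff is that you never need the other direction of their theorem, ``color number $1$ implies size preserving.'' It falls out as (3)$\Rightarrow$(5)$\Rightarrow$(1), so \Cref{lem:linearguarded} plus your injectivity argument gives a short new proof of it. The only remaining external input is the LP characterization underlying \Cref{prop:lp-color}. The paper's route is shorter on the page; yours is more self-contained and shows directly why guardedness is the right combinatorial notion.

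Two small fixes are needed.
\begin{itemize}
\item To get $\cn{Q}{\Delta}(X)\ge 1$, give the same single color to \emph{every} variable, as the paper does. Coloring only one $x\in X$ is not a valid coloring when $x$ is the head of an FD whose body does not contain $x$.
\item In your construction the output has size \emph{at least} $N^{|\col(X)|/d}$, not exactly that. Unless the coloring is connected, homomorphisms may assign inconsistent values to the same color in different parts of the query, which creates extra answers. Only the lower bound is needed, so the argument stands.
\end{itemize}
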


Properties \ref{item:SP}, \ref{item:lin} and \ref{item:CN} are already known to be equivalent~\cite[Theorem 6.1]{GottlobLVV12}. It remains to prove the equivalence of Properties \ref{item:CN}, \ref{item:PB}, and \ref{item:FG}.

We remark that there are some cases in which \Cref{conj:linear} is already essentially known:
In case there are no FDs, the color number, the fractional edge cover number, and the polymatroid bound coincide~\cite[Proposition~3.2]{Khamis0S16}~\cite[Section~3.1]{GottlobLVV12}. In addition, in this case, the fractional edge cover number of a variable set $S$ is $1$ if and only if there is a single atom containing $S$ (which is equivalent to $S$ being $\Delta$-guarded in that case).
In case all FDs are unary, some of these equivalences are known and implicit in prior work~\cite{GottlobLVV12,abo2016computing,Suciu23}. 

\begin{lemma}\label{lem:FDguarded-to-polymatroid}
    Let $Q(X)$ be a CQ, and let $\Delta$ be a set of FDs. 
    If $X$ is $\Delta$-guarded in $Q$, then $\pb{Q}{\Delta}(X)=1$.
\end{lemma}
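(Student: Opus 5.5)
We prove the two inequalities $\pb{Q}{\Delta}(X)\le 1$ and $\pb{Q}{\Delta}(X)\ge 1$ separately. The upper bound uses the guarding atom and propagates the FDs through the polymatroid. The lower bound only needs a single explicit polymatroid.

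\textbf{Upper bound.} Let $R(Z)$ be an atom with $Z\cimply X$, and fix any polymatroid $h$ that is guarded by $Q$ and respects $\Delta$. Let $Z^*$ be the closure of $Z$ under $\Delta$, that is, all variables transitively implied by subsets of $Z$. Since $X\subseteq Z^*$, monotonicity gives $h(X)\le h(Z^*)$, so it suffices to show $h(Z^*)=h(Z)$. Then guardedness gives $h(Z)\le 1$.

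To show $h(Z^*)=h(Z)$, list the variables of $Z^*\setminus Z$ as $z_1,\ldots,z_m$ in the order in which the closure computation adds them. Each $z_j$ comes with an FD $Y_j\rightarrow z_j$ with $Y_j\subseteq Z\cup\{z_1,\ldots,z_{j-1}\}=:S_{j-1}$. We need $h(S_j)=h(S_{j-1})$ for each $j$. This is exactly the argument of \Cref{lem:polymatroidstable}, applied to an individual $h$ rather than to $\pb{Q}{\Delta}$. Diminishing marginal returns, which follows from submodularity, gives
\[
0\le h(S_{j-1}\cup\{z_j\})-h(S_{j-1})\le h(Y_j\cup\{z_j\})-h(Y_j)=0,
\]
where the last equality is because $h$ respects the FD. Induction over $j$ then yields $h(X)\le h(Z^*)=h(Z)\le 1$. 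Taking the maximum over all such $h$ gives $\pb{Q}{\Delta}(X)\le 1$.

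\textbf{Lower bound.} If $X=\emptyset$, the value is $0$, so this degenerate case has to be excluded or stated separately. Otherwise, pick $x\in X$. The natural candidate is the set function $h(S)=1$ if $S\neq\emptyset$ and $h(\emptyset)=0$. It is monotone, submodular and guarded, but it violates an FD $\emptyset\rightarrow z$ and more generally any FD whose body is empty.

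A cleaner choice is the coloring function of \Cref{lem:color-bounds-polymatroid}. Take the valid coloring that gives one color to every variable of $X$, so each atom containing a variable has at most one color. A variable-free atom would be degenerate. The resulting $f(S)=|\col(S)|$ is a polymatroid, is guarded by $Q$, respects $\Delta$, and has $f(X)=1$. This coloring is valid whenever every FD has a nonempty body, and then all variables receive the single color. With this $f$, $\pb{Q}{\Delta}(X)\ge 1$.

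\textbf{Main obstacle.} The only subtle step is the chain in the upper bound, namely making the transitive closure argument rigorous with an ordered derivation. That step is routine given \Cref{lem:polymatroidstable}, so the lower bound's edge cases (empty $X$, empty-body FDs) are the only remaining points to state explicitly.
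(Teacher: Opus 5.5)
Your proposal is correct and follows the paper's proof. The lower bound comes from the polymatroid that is $1$ on every nonempty set. The upper bound comes from $h(X)\le h(X\cup Z)=h(Z)\le 1$ for the guarding atom $R(Z)$, and your ordered closure argument via diminishing marginal returns just spells out the step the paper summarizes as ``due to the FDs''.

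Your ``cleaner'' coloring choice, once all variables receive the single color, is the same function as your ``natural candidate'' and has the same limitation, so that detour adds nothing. The edge cases you flag (empty $X$, FDs with empty body) are genuine exceptions to the statement, and the paper's one-line lower bound also tacitly excludes them.
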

\begin{proof}
    $\pb{Q}{\Delta}(X)\ge 1$ since the function that assigns $1$ to every set is a polymatroid. So it suffices to show the upper bound. If $X$ is $\Delta$-guarded, there is an atom $R_i(X_i)$ with $X_i \cimply X$. For any polymatroid respecting $\Delta$, we then get $h(X)\le h(X X_i) = h(X_i) \le 1$, where the first inequality is by monotonicity, the second is due to the FDs, and the third is since $h$ is guarded by $Q$. Thus, $\pb{Q}{\Delta}(X)\le 1$.
  \end{proof}

\begin{proof}[Proof of \Cref{conj:linear}]
Properties \ref{item:SP}, \ref{item:lin} and \ref{item:CN} are known to be equivalent~\cite[Theorem 6.1]{GottlobLVV12}.
We prove next the equivalence of Properties \ref{item:CN}, \ref{item:PB}, and \ref{item:FG}.
Observe that $\cn{Q}{\Delta}(X)\ge 1$ as coloring all variables with the same single color is always a valid coloring.
If $\pb{Q}{\Delta}(X)= 1$, then using \Cref{lem:color-bounds-polymatroid}, $1\le \cn{Q}{\Delta}(X)\le \pb{Q}{\Delta}(X)= 1$, and so $\cn{Q}{\Delta}(X)= 1$.
By \Cref{lem:linearguarded}, if $\cn{Q}{\Delta}(X)= 1$, then $X$ is $\Delta$-guarded.
Finally, by \Cref{lem:FDguarded-to-polymatroid}, if $X$ is $\Delta$-guarded, then $\pb{Q}{\Delta}(X)= 1$.
\end{proof}

\end{toappendix}

\section{Connection between the approaches}\label{sec:connection}

In this section, we compare the extension-based approach from Section~\ref{sec:reorder} with that based on disruption-free decompositions from Section~\ref{sct:algorithm}.
We have already seen in \Cref{ex:running-polymatroid} that the approach from Section~\ref{sct:algorithm} is sometimes more efficient than the other.
We show that it is always at least as good as that from \Cref{sec:reorder}, and that in the case of unary FDs, the two approaches yield the same complexity.

For the rest of this section, consider a self-join-free join query $Q$, a set of FDs $\Delta$, and an ordering $\pi$ of the variables of $Q$ that is consistent with respect to $\Delta$. Deonte by $Q^+$ be the extension of $Q$, and remark that $\pi^+=\pi$ since it is consistent.
Recall that $\width{\fec{Q^+}}(Q^+,\pi)$ is the incompatibility number of $Q^+$ and $\pi$, and it determines the runtime exponent for the approach from \Cref{sec:reorder} as given in \Cref{cor:ext-alg}. 
The polymatroid-based measure $\width{\pb{Q}{\Delta}}(Q, \pi)$ determines the runtime exponent for the approach from \Cref{sct:algorithm} as given in \Cref{thm:generalupper}.
This section compares these two measures.

In this section, we use the following notation. We denote $\pi=(v_1, \ldots, v_n)$. We let $b_i$ denote the bag created by $v_i$ in the disruption-free decomposition of $Q$, and let $B_i$ be the bag created by $v_i$ in that of the extension $Q^+$.
Given a set $S$ of variables, we define the set $S^+$ as the fixpoint of the following extension step, starting with $S^+ := S$: if there is an FD $Y \rightarrow z$ such that $Y \subseteq S$ and $z \notin S^+$, then add $z$ to $S^+$. Clearly, the choice of the order in which we handle the FDs and add variables in the construction of $S^+$ does not change the end result, so $S^+$ is well-defined.

\subsection{Superiority of the polymatroid-based approach}

The following proposition shows that the approach from \Cref{sct:algorithm} is always at least as good as the extension-based approach.

\begin{proposition}\label{prop:comparison}
	Let $Q$ be a self-join-free join query, $\Delta$ a set of FDs, and $\pi$ a consistent ordering with respect to $\Delta$. Moreover, let $Q^+$ be the extension of $Q$. Then,
	\begin{align*}
		\width{\pb{Q}{\Delta}}(Q, \pi) \le \width{\fec{Q^+}}(Q^+,\pi).
	\end{align*}
\end{proposition}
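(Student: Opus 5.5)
The plan is to compare the two widths bag by bag. For every $i\in[n]$ I aim to show
\begin{align*}
\pb{Q}{\Delta}(b_i) \le \fec{Q^+}(B_i),
\end{align*}
and then take the maximum over $i$ on both sides. The argument splits into a combinatorial step, relating $b_i$ to $B_i$, and an entropic step, relating the polymatroid exponent to a fractional edge cover in the extended hypergraph.

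\textbf{Step 1 (combinatorial).} I would show $b_i\subseteq B_i^+$, where $^+$ denotes closure under $\Delta$. The hypergraph of $Q^+$ is obtained from that of $Q$ by enlarging atoms, so every edge of $Q$ is contained in an edge of $Q^+$. Hence connectivity in $Q[v_i,\ldots,v_n]$ implies connectivity in $Q^+[v_i,\ldots,v_n]$. Consequently the component $S_i$ for $Q$ is contained in the component $S_i^+$ for $Q^+$, and neighbors in $Q$ remain neighbors in $Q^+$. This gives the stronger inclusion $b_i\subseteq B_i$ directly, with no closure needed. Since $\pb{Q}{\Delta}$ is monotone, it therefore suffices to prove $\pb{Q}{\Delta}(B_i)\le\fec{Q^+}(B_i)$.

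\textbf{Step 2 (entropic).} Fix a polymatroid $h$ that is guarded by $Q$ and respects $\Delta$. I first claim that $h$ is guarded by $Q^+$, i.e. $h(X_j^+)\le 1$ for every extended atom $R_j(X_j^+)$. The extended atom $X_j^+$ is obtained from $X_j$ by repeatedly adding some $z$ with $Y\rightarrow z$ and $Y\subseteq$ the current set. By \Cref{lem:polymatroidstable} (the $\Delta$-stability argument applied to $h$ itself via diminishing returns), each such addition leaves $h$ unchanged. Therefore $h(X_j^+)=h(X_j)\le 1$.

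Next, take an optimal fractional edge cover $\mu$ of $B_i$ in $Q^+$. The standard Shearer-type consequence of submodularity is that $h(S)\le\sum_e \mu(e)\,h(e)$ for any fractional edge cover $\mu$ of $S$; this is the usual argument showing that the polymatroid bound is at most the AGM bound. Combined with guardedness by $Q^+$, it yields $h(B_i)\le\sum_e\mu(e)=\fec{Q^+}(B_i)$. Taking the maximum over $h$ gives $\pb{Q}{\Delta}(B_i)\le\fec{Q^+}(B_i)$, and the claim follows.

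\textbf{Main obstacle.} The only nontrivial ingredient is the fractional Shearer inequality for polymatroids. I would cite it as known (e.g.\ from~\cite{Khamis0S16}, where $\pb{Q}{\emptyset}=\fec{Q}$). Equivalently, I would argue that $h$ is in particular a polymatroid guarded by $Q^+$ with no FDs, so $h(B_i)\le \pb{Q^+}{\emptyset}(B_i)=\fec{Q^+}(B_i)$ by that known equality for queries without FDs. The remaining care is to check that consistency of $\pi$ (so $\pi^+=\pi$) is all that is needed, since the bags of both decompositions are defined with respect to the same order.
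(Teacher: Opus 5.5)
Your proof is correct and follows essentially the same route as the paper, which shows $\pb{Q}{\Delta}(b_i)\le \pb{Q^+}{\emptyset}(B_i)$ using $b_i\subseteq B_i$ and the fact that an optimal $h$ for $(Q,\Delta)$ satisfies $h(X_j^+)\le h(X_j)\le 1$, and then invokes the Modularization Lemma to get $\pb{Q^+}{\emptyset}(B_i)=\fec{Q^+}(B_i)$. Your version spells out two facts the paper leaves implicit: the inclusion $b_i\subseteq B_i$ via enlarged edges, and $h(X_j^+)=h(X_j)$ via the diminishing-returns argument of \Cref{lem:polymatroidstable}. You also correctly note that only the easy Shearer-type direction $h(S)\le\sum_e\mu(e)h(e)$ of that lemma is needed.
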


Remember that in Example~\ref{ex:controller-query} and Example~\ref{ex:running-polymatroid}, we have seen that the two width measures $\width{\pb{Q}{\Delta}}(Q, \pi)$ and $\width{\fec{Q^+}}(Q^+,\pi)$ differ in some cases, so the inequality of Proposition~\ref{prop:comparison} can in general be strict.

We will show Proposition~\ref{prop:comparison} in two steps. We start with the following.

\begin{lemma}\label{lem:comparison1}
	Let $Q$, $Q^+$, $\Delta$, and $\pi$ be as in Proposition~\ref{prop:comparison}. Then, we have that $\width{\pb{Q}{\Delta}}(Q, \pi) \le \width{\pb{Q^+}{\emptyset}}(Q^+,\pi)$.
\end{lemma}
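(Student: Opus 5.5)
The plan is to compare the two widths bag by bag. We show that each bag $b_i$ of the disruption-free decomposition of $Q$ is contained in the corresponding bag $B_i$ of $Q^+$ (both created by $v_i$ under the same order $\pi$). Then we show that every polymatroid feasible for $\pb{Q}{\Delta}$ is also feasible for $\pb{Q^+}{\emptyset}$. Together with monotonicity of polymatroids, this gives $\pb{Q}{\Delta}(b_i)\le \pb{Q^+}{\emptyset}(B_i)$ for every $i$. Taking the maximum over $i$ then yields the statement.

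\textbf{Step 1: $b_i\subseteq B_i$.} Since $Q$ is a join query, the extension does not add variables to the query: every head $z$ of an FD is already a variable of $Q$, so $\var(Q^+)=\var(Q)$. Each atom $R_j(X_j)$ of $Q$ is replaced by $R_j(X_j^+)$ with $X_j\subseteq X_j^+$, so every hyperedge of $H_Q$ is contained in a hyperedge of $H_{Q^+}$. Hence for every $i$, the restricted query $Q^+[v_i,\ldots,v_n]$ has at least the same neighbourhoods and connections as $Q[v_i,\ldots,v_n]$. Consequently the connected component $S_i$ of $v_i$ in $Q$ is contained in that in $Q^+$, and $N(S_i)$ computed in $H_Q$ is contained in $N(S_i^{Q^+})$ computed in $H_{Q^+}$. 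Since the order $\pi$ is the same in both decompositions, we get $b_i\subseteq B_i$ directly from the definition of the bags.

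\textbf{Step 2: feasibility transfer.} Let $h$ be a polymatroid on $\var(Q)$ that is guarded by $Q$ and respects $\Delta$. We claim $h(X_j^+)=h(X_j)\le 1$ for every atom. The set $X_j^+$ is obtained from $X_j$ by a sequence of extension steps, each adding some $z$ using an FD $Y\rightarrow z$ with $Y$ contained in the current atom set. The argument of \Cref{lem:polymatroidstable} (diminishing marginal returns combined with $h(z\mid Y)=0$) shows that $h(S\cup\{z\})=h(S)$ whenever $Y\subseteq S$. Applying this by induction along the extension steps gives $h(X_j^+)=h(X_j)\le 1$. Thus $h$ is guarded by $Q^+$, and it is trivially admissible for the empty FD set. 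Hence $h(b_i)\le h(B_i)\le \pb{Q^+}{\emptyset}(B_i)$, using monotonicity of $h$ for the first inequality. Maximizing over $h$ gives $\pb{Q}{\Delta}(b_i)\le\pb{Q^+}{\emptyset}(B_i)$, and maximizing over $i$ gives the lemma.

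The only step needing some care is Step 1: making sure the connected components and neighbourhoods behave monotonically when hyperedges are enlarged without adding vertices. This is the place to verify the convention that $Q[v_i,\ldots,v_n]$ intersects atoms with the suffix, which it does in both queries alike. Step 2 is essentially a repetition of \Cref{lem:polymatroidstable}.
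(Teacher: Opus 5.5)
Your proof is correct and follows the paper's argument. You show that every polymatroid feasible for $\pb{Q}{\Delta}$ is guarded by $Q^+$ because $h(X_j^+)=h(X_j)\le 1$, and you then use $b_i\subseteq B_i$ with monotonicity. You also spell out two steps the paper only asserts: the inclusion $b_i\subseteq B_i$, and the fact that the FD constraints propagate to supersets via the argument of \Cref{lem:polymatroidstable}.
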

\begin{proof}
	We show that $\pb{Q}{\Delta}(b_i) \le \pb{Q^+}{\emptyset}(B_i)$. By definition, we have that $\pb{Q}{\Delta}(b_i)$ is the solution of 
	\begin{align*}
		\text{maximize }& h(b_i) \\
		\text{subject to }& h(X_i) \leq 1 &\text{ for every atom $R_i(X_i)$ of $Q$} \\
		& h(\{z\} \mid Y) \le 0 &\text{ for every FD $Y \rightarrow z$ of $\Delta$} \\
		& \text{$h$ is a polymatroid.}
	\end{align*}
	Let $h^*$ be the polymatroid that maximizes the value $h(b_i)$ in this system.
	As defined at the beginning of this section, given a set $V$ of variables, we denote by $V^+$ the set obtained by extending $V$ according to $\Delta$. Then we have, due to the constraints for the FD, that $h^*(V^+) \le h^*(V)$ and thus, for every atom $R_i(X_i)$, we have $h^*(X_i^+) \le h^*(X_i)$. It follows that $h^*$ is a feasible solution for 
	\begin{align*}
		\text{maximize }& h(B_i) \\
		\text{subject to }& h(X_i^+) \leq 1 &\text{ for every atom $R_i(X_i^+)$ of $Q^+$} \\
		& \text{$h$ is a polymatroid.}
	\end{align*}
	The optimal solution of this linear program is $\pb{Q^+}{\emptyset}(B_i)$ by definition. Due to the monotonicity of the polymatroid $h^*$ an since $b_i\subseteq B_i$, we have that $h^*(b_i)\le h^*(B_i)$. Overall, we get that $\pb{Q}{\Delta}(b_i)=h^*(b_i)\le h^*(B_i) \le \pb{Q^+}{\emptyset}(B_i)$.
	
	The claim follows since $\width{\pb{Q}{\Delta}}(Q, \pi) = \max_{i\in n} \pb{Q}{\Delta}(b_i) \le \max_{i\in n} \pb{Q^+}{\emptyset}(B_i) = \width{\pb{Q^+}{\emptyset}}(Q^+,\pi)$.
\end{proof}

It remains to show the following.

\begin{lemma}\label{lem:comparison2}
	Let $Q^+$ be a self-join-free join query and let $\pi$ be a variable order of $Q^+$. Then, 
	$\width{\pb{Q^+}{\emptyset}}(Q, \pi) = \width{\fec{Q^+}}(Q^+,\pi)$.
\end{lemma}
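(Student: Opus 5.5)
The plan is to prove the equality bag by bag: for every set $S$ of variables of $Q^+$, we have $\pb{Q^+}{\emptyset}(S) = \fec{Q^+}(S)$. Both widths are maxima over the same bags $B_i$ of the disruption-free decomposition of $Q^+$ according to $\pi$. (The $Q$ on the left-hand side of the statement should be read as $Q^+$, since the bags are those of $Q^+$.) So the pointwise equality immediately gives the claim. It is the standard fact that without FDs the polymatroid bound coincides with the AGM bound, cf.~\cite[Proposition~3.2]{Khamis0S16}. For completeness, I would show both inequalities directly.

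\textbf{Upper bound $\pb{Q^+}{\emptyset}(S)\le\fec{Q^+}(S)$.} Take an optimal fractional edge cover $\mu$ of $S$ using the atoms of $Q^+$. For any polymatroid $h$ guarded by $Q^+$, I would use Shearer's lemma for polymatroids:
\[
h(S)\le \sum_{e}\mu(e)\,h(e\cap S)\le\sum_e \mu(e)\,h(e)\le\sum_e\mu(e)=\fec{Q^+}(S).
\]
Shearer's inequality follows from submodularity by the usual chain-rule argument. Order $S=\{s_1,\ldots,s_m\}$ and write $h(S)=\sum_j h(s_j\mid s_1..s_{j-1})$. Each term satisfies
\[
h(s_j\mid s_1..s_{j-1}) \le \sum_{e\ni s_j}\mu(e)\,h(s_j\mid (e\cap S)\cap\{s_1..s_{j-1}\}),
\]
using $\sum_{e\ni s_j}\mu(e)\ge1$, the nonnegativity of conditional terms (monotonicity), and the fact that conditioning on less increases the term (submodularity). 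Regrouping by $e$ then telescopes to $\sum_e\mu(e)h(e\cap S)$.

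\textbf{Lower bound $\pb{Q^+}{\emptyset}(S)\ge\fec{Q^+}(S)$.} I would use LP duality. The dual of the fractional edge cover LP is the fractional independent set (vertex packing) LP: maximize $\sum_{v\in S} w(v)$ subject to $\sum_{v\in e\cap S} w(v)\le1$ for every atom $e$, and $w\ge0$. Take an optimal $w$ and define the modular function $h(T):=\sum_{v\in T\cap S} w(v)$. This $h$ is monotone, modular (hence submodular), satisfies $h(\emptyset)=0$, and is guarded by $Q^+$ by the dual constraints. Moreover $h(S)=\sum_v w(v)=\fec{Q^+}(S)$ by strong duality. There are no FDs to respect, so $h$ is feasible and the lower bound follows.

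\textbf{Main obstacle.} The only nontrivial step is the Shearer-type inequality in the upper bound. If it is acceptable to cite it, I would simply invoke~\cite{Khamis0S16} (or~\cite{GottlobLVV12}, Section~3.1, together with Lemma~\ref{lem:color-bounds-polymatroid}'s style of argument) for the equality of the polymatroid bound and the fractional edge cover number without FDs. Otherwise I would carry out the chain-rule argument sketched above, taking care that every variable of $S$ lies in some atom, so that $\mu$ exists.
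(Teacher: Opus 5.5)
Your proposal is correct and follows the paper's proof. Like you, the paper reduces the claim to the bag-wise equality $\pb{Q^+}{\emptyset}(B_i)=\fec{Q^+}(B_i)$ over the bags of $Q^+$, and then simply cites the Modularization Lemma, Lemma~3.1 of~\cite{Khamis0S16}; your Shearer-type chain-rule upper bound and your modular polymatroid built from an optimal fractional vertex packing give a valid self-contained proof of that cited fact in the FD-free setting.
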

\begin{proof}
	It suffices to show that, for every variable set of the disruption-free decomposition of $Q^+$ and $\pi$, we have $\pb{Q^+}{\emptyset}(B) = \fec{Q^+}(B)$. However, this is exactly the so-called Modularization Lemma, Lemma~3.1 in~\cite{Khamis0S16}.
\end{proof}

Combining Lemma~\ref{lem:comparison1} and Lemma~\ref{lem:comparison2} directly yields Proposition~\ref{prop:comparison}.

\subsection{Equivalence in case of unary FDs}

We show next that, for unary FDs, the two approaches give the same width values and thus the same runtimes.
We remark that this can be concluded immediately, conditioned on the Zero-Clique Conjecture, from \Cref{prop:comparison} and \Cref{cor:unary-inc} since the extension approach is optimal in the unary case and the polymatroid bound approach is at least as good. In this section, we show that this also holds unconditionally.

\begin{toappendix}
	We will need the fact that every bag of the extended query is contained in some extension of a bag of the original query. More specifically, we prove the following lemma.
	\begin{lemma}\label{lem:bags-contained-no-preced}
		Let $j$ be the smallest index such that $v_j$ implies $v_i$. Then, $B_i\subseteq b_j^+$.
	\end{lemma}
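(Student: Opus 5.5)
The plan is to argue variable by variable: take an arbitrary $u\in B_i$ and show that $u$ is implied by some variable of $b_j$. Since all FDs are unary, $b_j^+$ is exactly the set of variables transitively implied by a single variable of $b_j$. Throughout, write $S_i^+$ for the connected component of $v_i$ in $Q^+[v_i,\ldots,v_n]$ and $S_j$ for the connected component of $v_j$ in $Q[v_j,\ldots,v_n]$. I would use the following structural fact about the extension under unary FDs: every atom of $Q^+$ has the form $X^+$ for an atom $R(X)$ of $Q$. Hence two variables $w,w'$ are neighbours in $Q^+$ if and only if some atom $R(X)$ of $Q$ contains $y,y'$ with $y\cimply w$ and $y'\cimply w'$. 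Here every variable counts as implying itself.

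\textbf{Step 1 (a consistency lemma).} I would first show that if $y\cimply w$ with $w\ge_\pi v_i$, then $y\ge_\pi v_j$. Suppose instead $y<_\pi v_j$. Since $\pi$ is consistent, the procedure of \Cref{def:FD-reordering} appends every variable implied by the current prefix before it appends the next independent variable. There are two cases.
\begin{itemize}
\item If $v_j$ is independent, then $w$ is placed before $v_j$, which contradicts $w\ge_\pi v_i\ge_\pi v_j$.
\item If $v_j$ is dependent, then $v_j$ is implied by an earlier variable $v_{j'}$, and so $v_{j'}\cimply v_i$. This contradicts the minimality of $j$.
\end{itemize}
I expect this case analysis to be the main obstacle. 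The delicate part is making the reordering procedure's ``appended right after'' behaviour precise, using the independent/dependent classification from the proof of \Cref{lem:reorderingsallthesameabstract}.

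\textbf{Step 2 (lifting the component).} Next I would prove, by induction along a path in $Q^+[v_i,\ldots,v_n]$ starting at $v_i$, that every $w\in S_i^+$ is implied by some $s\in S_j$. The base case holds because $v_j\in S_j$ and $v_j\cimply v_i$. For the inductive step, let $w$ be implied by $s\in S_j$, and let $w'\ge_\pi v_i$ be a neighbour of $w$, witnessed by an atom $R(X)$ containing $y\cimply w$ and $y'\cimply w'$.
\begin{itemize}
\item By Step 1, both $y$ and $y'$ are $\ge_\pi v_j$, so they are adjacent in $Q[v_j,\ldots,v_n]$.
\item It remains to connect $y$ to $S_j$. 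For this I would strengthen the induction hypothesis: every variable of $Q$ that implies a vertex of $S_i^+$ and is $\ge_\pi v_j$ lies in $S_j$. This holds because a unary FD $y\rightarrow z$ lies inside an atom, so an implication chain from $y$ consists of neighbours. By Step 1, all variables on such a chain are $\ge_\pi v_j$, since they imply $w$. The chain therefore gives a path to $w$ inside $Q[v_j,\ldots,v_n]$, and $w$ is connected to $v_j$ by the induction.
\end{itemize}
Hence $y'\in S_j$, and $y'$ implies $w'$.

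\textbf{Step 3 (the remaining bag elements).} If $u=v_i$, then $u\in b_j^+$ because $v_j\in b_j$. Otherwise $u<_\pi v_i$ and $u$ has a neighbour $w\in S_i^+$, witnessed by an atom $R(X)$ with $x\cimply u$ and $y\cimply w$. By Step 2, $y\in S_j$. There are two cases for $x$.
\begin{itemize}
\item If $x\ge_\pi v_j$, then $x\in S_j$, since $x$ is adjacent to $y$. In that case $u$ is implied by the element $v_j$ or by some element of $b_j$. I would argue this using the same consistency argument: a variable $u$ with $v_j\le_\pi u<_\pi v_i$ that is implied from inside $S_j$ but not by $v_j$ needs separate care. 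I would try to reroute it through the neighbour of $S_j$ that implies it.
\item If $x<_\pi v_j$, then $x\in N(S_j)$, so $x\in b_j$ and $u\in b_j^+$.
\end{itemize}
Combining the cases gives $B_i\subseteq b_j^+$.
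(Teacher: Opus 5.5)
Your Steps~1 and~2 are sound, and so is the second bullet of Step~3. Step~1 is the paper's claim that $v_a\cimply v_b$ with $b\ge i$ forces $a\ge j$. Step~2, with its strengthened hypothesis, shows that every variable $\ge_\pi v_j$ that implies a vertex of $S_i^+$ lies in $S_j$.

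The gap is the first bullet of Step~3, the case $x\ge_\pi v_j$ (so $x\in S_j$). You do not prove this case, and the repair you sketch targets a situation that cannot occur. The case has to be split by the position of $u$, not of $x$.

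If $v_j\le_\pi u<_\pi v_i$, there is no variable ``implied from inside $S_j$ but not by $v_j$''. By minimality of $j$, no earlier variable implies $v_j$, so $v_j$ is independent in the consistent order $\pi$. Because the FDs are unary, the block appended right after $v_j$ is exactly the set of variables implied by $v_j$ that are not already in the closed prefix. Since $v_j\cimply v_i$, the variable $v_i$ lies in this block, and hence so does every variable between $v_j$ and $v_i$. This gives $v_j\cimply u$ and $u\in b_j^+$ directly; it is the paper's case $j<t<i$. Rerouting through a neighbour of $S_j$ cannot work here: the prefix before the independent variable $v_j$ is closed under $\Delta$, so no variable before $v_j$ implies such a $u$.

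If instead $u<_\pi v_j$, you need an argument you have not given. Take the implication chain $x=c_0,c_1,\ldots,c_r=u$; consecutive elements share an atom because every unary FD is contained in an atom. Let $c_m$ be the first element with $c_m<_\pi v_j$. Then $c_0,\ldots,c_{m-1}$ are $\ge_\pi v_j$ and connected to $x\in S_j$ inside $Q[v_j,\ldots,v_n]$, so they lie in $S_j$. Hence $c_m\in N(S_j)$ with $c_m<_\pi v_j$, i.e.\ $c_m\in b_j$, and $c_m\cimply u$ yields $u\in b_j^+$. This first-exit step is exactly the paper's choice of $v_m$ in its case $t<j$.

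With these two sub-cases supplied, your argument is complete. It is then essentially the paper's proof, organized by the witness $x$ rather than by the position of the bag element.
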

	Remark that, in \Cref{lem:bags-contained-no-preced}, $j\le i$ because $v_i$ implies itself regardless of the set of FDs.
	We proceed to showing claims that will help prove the lemma.
	
	\begin{claim}\label{claim:implies-to-path}
		If $v_a\cimply v_b$ (with unary FDs), then there is a path between $v_a$ and $v_b$ in $Q$ that uses only variables that imply $v_b$.
	\end{claim}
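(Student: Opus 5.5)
We argue by induction on the length of the shortest chain of unary FDs witnessing $v_a\cimply v_b$. If $v_a=v_b$, the single-vertex path $(v_b)$ suffices, since $v_b$ trivially implies itself. Otherwise, fix a shortest chain $v_a=u_0\rightarrow u_1\rightarrow\cdots\rightarrow u_m=v_b$ of unary FDs in $\Delta$. The plan is to show that this chain is already a path in $Q$ whose vertices all imply $v_b$.

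\emph{Consecutive variables are neighbors.} Each FD $u_{\ell}\rightarrow u_{\ell+1}$ is expressed over the query variables and stems from an atom $R_i(\vec x_i)$: as recalled in the preliminaries, FDs are written as $R_i: Y\rightarrow z$ with $Y$ and $z$ taken from the positions of $\vec x_i$. Hence both $u_\ell$ and $u_{\ell+1}$ appear in $X_i$, so they are neighbors in $H_Q$. The sequence $u_0,\ldots,u_m$ is therefore a path in $Q$.

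\emph{Every vertex implies $v_b$.} For each $\ell$, the suffix $u_\ell\rightarrow\cdots\rightarrow u_m$ is a chain of unary FDs, so $u_\ell\cimply v_b$. In particular $v_a=u_0$ and $v_b=u_m$ are included, and every variable on the path implies $v_b$.

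The only step that needs care is the first one: the claim relies on each FD being contained in an atom. This is the same fact used in the proof of \Cref{lem:swapimprovement} ("since the FD is covered by a relation"), so I would cite it explicitly. Note also that the argument never uses the extension $Q^+$, and it gives a path in $Q$ itself.
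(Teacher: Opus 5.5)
Your proof is correct and follows the paper's argument: the paper also takes the chain $v_a=v_1\rightarrow\cdots\rightarrow v_k=v_b$ of unary FDs and observes that this chain is itself the required path. Your two justifications are exactly the details the paper leaves implicit: each FD lies inside an atom, so consecutive variables are neighbors, and each suffix of the chain shows that its first variable implies $v_b$. The induction you announce is never actually used, since the direct chain argument suffices, so you can drop that framing.
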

	\begin{proof}
		Since $v_a\cimply v_b$, there is a sequence $v_a=v_1,\ldots,v_k=v_b$ with an FD $v_i\rightarrow v_{i+1}$ in $\Delta$ for every $i\in[k-1]$. This sequence forms the required path.
	\end{proof}
	
	\begin{claim}\label{claim:no-imply-after}
		Let $j$ be the smallest such that $v_j$ implies $v_i$.
		If $v_a\cimply v_b$, with $b\geq i$, then $a\geq j$.
	\end{claim}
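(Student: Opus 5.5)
The plan is to argue by contradiction: assume $a<j$ and produce a variable at position at most $a$ that transitively implies $v_i$. That contradicts the minimality of $j$, since $j$ is the smallest index of a variable implying $v_i$. The argument uses only two ingredients: $\pi$ is consistent with $\Delta$, so it is a $\Delta$-reordering of itself, and all FDs are unary.

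\textbf{Step 1: prefixes ending a block are closed.} Following the proof of \Cref{lem:reorderingsallthesameabstract}, split $\pi$ into blocks. Each block consists of an \emph{independent} variable followed by the \emph{dependent} variables appended right after it by the procedure of \Cref{def:FD-reordering}. Since that procedure appends implied variables until no FD applies, every prefix of $\pi$ that ends at the end of a block is closed under $\Delta$: whenever the body of an FD lies in the prefix, so does its head, and hence also everything transitively implied by the prefix.

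\textbf{Step 2: locate $v_i$.} Let $s\le a$ be the position of the independent variable of the block containing $v_a$, and let $m\ge a$ be the last position of that block. By Step 1 and $v_a\cimply v_b$, we get $b\le m$, and therefore $i\le b\le m$. If $i\le a$, then $v_i$ implies itself, so $j\le i\le a$, contradicting $a<j$. So we may assume $a<i\le m$. In that case $v_i$ lies in the same block strictly after $v_a$, and in particular after the independent variable $v_s$, so $v_i$ is dependent.

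\textbf{Step 3: trace back through unary FDs.} Because the FDs are unary, a dependent variable $v_p$ was appended because of an FD $v_q\rightarrow v_p$ with $q<p$. Start from $p=i$ and repeatedly pass to such a predecessor $q$. The positions strictly decrease along this chain. Every position in $(a,m]$ lies in the current block after its independent variable, so it is dependent, and we can keep going as long as we stay in $(a,m]$. Hence the chain eventually reaches some position $q\le a$ with $v_q\cimply v_i$. This gives $j\le q\le a$, the desired contradiction.

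The main obstacle is Step 3. This is exactly where unarity is needed: with a non-unary FD, the body of the FD that added $v_i$ could be spread over several earlier variables, and no single one of them need transitively imply $v_i$ on its own. The rest of the argument only requires stating the closure property of block-ending prefixes in Step 1 carefully.
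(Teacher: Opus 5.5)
Your proof is correct and takes the same route as the paper: assume $a<j$, then use the consistency of $\pi$ to find a variable at position at most $a$ that implies $v_i$, which contradicts the minimality of $j$. The paper states this in a single sentence; your block decomposition (Steps 1 and 2) and your backward chain through unary FDs (Step 3) spell out why consistency provides such a variable.
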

	\begin{proof}
		Assume by way of contradiction that $a<j$. Since $a<j\leq i\leq b$ and it is a consistent ordering, $v_a\cimply v_b$ means that $v_i$ must be implied by $v_a$ or a variable preceding it, which is a contradiction to the minimality of $j$.
	\end{proof}
	
	\begin{claim}\label{claim:no-imply-path}
		Let $j$ be the smallest such that $v_j$ implies $v_i$.
		If $v_a,v_b$ are neighbors in $Q^+$ with $a,b\geq i$, then there is a path between $v_a$ and $v_b$ in $Q$ that uses only variables $\geq j$.
	\end{claim}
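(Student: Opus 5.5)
The plan is to use the structure of the extension $Q^+$. With unary FDs, whenever a variable $u$ occurs in an atom of $Q^+$, every variable transitively implied by $u$ occurs in that atom too (this is how the extension step is defined). Let $v_a,v_b$ be neighbors in $Q^+$ with $a,b\ge i$, and let $R(X)$ be an atom of $Q$ whose extension $R(X^+)$ contains both.

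\textbf{Step 1: reduce to the original atom.} Each variable of $X^+$ is implied by some variable of $X$, and implication is reflexive. So there are $x_a,x_b\in X$ with $x_a\cimply v_a$ and $x_b\cimply v_b$. By \Cref{claim:implies-to-path}, there is a path in $Q$ from $x_a$ to $v_a$ that uses only variables implying $v_a$; symmetrically, there is such a path from $x_b$ to $v_b$. Since $x_a,x_b\in X$ are neighbors in $Q$ via the atom $R(X)$, concatenating the three pieces gives a path $v_a \leadsto x_a - x_b \leadsto v_b$ in $Q$.

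\textbf{Step 2: bound the indices.} Every variable $v_c$ on the first segment satisfies $v_c\cimply v_a$ with $a\ge i$. \Cref{claim:no-imply-after} then gives $c\ge j$. The same argument applies to the segment ending at $v_b$, which includes $x_b$. Hence all variables on the path have index $\ge j$, as required.

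\textbf{Main obstacle.} The delicate point is applying \Cref{claim:no-imply-after}, which requires $v_c\cimply v_a$ in the transitive sense. This holds because every variable on the path from \Cref{claim:implies-to-path} implies its endpoint through the chain of unary FDs. I would also check the degenerate case $x_a=v_a$, where the path segment is trivial; it is harmless, since then $a\ge i\ge j$ directly. No other case split appears necessary.
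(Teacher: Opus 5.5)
Your proposal is correct and follows essentially the same route as the paper: take the $Q$-atom whose extension contains $v_a$ and $v_b$, pick variables of that atom implying $v_a$ and $v_b$, join them via \Cref{claim:implies-to-path}, and bound every index on the resulting path by \Cref{claim:no-imply-after}. The only cosmetic difference is that the paper first treats the case that $v_a,v_b$ are already neighbors in $Q$ separately, whereas your argument covers it by choosing $x_a=v_a$ and $x_b=v_b$.
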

	\begin{proof}
		If $v_a,v_b$ are neighbors in $Q$, the claim is immediate. Otherwise, since they are neighbors in $Q^+$, there exist variables $v_c$ and $v_d$, neighbors in $Q$, that imply $v_a$ and $v_b$ respectively (it could be the case that $c=d$).
		By applying \Cref{claim:implies-to-path} twice, there is a path $v_a,\ldots,v_c,v_d\ldots,v_b$ in $Q$ using only variables that imply $v_a$ or $v_b$. By using \Cref{claim:no-imply-after} on all variables of the path, and since $a,b\geq i$, these variables are all $\geq j$.
	\end{proof}
	
	\begin{proof}[Proof of \Cref{lem:bags-contained-no-preced}]
		Let $v_t\in B_i$. We need to show that $v_t\in b_j^+$.
		We have that $v_j\in b_j\subseteq b_j^+$, and since $v_j\cimply v_i$, we have that $v_i\in b_j^+$.
		Thus, if $t\in\{i,j\}$, we have that $v_t\in b_j^+$.
		Recall that $j\le i$ because $v_i$ implies itself.
		Thus, it remains to handle the case that $j<t<i$ and the case that $t<j$.
		
		Consider the case that $j<t<i$.
		First we claim that $v_j$ is not implied by any variable preceding it. Indeed, since $v_j\cimply v_i$, such a variable would transitively imply $v_i$, contradicting the minimality of $j$.
		Since the ordering is consistent, no variable preceding $v_j$ implies it, and $v_j\cimply v_i$, we get that $v_j$ implies all variables between $v_j$ and $v_i$. Since $j<t<i$, we have that $v_j\cimply v_t$. Since $v_j\in b_j$, we conclude that $v_t\in b_j ^+$.
		
		The last case is that $t<j\le i$.
		By \Cref{claim:implies-to-path}, since $v_j\cimply v_i$, there is a path between $v_j$ and $v_i$ that only goes through variables that imply $v_i$. Due to the minimality of $j$, these variables are all $\geq j$.
		Since $v_y\in B_i$, there is a path $v_t,v_k,\ldots,v_i$ in $Q^+$ that only goes through variables $\geq i$.
		By applying \Cref{claim:no-imply-path} on the edges of $v_k,\ldots,v_i$, we get a path from $v_k$ to $v_i$ in $Q$ that only goes through variables $\geq j$. By combining these paths, we get a path from $v_k$ to $v_j$ that only goes through variables $\geq j$.
		It follows that $v_k \in S_j$, where $S_j$ is the set from the definition of the disruption-free decomposition of $Q$.
		Thus, if $v_t$ neighbors $v_k$ in $Q$, we conclude that $v_t\in b_j\subseteq b_j^+$, and we are done.
		Otherwise, since $v_t$ neighbors $v_k$ in $Q^+$, there exist variables $v_a$ and $v_b$, neighbors in $Q$, that imply $v_t$ and $v_k$ respectively (again, it could be that $a=b$).
		Since $v_b\cimply v_k$, from \Cref{claim:implies-to-path} and \Cref{claim:no-imply-after}, we get a path between $v_b$ and $v_k$ using only variables $\geq j$.
		Since $v_a\cimply v_t$, from \Cref{claim:implies-to-path} we get a path from $v_a$ to $v_t$ on which all variables imply $v_t$. Let $v_m$ be the first variable on this path smaller than $j$ (it exists because $t<j$). By composing the paths we discovered, we get a path $v_m,\ldots,v_a,v_b\ldots,v_k\ldots,v_j$ in $Q$ that, other than $v_m$, only uses variables $\geq j$. Thus, $v_m\in b_j$, and since $v_m\cimply v_t$, we get that $v_t\in b_j^+$.
	\end{proof}
	
	We use Lemma~\ref{lem:bags-contained-no-preced} to relate $\width{\fec{Q^+}}(Q^+,\pi)$, $\width{\pb{Q^+}{\emptyset}}(Q, \pi)$ and $\width{\pb{Q^+}{\emptyset}}(Q, \pi)$.
\end{toappendix}

\begin{propositionrep}\label{prop:unary-comparison}
	Let $Q$ be a self-join-free join query, $\Delta$ a set of FDs, and $\pi$ a consistent ordering $\pi$ with respect to $\Delta$. Let moreover $Q^+$ be the extension of $Q$. Then,
	\begin{align*}
		\width{\cn{Q}{\Delta}}(Q, \pi) = \width{\pb{Q}{\Delta}}(Q, \pi) = \width{\fec{Q^+}}(Q^+,\pi).
	\end{align*}
\end{propositionrep}
\begin{appendixproof}
	By \Cref{prop:width-comparison} and Proposition~\ref{prop:comparison}, we have 
	\begin{align*}
		\width{\cn{Q}{\Delta}}(Q, \pi) \le \width{\pb{Q}{\Delta}}(Q, \pi) \le \width{\fec{Q^+}}(Q^+,\pi), 
	\end{align*}
	so it suffices to show $\width{\fec{Q^+}}(Q^+,\pi) \le \width{\cn{Q}{\Delta}}(Q, \pi)$.
	
	Let $\col$ be  any coloring of $\{v_1, \ldots, v_n\}$. We define a new coloring $\col'$ by setting for every variable $v$
	\begin{align*}
		\col'(v) := \col(v) \cup \bigcup_{v\cimply z \in \Delta}\col(z).
	\end{align*}
	Clearly, $\col'$ is $\Delta$-valid by construction.
	
	We claim that for every set $S$, we have $\col'(S) = \col(S^+)$. For every variable $z\in S^+$, there is a variable $y\in S$ such that $y\cimply z$. By definition of $\col'$, we have that $\col(z)\subseteq \col'(y)$. Consequently, $\col(S^+)\subseteq \col'(S)$. Finally, if $c\in\col'(v)$ for some $v\in S$, then either $c\in\col(v)$ or $c\in\col(z)$ for some $v\cimply z \in \Delta$. In any case, $c \in\col(S^+)$.
	
	Consider any bag $B_i$ of the disruption-free decomposition of $Q^+$. By Lemma~\ref{lem:bags-contained-no-preced}, there is an index $j$ such that $B_i \subseteq b_j^+$. It follows that
	\begin{align*}
		\cn{Q^+}{\emptyset}(B_i) &= \max_{\col_1} \frac{|{\col_1}(B_i)|}{\max_{R_i(X_i)\in \atoms(Q)}|{\col_1}(X_i^+)|}\\
		&\le \max_{\col_1} \frac{|{\col_1}(b_j^+)|}{\max_{R_i(X_i)\in \atoms(Q)}|{\col_1}(X_i^+)|}\\
		&= \max_{\col_1} \frac{|{\col_1}'(b_j)|}{\max_{R_i(X_i)\in \atoms(Q)}|{\col_1}'(X_i)|}\\
		&= \max_{\text{$\Delta$-valid } \col_2} \frac{|\col_2(b_j)|}{\max_{R_i(X_i)\in \atoms(Q)}|\col_2(X_i)|}\\
		&= \cn{Q}{\Delta}(b_j),
	\end{align*}
	By Section~3.1 of~\cite{GottlobLVV12}, we have for all sets $S$ that $\fec{Q^+}(S) = \cn{Q^+}{\emptyset}(S)$, so we get $\fec{Q^+}(B_i) \le \cn{Q}{\Delta}(b_j)$. It follows that $\width{\fec{Q^+}}(Q^+,\pi) \le \width{\cn{Q}{\Delta}}(Q, \pi)$, which completes the proof.
\end{appendixproof}

\section{Conclusion}\label{sec:conclusion}

In this paper, we proved upper and lower bounds for lexicographic direct access for self-join-free join queries that take FDs into account. Unfortunately, these bounds are not tight in general, which is likely related to the fact that there is no known worst-case optimal algorithm for join queries with functional dependencies. While the PANDA algorithm reaches the polymatroid bound, this bound is in general bigger than the so-called entropic bound which determines the size of query results asymptotically~\cite{Khamis0S16}, and thus PANDA is not worst-case optimal when taking FDs into consideration. Since we use a join algorithm as a black-box, improving upon PANDA, if this is even possible, might be necessary to get tight bounds in our setting.   
That said, even now our results allow us to characterize the combinations of a query, a variable order, and a set of FDs that admit logarithmic access time after linear preprocessing. We also proved a tight characterization of the preprocessing time required for logarithmic access time in case all FDs are unary.

We studied two approaches in this paper, and saw using our running example (\Cref{ex:controller-query}) that the approach of taking the FDs into account while inspecting the bags of a disruption-free decomposition (\Cref{{thm:generalupper}} applied to a $\Delta$-reordering) sometimes yields better complexity than the simple approach of using a reordered extension (\Cref{{cor:ext-alg}}).
We also saw that, in terms of complexity, this approach is never worse, so we can always use the approach from \Cref{sec:algorithm}. However, if all FDs are unary, the two approaches yield the same complexity, so it could make sense to use the simpler approach from \Cref{sec:reorder}.

A natural next step is to try and generalize our results.
On the query side, one could check whether the direct-access lower-bound techniques for queries with self-joins~\cite{bringmann2024stars} can be combined with the generalization of the color number to such queries~\cite{GottlobLVV12}.
On the constraints side, as PANDA also supports degree constraints~\cite{Khamis0S16}, our algorithmic approach can remain the same to support such constraints, and it would be interesting to try and prove a matching lower bound.
Finally, we would like to see whether the techniques from this paper can be used for other query answering tasks, such as enumeration, counting, or more general aggregation.

\bibliography{references}

\begin{thebibliography}{10}

\bibitem{AbiteboulHV95}
Serge Abiteboul, Richard Hull, and Victor Vianu.
\newblock {\em Foundations of Databases}.
\newblock Addison-Wesley, 1995.
\newblock URL: \url{http://webdam.inria.fr/Alice/}.

\bibitem{KhamisHS25}
Mahmoud {Abo Khamis}, Xiao Hu, and Dan Suciu.
\newblock Fast matrix multiplication meets the submodular width.
\newblock {\em Proc. {ACM} Manag. Data}, 3(2):98:1--98:26, 2025.
\newblock \href {https://doi.org/10.1145/3725235} {\path{doi:10.1145/3725235}}.

\bibitem{abo2016computing}
Mahmoud {Abo Khamis}, Hung~Q Ngo, and Dan Suciu.
\newblock Computing join queries with functional dependencies.
\newblock In {\em Proceedings of the 35th ACM SIGMOD-SIGACT-SIGAI Symposium on
  Principles of Database Systems}, pages 327--342, 2016.

\bibitem{Khamis0S16}
Mahmoud {Abo Khamis}, Hung~Q. Ngo, and Dan Suciu.
\newblock What do shannon-type inequalities, submodular width, and disjunctive
  datalog have to do with one another?
\newblock {\em CoRR}, abs/1612.02503, 2016.
\newblock URL: \url{http://arxiv.org/abs/1612.02503}, \href
  {https://arxiv.org/abs/1612.02503} {\path{arXiv:1612.02503}}.

\bibitem{KhamisNS25}
Mahmoud {Abo Khamis}, Hung~Q. Ngo, and Dan Suciu.
\newblock {PANDA:} query evaluation in submodular width.
\newblock {\em TheoretiCS}, 4, 2025.
\newblock URL: \url{https://doi.org/10.46298/theoretics.25.12}, \href
  {https://doi.org/10.46298/THEORETICS.25.12}
  {\path{doi:10.46298/THEORETICS.25.12}}.

\bibitem{AboKhamisNS25}
Mahmoud {Abo Khamis}, Hung~Q. Ngo, and Dan Suciu.
\newblock Pandaexpress: a simpler and faster {PANDA} algorithm.
\newblock {\em CoRR}, abs/2512.10217, 2025.
\newblock URL: \url{https://doi.org/10.48550/arXiv.2512.10217}, \href
  {https://arxiv.org/abs/2512.10217} {\path{arXiv:2512.10217}}, \href
  {https://doi.org/10.48550/ARXIV.2512.10217}
  {\path{doi:10.48550/ARXIV.2512.10217}}.

\bibitem{Adler06}
Isolde Adler.
\newblock {\em Width functions for hypertree decompositions (Weitefunktionen
  f{\"{u}}r Hyperbaumzerlegungen)}.
\newblock PhD thesis, University of Freiburg, Germany, 2006.
\newblock URL: \url{https://freidok.uni-freiburg.de/data/2468}.

\bibitem{BaganDGO08}
Guillaume Bagan, Arnaud Durand, Etienne Grandjean, and Fr{\'{e}}d{\'{e}}ric
  Olive.
\newblock Computing the jth solution of a first-order query.
\newblock {\em {RAIRO} Theor. Informatics Appl.}, 42(1):147--164, 2008.
\newblock \href {https://doi.org/10.1051/ita:2007046}
  {\path{doi:10.1051/ita:2007046}}.

\bibitem{BerkholzGS20}
Christoph Berkholz, Fabian Gerhardt, and Nicole Schweikardt.
\newblock Constant delay enumeration for conjunctive queries: a tutorial.
\newblock {\em {ACM} {SIGLOG} News}, 7(1):4--33, 2020.
\newblock \href {https://doi.org/10.1145/3385634.3385636}
  {\path{doi:10.1145/3385634.3385636}}.

\bibitem{bb:thesis}
Johann Brault-Baron.
\newblock {\em De la pertinence de l’{\'e}num{\'e}ration: complexit{\'e} en
  logiques propositionnelle et du premier ordre}.
\newblock PhD thesis, Universit{\'e} de Caen, 2013.

\bibitem{bringmann2024stars}
Karl Bringmann, Nofar Carmeli, and Stefan Mengel.
\newblock Tight fine-grained bounds for direct access on join queries.
\newblock {\em {ACM} Trans. Database Syst.}, 50(1):1:1--1:44, 2025.
\newblock \href {https://doi.org/10.1145/3707448} {\path{doi:10.1145/3707448}}.

\bibitem{CapelliIS25}
Florent Capelli, Oliver Irwin, and Sylvain Salvati.
\newblock A simple algorithm for worst case optimal join and sampling.
\newblock In Sudeepa Roy and Ahmet Kara, editors, {\em 28th International
  Conference on Database Theory, {ICDT} 2025, Barcelona, Spain, March 25-28,
  2025}, volume 328 of {\em LIPIcs}, pages 23:1--23:19. Schloss Dagstuhl -
  Leibniz-Zentrum f{\"{u}}r Informatik, 2025.
\newblock URL: \url{https://doi.org/10.4230/LIPIcs.ICDT.2025.23}, \href
  {https://doi.org/10.4230/LIPICS.ICDT.2025.23}
  {\path{doi:10.4230/LIPICS.ICDT.2025.23}}.

\bibitem{carmeli2020enumeration}
Nofar Carmeli and Markus Kr{\"o}ll.
\newblock Enumeration complexity of conjunctive queries with functional
  dependencies.
\newblock {\em Theory of Computing Systems}, 64(5):828--860, 2020.

\bibitem{carmeli2023linearp}
Nofar Carmeli, Nikolaos Tziavelis, Wolfgang Gatterbauer, Benny Kimelfeld, and
  Mirek Riedewald.
\newblock Tractable orders for direct access to ranked answers of conjunctive
  queries.
\newblock {\em ACM Transactions on Database Systems}, 48(1):1--45, 2023.

\bibitem{carmeli2022random}
Nofar Carmeli, Shai Zeevi, Christoph Berkholz, Alessio Conte, Benny Kimelfeld,
  and Nicole Schweikardt.
\newblock Answering (unions of) conjunctive queries using random access and
  random-order enumeration.
\newblock {\em ACM Transactions on Database Systems (TODS)}, 47(3):1--49, 2022.

\bibitem{DeedsM25}
Kyle Deeds and Timo~Camillo Merkl.
\newblock Partition constraints for conjunctive queries: Bounds and worst-case
  optimal joins.
\newblock In Sudeepa Roy and Ahmet Kara, editors, {\em 28th International
  Conference on Database Theory, {ICDT} 2025, Barcelona, Spain, March 25-28,
  2025}, volume 328 of {\em LIPIcs}, pages 17:1--17:18. Schloss Dagstuhl -
  Leibniz-Zentrum f{\"{u}}r Informatik, 2025.
\newblock URL: \url{https://doi.org/10.4230/LIPIcs.ICDT.2025.17}, \href
  {https://doi.org/10.4230/LIPICS.ICDT.2025.17}
  {\path{doi:10.4230/LIPICS.ICDT.2025.17}}.

\bibitem{DeepHK20}
Shaleen Deep, Xiao Hu, and Paraschos Koutris.
\newblock Fast join project query evaluation using matrix multiplication.
\newblock In David Maier, Rachel Pottinger, AnHai Doan, Wang{-}Chiew Tan,
  Abdussalam Alawini, and Hung~Q. Ngo, editors, {\em Proceedings of the 2020
  International Conference on Management of Data, {SIGMOD} Conference 2020,
  online conference [Portland, OR, USA], June 14-19, 2020}, pages 1213--1223.
  {ACM}, 2020.
\newblock \href {https://doi.org/10.1145/3318464.3380607}
  {\path{doi:10.1145/3318464.3380607}}.

\bibitem{GogaczT17}
Tomasz Gogacz and Szymon Torunczyk.
\newblock Entropy bounds for conjunctive queries with functional dependencies.
\newblock In Michael Benedikt and Giorgio Orsi, editors, {\em 20th
  International Conference on Database Theory, {ICDT} 2017, Venice, Italy,
  March 21-24, 2017}, volume~68 of {\em LIPIcs}, pages 15:1--15:17. Schloss
  Dagstuhl - Leibniz-Zentrum f{\"{u}}r Informatik, 2017.
\newblock URL: \url{https://doi.org/10.4230/LIPIcs.ICDT.2017.15}, \href
  {https://doi.org/10.4230/LIPICS.ICDT.2017.15}
  {\path{doi:10.4230/LIPICS.ICDT.2017.15}}.

\bibitem{GottlobLVV12}
Georg Gottlob, Stephanie~Tien Lee, Gregory Valiant, and Paul Valiant.
\newblock Size and treewidth bounds for conjunctive queries.
\newblock {\em J. {ACM}}, 59(3):16:1--16:35, 2012.
\newblock \href {https://doi.org/10.1145/2220357.2220363}
  {\path{doi:10.1145/2220357.2220363}}.

\bibitem{Hu25}
Xiao Hu.
\newblock Output-optimal algorithms for join-aggregate queries.
\newblock {\em Proc. {ACM} Manag. Data}, 3(2):104:1--104:27, 2025.
\newblock \href {https://doi.org/10.1145/3725241} {\path{doi:10.1145/3725241}}.

\bibitem{Kent83}
William Kent.
\newblock A simple guide to five normal forms in relational database theory.
\newblock {\em Communications of the ACM}, 26(2):120--125, 1983.

\bibitem{LokshtanovMS11}
Daniel Lokshtanov, D{\'{a}}niel Marx, and Saket Saurabh.
\newblock Lower bounds based on the exponential time hypothesis.
\newblock {\em Bull. {EATCS}}, 105:41--72, 2011.
\newblock URL: \url{http://eatcs.org/beatcs/index.php/beatcs/article/view/92}.

\bibitem{Mengel25}
Stefan Mengel.
\newblock Lower bounds for conjunctive query evaluation.
\newblock In Floris Geerts and Benny Kimelfeld, editors, {\em Companion of the
  44th Symposium on Principles of Database Systems, {PODS} 2025, Berlin,
  Germany, June 22-27, 2025}, page~5. {ACM}, 2025.
\newblock \href {https://doi.org/10.1145/3722234.3725824}
  {\path{doi:10.1145/3722234.3725824}}.

\bibitem{Ngo2018}
Hung~Q. Ngo.
\newblock Worst-case optimal join algorithms: Techniques, results, and open
  problems.
\newblock In Jan~Van den Bussche and Marcelo Arenas, editors, {\em Proceedings
  of the 37th {ACM} {SIGMOD-SIGACT-SIGAI} Symposium on Principles of Database
  Systems, Houston, TX, USA, June 10-15, 2018}, pages 111--124. {ACM}, 2018.
\newblock \href {https://doi.org/10.1145/3196959.3196990}
  {\path{doi:10.1145/3196959.3196990}}.

\bibitem{NgoPRR18}
Hung~Q. Ngo, Ely Porat, Christopher R{\'{e}}, and Atri Rudra.
\newblock Worst-case optimal join algorithms.
\newblock {\em J. {ACM}}, 65(3):16:1--16:40, 2018.
\newblock \href {https://doi.org/10.1145/3180143} {\path{doi:10.1145/3180143}}.

\bibitem{Veldhuizen14}
Todd~L. Veldhuizen.
\newblock Triejoin: {A} simple, worst-case optimal join algorithm.
\newblock In Nicole Schweikardt, Vassilis Christophides, and Vincent Leroy,
  editors, {\em Proc. 17th International Conference on Database Theory (ICDT),
  Athens, Greece, March 24-28, 2014}, pages 96--106. OpenProceedings.org, 2014.
\newblock URL: \url{https://doi.org/10.5441/002/icdt.2014.13}, \href
  {https://doi.org/10.5441/002/ICDT.2014.13}
  {\path{doi:10.5441/002/ICDT.2014.13}}.

\end{thebibliography}

\end{document}